\newtheorem{theorem}{Theorem}
\newtheorem{lemma}[theorem]{Lemma}
\newtheorem{corollary}[theorem]{Corollary}
\newtheorem{definition}[theorem]{Definition}
\newcommand{\Exp}{\mathop{\mathgroup\symoperators Exp}\nolimits}
\newcommand{\blokje}{}
\begin{document}

\title{Probabilistic Analysis of Optimization Problems on Generalized Random Shortest Path Metrics\footnote{An extended abstract of this work has appeared in the \textit{Proceedings of the 13th International Conference and Workshops on Algorithms and Computation (WALCOM 2019)}.}}

\author{Stefan Klootwijk}
\author{Bodo Manthey}
\author{Sander K. Visser}

\affil{University of Twente, Enschede, The Netherlands, \texttt{\{s.klootwijk,b.manthey\}@utwente.nl, s.k.visser@alumnus.utwente.nl}}

\maketitle

\begin{abstract}
Simple heuristics often show a remarkable performance in practice for optimization problems. Worst-case analysis often falls short of explaining this performance. Because of this, ``beyond worst-case analysis'' of algorithms has recently gained a lot of attention, including probabilistic analysis of algorithms.

The instances of many optimization problems are essentially a discrete metric space. Probabilistic analysis for such metric optimization problems has nevertheless mostly been conducted on instances drawn from Euclidean space, which provides a structure that is usually heavily exploited in the analysis. However, most instances from practice are not Euclidean.
Little work has been done on metric instances drawn from other, more realistic, distributions. Some initial results have been obtained by Bringmann et al.\ (\emph{Algorithmica}, 2013), who have used random shortest path metrics on complete graphs to analyze heuristics.

The goal of this paper is to generalize these findings to non-complete graphs, especially Erd\H{o}s--R\'enyi random graphs. A random shortest path metric is constructed by drawing independent random edge weights for each edge in the graph and setting the distance between every pair of vertices to the length of a shortest path between them with respect to the drawn weights. For such instances, we prove that the greedy heuristic for the minimum distance maximum matching problem, the nearest neighbor and insertion heuristics for the traveling salesman problem, and a trivial heuristic for the $k$-median problem all achieve a constant expected approximation ratio. Additionally, we show a polynomial upper bound for the expected number of iterations of the 2-opt heuristic for the traveling salesman problem.

\end{abstract}

\nocite{Bhamidi2011}
\nocite{Bringmann2015}
\nocite{Byrka2015}
\nocite{Davis1993}
\nocite{Hammersley1965}
\nocite{Hassin1985}
\nocite{Howard2004}
\nocite{Janson1999}
\nocite{Janson2018}
\nocite{Karp1985}
\nocite{Mitzenmacher2005}
\nocite{Reingold1981}
\nocite{Rosenkrantz1977}
\nocite{Ross2010}

\section{Introduction}

Large-scale optimization problems, such as the traveling salesman problem (TSP), show up in many applications. These problems are often computationally intractable. However, in practice often ad-hoc heuristics are successfully used that provide solutions that come quite close to optimal solutions. In many cases these, often simple, heuristics show a remarkable performance, even though the theoretical results about those heuristics are way more pessimistic.

In order to explain this difference, probabilistic analysis has been widely used over the last decades. However, the challenge in probabilistic analysis is to come up with a good probabilistic model: it should reflect realistic instances, but also be sufficiently simple to make the analysis tractable.

So far, in almost all cases, either Euclidean space has been used to generate instances of metric optimization problems, or independent, identically distributed edge lengths have been used. However, both approaches have considerable shortcomings to explain the average-case performance of heuristics on general metric instances: the structure of Euclidean space is heavily used in the probabilistic analysis, but realistic instances are often not Euclidean. The independent, identically distributed edge lengths do not even yield a metric in the first place. In order to overcome these shortcomings, Bringmann et al.~\cite{Bringmann2015} have proposed and analyzed the following model to generate random metric spaces, which had already been proposed by Karp and Steele in 1985 \cite{Karp1985}: given an undirected complete graph, start by drawing random edge weights for each edge independently and then define the distance between any two vertices as the total weight of the shortest path between them, measured with respect to the random weights.

\subsection{Related Work}

Bringmann et al.\ called the model described above \emph{random shortest path metrics}. This model is also known as \emph{first-passage percolation}, introduced by Hammersley and Welsh as a model for fluid flow through a (random) porous medium \cite{Hammersley1965,Howard2004}.

For first passage percolation in complete graphs, the expected distance between two fixed vertices is approximately $\ln(n)/n$ and the expected distance from a fixed vertex to the vertex that is most distant is approximately $2\ln(n)/n$~\cite{Bringmann2015,Janson1999}. Furthermore, the expected diameter of the metric is approximately $3\ln(n)/n$~\cite{Hassin1985,Janson1999}. There are also some known structural properties of first passage percolation on the Erd\H{o}s--R\'enyi random graph. Bhamidi et al.~\cite{Bhamidi2011} have shown asymptotics for both the minimal weight of the path between uniformly chosen vertices in the giant component and for the hopcount, the number of edges, on this path.

Bringmann et al.~\cite{Bringmann2015} used this model on the complete graph to analyze heuristics for matching, TSP, and $k$-median.

\subsection{Our Results}

As far as we know, no heuristics have been studied in this model for non-complete graphs yet.
However, we believe that random shortest path metrics on non-complete graphs will bring us a step further in the direction of realistic input model.

This paper provides a probabilistic analysis of some simple heuristics in the model of random shortest path metrics on non-complete graphs. First, we provide some structural properties of generalized random shortest path metrics (Sect.~\ref{sect:structural}), which can be seen as a generalization of the structural properties found by Bringmann et al.~\cite{Bringmann2015}. Although this generalization might seem straightforward at first sight, it brings up some new difficulties that need to be overcome. Most notably, since we do not restrict ourselves to the complete graph, we cannot make use anymore of its symmetry and regularity. This problem is partially solved by introducing two graph parameters, which we call the cut parameters of a graph (Def.~\ref{def:ab}).

Then, we use these structural insights to perform a probabilistic analysis for some simple heuristics for combinatorial optimization problems (Sect.~\ref{sect:heuristics}), where the results are still depending on the cut parameters of a graph. Finally, we use these results, to show our main results, namely that these simple heuristics achieve constant expected approximation ratios for random shortest path metrics applied to Erd\H{o}s--R\'enyi random graphs (Sect.~\ref{sect:ERRG}).


\section{Notation and Model}

We use $X\sim P$ to denote that a random variable $X$ is distributed using a probability distribution $P$. $\Exp(\lambda)$ is being used to denote the exponential distribution with parameter $\lambda$. In particular, we use $X\sim\sum_{i=1}^n\Exp(\lambda_i)$ to denote that $X$ is the sum of $n$ independent exponentially distributed random variables having parameters $\lambda_1,\ldots,\lambda_n$.

For $n\in\mathbb{N}$, we use $[n]$ as shorthand notation for $\{1,\ldots,n\}$. We denote the $n$th harmonic number by $H_n=\sum_{i=1}^n1/i$. Sometimes we use $\exp$ to denote the exponential function. Finally, if a random variable $X$ is stochastically dominated by a random variable $Y$, i.e., we have $F_X(x)\geq F_Y(x)$ for all $x$ (where $X\sim F_X$ and $Y\sim F_Y$), we denote this by $X\precsim Y$.

\paragraph{Generalized Random Shortest Path Metrics.}

Given an undirected graph $G=(V,E)$ on $n$ vertices, we construct the corresponding generalized random shortest path metric as follows. First, for each edge $e\in E$, we draw a random edge weight $w(e)$ independently from an exponential distribution\footnote{Exponential distributions are technically easiest to handle due to their memorylessness property. 
A (continuous, non-negative) probability distribution of a random variable $X$ is said to be memoryless if and only if $\mathbb{P}(X>s+t\mid X>t)=\mathbb{P}(X>s)$ for all $s,t\geq0$.~\cite[p.~294]{Ross2010}} with parameter 1. Second, we define the distances $d:V\times V\to\mathbb{R}_{\geq0}\cup\{\infty\}$ as follows: for every $u,v\in V$, $d(u,v)$ denotes the length of the shortest $u,v$-path with respect to the drawn edge weights. If no such path exists, we set $d(u,v)=\infty$. By doing so, the distance function $d$ satisfies $d(v,v)=0$ for all $v\in V$, $d(u,v)=d(v,u)$ for all $u,v\in V$, and $d(u,v)\leq d(u,s)+d(s,v)$ for all $u,s,v\in V$. We call the complete graph with distances $d$ obtained from this process a generalized random shortest path metric. If $G=K_n$ (the complete graph on $n$ vertices), then this generalized random shortest path metric is equivalent to the random shortest path metric as defined by Bringmann et al.~\cite{Bringmann2015}

We use the following notation within generalized random shortest path metrics: $\Delta_{\max}:=\max_{u,v}d(u,v)$ denotes the diameter of the graph. Note that $\Delta_{\max}<\infty$ if and only if $G$ is connected. $B_\Delta(v):=\{u\in V\mid d(u,v)\leq\Delta\}$ denotes the `ball' of radius $\Delta$ around $v$, i.e., the set containing all vertices at distance at most $\Delta$ from $v$. $\tau_k(v):=\min\{\Delta\mid|B_\Delta(v)|\geq k\}$ denotes the distance to the $k$th closest vertex from $v$ (including $v$ itself). Equivalently, one can also say that $\tau_k(v)$ is equal to the smallest $\Delta$ such that the ball of radius $\Delta$ around $v$ contains at least $k$ vertices.

Now, $B_{\tau_k(v)}(v)$ denotes the set of the $k$ closest vertices to $v$. During our analysis, we will make use of the size of the cut induced by this set, which we will denote by $\chi_k(v):=|\delta(B_{\tau_k(v)}(v))|$, where $\delta(U)$ denotes the cut induced by $U$.

\paragraph{Erd\H{o}s--R\'enyi Random Graphs.}

The main results of this work consider random shortest path metrics applied to Erd\H{o}s--R\'enyi random graphs. An undirected graph $G(n,p):=G=(V,E)$ generated by this model has $n$ vertices ($V=\{1,\ldots,n\}$) and between each pair of vertices an edge is included with probability $p$, independent of every other pair.

Working with the Erd\H{o}s--R\'enyi random graph introduces an extra amount of stochasticity to the probabilistic analysis, since both the graph and the edge weights are random. In order to avoid this extra stochasticity as long as possible, in Sections \ref{sect:structural} and \ref{sect:heuristics} we start our analysis using an arbitrary fixed (deterministic) graph $G$. Later on, in Section \ref{sect:ERRG} we will consider Erd\H{o}s--R\'enyi random graphs again.

\section{Structural properties}\label{sect:structural}

In order to analyze the structural properties of generalized random shortest path metrics, we first introduce the notion of what we call the cut parameters of a simple graph $G$.
\begin{definition}\label{def:ab}
Let $G=(V,E)$ be a finite simple connected graph. Then we define the cut parameters of $G$ by
\begin{equation*}
\alpha:=\min_{\varnothing\neq U\subset V}\frac{|\delta(U)|}{\mu_U}\qquad\text{and}\qquad\beta:=\max_{\varnothing\neq U\subset V}\frac{|\delta(U)|}{\mu_U},
\end{equation*}
where $\mu_U:=|U|\cdot(|V|-|U|)$ is the maximum number of possible edges in the cut defined by $U$.
\end{definition}
It follows immediately from this definition that $0<\alpha\leq\beta\leq1$ for any finite simple connected graph $G$. Moreover, for any such graph the following holds for all $\varnothing\neq U\subset V$: $\alpha\cdot\mu_U\leq|\delta(U)|\leq\beta\cdot\mu_U$. We observe that the cut parameters of the complete graph are given by $\alpha=\beta=1$.

\paragraph{Distribution of $\tau_k(v)$.}
Now we have a look at the distribution of $\tau_k(v)$. For this purpose we use an arbitrary fixed undirected connected simple graph $G$ (on $n$ vertices) and let $\alpha$ and $\beta$ denote its cut parameters.

The values of $\tau_k(v)$ are then generated by a birth process as follows. (Amongst others, a variant of this process for complete graphs has been analyzed by Davis and Prieditis~\cite{Davis1993} and Bringmann et al.~\cite{Bringmann2015}.) For $k=1$, we have $\tau_k(v)=0$. For $k\geq2$, we look at all edges $(u,x)$ with $u\in B_{\tau_{k-1}(v)}(v)$ and $x\not\in B_{\tau_{k-1}(v)}(v)$. By definition there are $\chi_{k-1}(v)$ such edges. Moreover the length of these edges is conditioned to be at least $\tau_{k-1}(v)-d(v,u)$. Using the memorylessness of the exponential distribution, we can now see that $\tau_k(v)-\tau_{k-1}(v)$ is the minimum of $\chi_{k-1}(v)$ (standard) exponential variables, or, equivalently, $\tau_k(v)-\tau_{k-1}(v)\sim\Exp(\chi_{k-1}(v))$. We use this result to find bounds for the distribution of $\tau_k(v)$.
\begin{lemma}
	\label{lemma:X}
	For all $k\in[n]$ and $v\in V$ we have,
	\begin{align*}
	\alpha k(n-k) \leq \chi_k(v) \leq \beta k(n-k).
	\end{align*}
\end{lemma}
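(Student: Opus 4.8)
The plan is to recognize that this lemma is an almost immediate consequence of the cut-parameter inequality recorded just after Definition~\ref{def:ab}, namely that $\alpha\cdot\mu_U\leq|\delta(U)|\leq\beta\cdot\mu_U$ holds for every $\varnothing\neq U\subset V$, applied to the single well-chosen set $U:=B_{\tau_k(v)}(v)$. The entire content of the proof is therefore to identify $\chi_k(v)$ with $|\delta(U)|$ for this $U$ and to evaluate $\mu_U$.

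First I would fix $v\in V$ and $k\in[n]$ and set $U:=B_{\tau_k(v)}(v)$. By definition this is the set of the $k$ closest vertices to $v$, so $|U|=k$ and consequently $\mu_U=|U|\cdot(|V|-|U|)=k(n-k)$. Moreover, directly from the notation, $\chi_k(v)=|\delta(B_{\tau_k(v)}(v))|=|\delta(U)|$. Thus the claimed inequalities are precisely the cut-parameter bounds for this particular $U$, provided those bounds are applicable.

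The one point requiring care is the range of applicability. The cut-parameter inequality holds only for nonempty \emph{proper} subsets $U\subset V$, so it covers exactly the cases $1\leq k\leq n-1$, where $U$ is nonempty and $U\neq V$; for these values substituting $\mu_U=k(n-k)$ into $\alpha\cdot\mu_U\leq|\delta(U)|\leq\beta\cdot\mu_U$ yields the lemma at once. The boundary case $k=n$ must be handled separately: here $U=V$, the cut $\delta(V)$ is empty, so $\chi_n(v)=0$, while simultaneously $k(n-k)=n\cdot 0=0$, and both the lower and upper bounds collapse to $0\leq 0\leq 0$. Hence the inequality also holds trivially at $k=n$, completing all cases in $[n]$.

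I expect no genuine obstacle; the only subtlety worth flagging explicitly is the implicit fact that $|B_{\tau_k(v)}(v)|=k$ exactly. This is guaranteed because the edge weights are drawn from continuous (exponential) distributions, so almost surely all pairwise distances are distinct and no ties occur in the ordering of vertices by distance from $v$; the set of the $k$ nearest vertices is then unambiguously of size $k$. Once this is noted, the proof reduces to the substitution described above.
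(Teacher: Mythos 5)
Your proof is correct and follows essentially the same route as the paper's, which likewise observes that $\chi_k(v)$ is the size of a cut induced by a $k$-vertex set and invokes the cut-parameter bounds. Your explicit treatment of the boundary case $k=n$ and of the almost-sure absence of ties is a careful elaboration the paper leaves implicit, but it does not change the argument.
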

\begin{proof}
	By definition, $\chi_k(v)$ is the size of a cut induced by a set of $k$ vertices. The result follows immediately since $\alpha$ and $\beta$ are the cut parameters of $G$.\blokje
\end{proof}
\begin{lemma}
	\label{lemma:tau}
	For all $k\in[n]$ and $v\in V$ we have,
	\begin{align*}
	\sum_{i=1}^{k-1}\Exp(\beta i(n-i)) \precsim \tau_k(v) \precsim \sum_{i=1}^{k-1}\Exp(\alpha i(n-i)).
	\end{align*}
\end{lemma}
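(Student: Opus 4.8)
The plan is to write $\tau_k(v)$ as a telescoping sum of its increments and then dominate each increment separately. Since $\tau_1(v)=0$, we have
\[
\tau_k(v)=\sum_{i=1}^{k-1}\bigl(\tau_{i+1}(v)-\tau_i(v)\bigr),
\]
and the birth-process description established just above the statement tells us that, conditioned on the history of the process up to the moment the $i$ closest vertices have been revealed, the increment $\tau_{i+1}(v)-\tau_i(v)$ is distributed as $\Exp(\chi_i(v))$. By Lemma~\ref{lemma:X}, the rate $\chi_i(v)$ always satisfies $\alpha i(n-i)\leq\chi_i(v)\leq\beta i(n-i)$, and crucially this holds deterministically, for every realization of the history.

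The only fact I would need about exponential distributions is that they are stochastically monotone in their rate: if $\lambda\geq\lambda'$ then $\Exp(\lambda)\precsim\Exp(\lambda')$, which is immediate from comparing the cumulative distribution functions $1-e^{-\lambda x}$ and $1-e^{-\lambda' x}$. Combining this monotonicity with the two-sided bound on $\chi_i(v)$ gives, for each increment and each realization of the history,
\[
\Exp(\beta i(n-i))\precsim\tau_{i+1}(v)-\tau_i(v)\precsim\Exp(\alpha i(n-i)).
\]
It then remains only to lift these per-increment dominations to a domination of the full sums.

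The step I expect to be the main obstacle is precisely this last lifting, because the rates $\chi_i(v)$ are themselves random and correlated with the rest of the process; the increments are therefore not independent exponentials with fixed rates, and one cannot simply quote ``a sum of stochastically dominated independent variables is dominated.'' I would resolve this with a coupling built one step at a time. Generate the process by drawing independent uniforms $U_1,\ldots,U_{k-1}$ and setting $\tau_{i+1}(v)-\tau_i(v)=F^{-1}_{\Exp(\chi_i(v))}(U_i)$, where $\chi_i(v)$ is determined by the first $i$ revealed vertices; simultaneously define $Y_i^{+}=F^{-1}_{\Exp(\alpha i(n-i))}(U_i)$ and $Y_i^{-}=F^{-1}_{\Exp(\beta i(n-i))}(U_i)$ from the same uniforms but using the deterministic rates. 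Then the $Y_i^{+}$ (respectively the $Y_i^{-}$) are independent with exactly the desired marginals, and because the exponential quantile function $u\mapsto-\ln(1-u)/\lambda$ is decreasing in $\lambda$, the inequalities $\alpha i(n-i)\leq\chi_i(v)\leq\beta i(n-i)$ force $Y_i^{-}\leq\tau_{i+1}(v)-\tau_i(v)\leq Y_i^{+}$ pointwise in this coupling. Summing over $i$ yields $\sum_i Y_i^{-}\leq\tau_k(v)\leq\sum_i Y_i^{+}$ pointwise, and since a pointwise comparison under a coupling implies stochastic domination, the two claimed bounds follow. (Alternatively, one could argue by induction on $k$, conditioning on $\tau_i(v)$ and the current ball and invoking a standard lemma that stochastic domination is preserved when a dominated increment is appended; the coupling merely makes the handling of the random rates transparent.)
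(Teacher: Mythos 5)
Your proof is correct and follows the same route as the paper's: decompose $\tau_k(v)$ into its increments, observe via the birth process that each increment is conditionally $\Exp(\chi_i(v))$, and bound the random rates using Lemma~\ref{lemma:X}. The paper merely asserts that the resulting sum ``can be bounded'' by the two fixed-rate sums; your explicit inverse-CDF coupling is a careful justification of exactly that step (the random, history-dependent rates), which the paper leaves implicit.
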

\begin{proof}
	As previously stated, $\tau_{i}(v)-\tau_{i-1}(v) \sim \Exp(\chi_{i-1}(v))$. Inductively, we obtain that
	\begin{align*}
	\tau_{k}(v)\sim \sum_{i=1}^{k-1} \Exp(\chi_i(v)).
	\end{align*}
	Using the result of Lemma \ref{lemma:X}, we can bound this distribution to obtain the desired result.\blokje
\end{proof}
Exploiting the linearity of expectation, the fact that the expected value of an exponentially distributed random variable with parameter $\lambda$ is $1/\lambda$ and the fact that $\sum_{i=1}^{k-1}1/(i(n-i))=(H_{k-1}+H_{n-1}-H_{n-k})/n$, we obtain the following corollary.
\begin{corollary}
\label{corollary:harmonic}
For all $k\in[n]$ and $v\in V$ we have,
\begin{equation*}
\frac{H_{k-1}+H_{n-1}-H_{n-k}}{\beta n}\leq\mathbb{E}(\tau_k(v))\leq\frac{H_{k-1}+H_{n-1}-H_{n-k}}{\alpha n}.
\end{equation*}
\end{corollary}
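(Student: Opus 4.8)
The plan is to pass from the stochastic-domination bounds of Lemma~\ref{lemma:tau} to the corresponding inequalities between expectations, and then to evaluate the two resulting harmonic-type sums explicitly. The first step is to record the standard fact that stochastic domination is monotone for expectations of nonnegative random variables: if $X \precsim Y$, then by definition $F_X(x) \ge F_Y(x)$ for all $x$, hence $\mathbb{P}(X > x) \le \mathbb{P}(Y > x)$, and integrating the tail via $\mathbb{E}(X) = \int_0^\infty \mathbb{P}(X>x)\,dx$ (valid since $\tau_k(v) \ge 0$) yields $\mathbb{E}(X) \le \mathbb{E}(Y)$. Applying this to both bounds of Lemma~\ref{lemma:tau} reduces the problem to computing the expectations of the two bounding sums of independent exponentials.

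Next I would compute these expectations by linearity together with the fact that an $\Exp(\lambda)$ variable has mean $1/\lambda$. The lower bounding sum $\sum_{i=1}^{k-1}\Exp(\beta i(n-i))$ has expectation $\frac{1}{\beta}\sum_{i=1}^{k-1}\frac{1}{i(n-i)}$, and the upper bounding sum gives $\frac{1}{\alpha}\sum_{i=1}^{k-1}\frac{1}{i(n-i)}$; the common factors $1/\alpha$ and $1/\beta$ pull out cleanly, so that only the factor-free sum remains to be evaluated.

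Finally I would evaluate that remaining sum by partial fractions: writing $\frac{1}{i(n-i)} = \frac{1}{n}\bigl(\frac{1}{i} + \frac{1}{n-i}\bigr)$ and summing gives $\frac{1}{n}\bigl(H_{k-1} + \sum_{i=1}^{k-1}\frac{1}{n-i}\bigr)$, where reindexing the second sum via $j = n-i$ produces $H_{n-1} - H_{n-k}$. This yields the identity $\sum_{i=1}^{k-1}\frac{1}{i(n-i)} = (H_{k-1}+H_{n-1}-H_{n-k})/n$ and hence both inequalities at once. No step presents a genuine obstacle, since everything reduces to routine calculation; the only points requiring a moment of care are the direction of the tail inequality (making sure the $\precsim$ of Lemma~\ref{lemma:tau} is oriented so that the upper domination really delivers the upper bound on the expectation) and the degenerate boundary cases $k=1$ and $k=n$, where the empty or vanishing sums must be checked to behave as expected.
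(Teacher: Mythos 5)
Your proposal is correct and follows essentially the same route as the paper: the paper derives the corollary from Lemma~\ref{lemma:tau} by linearity of expectation, the mean $1/\lambda$ of an $\Exp(\lambda)$ variable, and the identity $\sum_{i=1}^{k-1}1/(i(n-i))=(H_{k-1}+H_{n-1}-H_{n-k})/n$, exactly as you do. The only difference is that you spell out the (standard, correct) tail-integration argument for why stochastic dominance yields the inequality between expectations, which the paper leaves implicit.
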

From this result, we can derive the following extensions of two known results. First of all, if we randomly pick two vertices $u,v\in V$, then averaging over $k$ yields that the expected distance $\mathbb{E}[d(u,v)]$ between them is bounded between $\frac{H_{n-1}}{\beta(n-1)}\approx\ln(n)/\beta n$ and $\frac{H_{n-1}}{\alpha(n-1)}\approx\ln(n)/\alpha n$, which is in line with the known result for complete graphs, where we have $\mathbb{E}[d(u,v)]\approx\ln(n)/n$~\cite{Bringmann2015,Davis1993,Janson1999}.  Secondly, for any vertex $v$, the longest distance from it to another vertex is $\tau_n(v)$, which in expectation is bounded between $\frac{2H_{n-1}}{\beta n}\approx2\ln(n)/\beta n$ and $\frac{2H_{n-1}}{\alpha n}\approx2\ln(n)/\alpha n$, which also is in line with the known result for complete graphs, where we have an expected value of approximately $2\ln(n)/n$~\cite{Bringmann2015,Janson1999}.

It is also possible to find bounds for the cumulative distribution function of $\tau_k(v)$. To do so, we define $F_k(x)=\mathbb{P}(\tau_k(v)\leq x)$  for some fixed vertex $v\in V$.
\begin{lemma}
	\label{technical:exp}
	\emph{\cite[Lemma~3.2]{Bringmann2015}}
	Let $X \sim \sum_{i=1}^n \Exp(ci)$. Then, for any $a\geq 0$ we have $\mathbb{P}(X\leq a)=\left(1-e^{-ca}\right)^n$.
\end{lemma}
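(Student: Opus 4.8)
The plan is to recognize the right-hand side $(1-e^{-ca})^n$ as the cumulative distribution function of the maximum of $n$ independent $\Exp(c)$ random variables, and then to show that $X$ has exactly this distribution. Concretely, let $Y_1,\ldots,Y_n$ be independent, each distributed as $\Exp(c)$, and set $M=\max_i Y_i$. Since the $Y_i$ are independent, $\mathbb{P}(M\leq a)=\prod_{i=1}^n\mathbb{P}(Y_i\leq a)=(1-e^{-ca})^n$, so it suffices to prove that $X$ and $M$ have the same distribution.

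To do this I would decompose $M$ into the gaps between consecutive order statistics, using exactly the birth-process / memorylessness mechanism described above for $\tau_k(v)$. Write $Y_{(1)}\leq\cdots\leq Y_{(n)}$ for the order statistics and $Y_{(0)}:=0$, so that $M=Y_{(n)}=\sum_{k=1}^n\bigl(Y_{(k)}-Y_{(k-1)}\bigr)$. The first gap $Y_{(1)}$ is the minimum of $n$ independent $\Exp(c)$ variables and is therefore distributed as $\Exp(nc)$. By memorylessness, conditioned on the time $Y_{(1)}$ at which the first variable ``arrives'', the remaining $n-1$ variables are again independent $\Exp(c)$ variables measured from that moment; hence the next gap $Y_{(2)}-Y_{(1)}$ is the minimum of $n-1$ independent $\Exp(c)$ variables, i.e.\ $\Exp((n-1)c)$, and is independent of $Y_{(1)}$. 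Iterating, the $k$th gap is $\Exp((n-k+1)c)$ and the gaps are mutually independent.

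This yields $M\sim\sum_{k=1}^n\Exp((n-k+1)c)=\sum_{i=1}^n\Exp(ci)$, where the last equality is just a reindexing $i=n-k+1$ of the same multiset of rates $\{c,2c,\ldots,nc\}$. Thus $X$ and $M$ share the same distribution, and the claimed formula for $\mathbb{P}(X\leq a)$ follows.

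The only delicate point is making the gap decomposition rigorous: one must justify that each successive gap is exponential with the stated rate and, crucially, independent of all earlier gaps. This is precisely the ``competing exponentials plus memorylessness'' argument already invoked for the increments $\tau_i(v)-\tau_{i-1}(v)$, so the work is essentially identical to the reasoning preceding Lemma~\ref{lemma:tau}. If one prefers to avoid order statistics entirely, an alternative is a direct induction on $n$: the base case $n=1$ is immediate, and the inductive step convolves the density $cn\,e^{-cnt}$ of the new $\Exp(cn)$ summand against the inductive CDF $(1-e^{-c(a-t)})^{n-1}$; the resulting integral evaluates to $(1-e^{-ca})^n$. I would expect the integral (expanding via the binomial theorem and integrating term by term) to be the main computational obstacle in that route, whereas the coupling argument sidesteps it altogether.
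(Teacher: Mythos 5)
Your proof is correct. The paper itself gives no argument for this lemma --- it is imported verbatim as Lemma~3.2 of Bringmann et al.\ --- and your identification of $\sum_{i=1}^n \Exp(ci)$ with the maximum of $n$ i.i.d.\ $\Exp(c)$ variables via the order-statistic gap decomposition (the R\'enyi representation) is exactly the standard argument behind that cited result, so there is nothing to reconcile.
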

\begin{lemma}
	\label{lemma:F}
	For all $x\geq0$ and $k\in[n]$ we have,
	\begin{equation*}
	\left(1-\exp(-\alpha(n-k)x)\right)^{k-1} \leq F_k(x) \leq \left(1-\exp(-\beta nx)\right)^{k-1}.
	\end{equation*}
\end{lemma}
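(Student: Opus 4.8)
The plan is to combine the stochastic-domination bounds of Lemma~\ref{lemma:tau} with the closed form of Lemma~\ref{technical:exp}. Recall that $X \precsim Y$ means $F_X(x) \geq F_Y(x)$ for all $x$; hence the two dominations in Lemma~\ref{lemma:tau} immediately translate into
\[
\mathbb{P}\left(\sum_{i=1}^{k-1}\Exp(\alpha i(n-i)) \leq x\right) \leq F_k(x) \leq \mathbb{P}\left(\sum_{i=1}^{k-1}\Exp(\beta i(n-i)) \leq x\right).
\]
If one could apply Lemma~\ref{technical:exp} directly to these two sums the proof would be finished, but that lemma requires the rate parameters to be of the form $ci$ (linear in $i$), whereas here they are $\alpha i(n-i)$ and $\beta i(n-i)$, which are quadratic in $i$. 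Resolving this mismatch is the crux of the argument.

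For the upper bound on $F_k(x)$, I would note that for every $i\in\{1,\dots,k-1\}$ we have $n-i\leq n$, so $\beta i(n-i)\leq \beta n i$. A larger rate parameter yields a stochastically smaller exponential variable, and since stochastic domination is preserved under (independent) summation, this gives $\sum_{i=1}^{k-1}\Exp(\beta n i) \precsim \sum_{i=1}^{k-1}\Exp(\beta i(n-i))$. Translating back to distribution functions and invoking Lemma~\ref{technical:exp} with $c=\beta n$ and $k-1$ terms yields
\[
F_k(x) \leq \mathbb{P}\left(\sum_{i=1}^{k-1}\Exp(\beta n i) \leq x\right) = \left(1 - e^{-\beta n x}\right)^{k-1},
\]
which is the claimed upper bound.

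For the lower bound, the trick is symmetric but exploits the restricted range of the summation index. Since $i\leq k-1$ we have $n-i\geq n-k$, hence $\alpha i(n-i)\geq \alpha(n-k) i$. Now a smaller rate gives a stochastically larger variable, so $\sum_{i=1}^{k-1}\Exp(\alpha i(n-i)) \precsim \sum_{i=1}^{k-1}\Exp(\alpha(n-k) i)$. Applying Lemma~\ref{technical:exp} with $c=\alpha(n-k)$ then gives $\mathbb{P}(\sum_{i=1}^{k-1}\Exp(\alpha i(n-i)) \leq x) \geq (1 - e^{-\alpha(n-k)x})^{k-1}$, which is exactly the claimed lower bound on $F_k(x)$.

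The only genuine obstacle is the quadratic-versus-linear rate mismatch described above; once one observes that the factor $n-i$ can be bounded by $n$ from above and by $n-k$ from below over the range $i\in\{1,\dots,k-1\}$, everything reduces to monotonicity of the exponential distribution function in its rate parameter together with a direct application of Lemma~\ref{technical:exp}. It is worth sanity-checking the boundary cases: for $k=n$ the quantity $n-k=0$ makes the lower bound trivially equal to $0$, and for $k=1$ the empty sum forces both bounds to equal $1$, so both extremes are consistent with the statement.
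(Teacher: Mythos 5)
Your proof is correct and follows essentially the same route as the paper: it combines Lemma~\ref{lemma:tau} with the bounds $(n-k)i \leq i(n-i) \leq ni$ to reduce the quadratic rates to linear ones, and then applies Lemma~\ref{technical:exp}. The boundary-case check for $k=n$ (where the rate $\alpha(n-k)i$ degenerates and the lower bound becomes trivially $0$) is a small point the paper leaves implicit, so including it is a nice touch.
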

\begin{proof}
	By Lemma \ref{lemma:tau} we have
	\begin{equation*}
	\sum_{i=1}^{k-1}\Exp\left(\beta i(n-i)\right) \precsim \tau_k(v) \precsim \sum_{i=1}^{k-1} \Exp\left(\alpha i(n-i)\right).
	\end{equation*}
	Since $ni\geq i(n-i)\geq (n-k)i$ for all $i\in[k-1]$, we have $\Exp(\beta ni)\precsim\Exp(\beta i(n-i))$ and $\Exp(\alpha i(n-i))\precsim\Exp(\alpha(n-k)i)$ for all $i\in[k-1]$, from which we obtain that
	\begin{equation*}
	\sum_{i=1}^{k-1}\Exp\left(\beta ni)\right) \precsim \tau_k(v) \precsim \sum_{i=1}^{k-1} \Exp\left(\alpha(n-k)i\right).
	\end{equation*}
	Combining this with the definition of stochastic dominance and with Lemma \ref{technical:exp}, gives the desired result.\blokje
\end{proof}
We can improve this result slightly.
\begin{lemma}
    \label{lemma:distfunction}
    For all $x\geq0$ and $k\in[n]$ we have,
    \begin{equation*}
    F_k(x) \geq\left(1-\exp(-\alpha nx/4)\right)^{n}.
    \end{equation*}
\end{lemma}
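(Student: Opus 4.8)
The plan is to reduce to the extreme case $k=n$ and then dominate $\tau_n(v)$ by a sum of exponentials of exactly the form handled by Lemma~\ref{technical:exp}. First I would observe that $\tau_k(v)$ is non-decreasing in $k$ for every realization of the edge weights, since the distance to the $k$-th closest vertex never exceeds the distance to the $(k+1)$-th closest one. Hence $F_k(x)=\mathbb{P}(\tau_k(v)\le x)$ is non-increasing in $k$, so $F_k(x)\ge F_n(x)$ for all $k\in[n]$. Because the claimed lower bound does not depend on $k$, it suffices to prove it for $k=n$.

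For $k=n$, Lemma~\ref{lemma:tau} gives $\tau_n(v)\precsim\sum_{i=1}^{n-1}\Exp(\alpha i(n-i))$, while Lemma~\ref{technical:exp} (with $c=\alpha n/4$) identifies the target as the law of $\sum_{i=1}^{n}\Exp(\tfrac{\alpha n}{4}i)$, namely $\mathbb{P}(\sum_{i=1}^n\Exp(\tfrac{\alpha n}{4}i)\le x)=(1-\exp(-\alpha nx/4))^n$. Thus the whole statement follows once I establish the comparison $\sum_{i=1}^{n-1}\Exp(\alpha i(n-i))\precsim\sum_{i=1}^{n}\Exp(\tfrac{\alpha n}{4}i)$; recall that $X\precsim Y$ means $F_X\ge F_Y$, so this pushes the distribution function in the direction we want.

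This comparison is where the real work lies, and it is the main obstacle, because the rates $\alpha i(n-i)$ are not monotone in $i$: a naive term-by-term matching $i\mapsto i$ fails once $i$ exceeds roughly $3n/4$, where $n-i$ becomes too small. I would get around this using the symmetry $i(n-i)=(n-i)i$ together with the standard fact that sums of independent exponentials can be compared rate-wise: if, after sorting, the rates of one sum dominate those of the other, the sums are stochastically ordered (couple each matched pair through a common uniform via the inverse transform, then add). Sorting the left-hand rates increasingly, the values pair up, so the $j$-th smallest is $\alpha\lceil j/2\rceil(n-\lceil j/2\rceil)$, whereas the $j$-th smallest right-hand rate is simply $\tfrac{\alpha n}{4}j$. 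Writing $t=\lceil j/2\rceil\le n/2$ for $j\le n-1$, one has $n-t\ge n/2$, hence $\alpha t(n-t)\ge\alpha t\cdot\tfrac{n}{2}=\tfrac{\alpha n}{4}\cdot 2t\ge\tfrac{\alpha n}{4}j$, so every sorted left rate dominates the corresponding sorted right rate. This is precisely where the factor $4$ (rather than $2$) is spent: two nearly equal left terms of rate about $\alpha t(n-t)$ are absorbed by two consecutive right terms, the larger of which has rate $\tfrac{\alpha n}{2}t$.

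Finally I would chain everything together: the left sum has $n-1$ terms and the right sum has $n$ terms, and the extra positive right term only enlarges the right sum, so the sorted-rate domination over the first $n-1$ terms yields $\sum_{i=1}^{n-1}\Exp(\alpha i(n-i))\precsim\sum_{i=1}^{n}\Exp(\tfrac{\alpha n}{4}i)$. Combining this with $\tau_n(v)\precsim\sum_{i=1}^{n-1}\Exp(\alpha i(n-i))$ and with the evaluation from Lemma~\ref{technical:exp} gives $F_n(x)\ge(1-\exp(-\alpha nx/4))^n$, and the monotonicity step from the first paragraph promotes this to all $k\in[n]$.
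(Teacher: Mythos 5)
Your proof is correct, and while it follows the paper's skeleton (reduce to $k=n$ by monotonicity of $\tau_k(v)$, invoke Lemma~\ref{lemma:tau}, finish with Lemma~\ref{technical:exp}), the middle step is handled by a genuinely different mechanism. The paper exploits the symmetry $\lambda_i=\lambda_{n-i}$ of the rates $\lambda_i=\alpha i(n-i)$ by replacing $\sum_{i=1}^{n-1}\Exp(\lambda_i)$ with two independent copies of the half-sum $\sum_{i=1}^{\lfloor n/2\rfloor}\Exp(\lambda_i)$ and then using the probability inequality $\mathbb{P}(A+B\le x)\ge\mathbb{P}(A\le x/2)\cdot\mathbb{P}(B\le x/2)$, bounding each half-sum's rates from below by $\alpha i\lceil n/2\rceil$; this yields $\bigl(1-\exp(-\alpha x\lceil n/2\rceil/2)\bigr)^{2\lfloor n/2\rfloor}$ and needs a final rounding step. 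You instead establish a single stochastic dominance $\sum_{i=1}^{n-1}\Exp(\alpha i(n-i))\precsim\sum_{j=1}^{n}\Exp(\alpha nj/4)$ by sorting the rates and matching them pairwise --- the same symmetry, but consumed inside a sorted rate-by-rate comparison, which is legitimate because the law of a sum of independent exponentials depends only on the multiset of rates and term-wise rate domination plus an inverse-transform coupling gives dominance of the sums; your verification that the $j$-th smallest left rate $\alpha\lceil j/2\rceil(n-\lceil j/2\rceil)$ dominates $\alpha nj/4$, and that the extra $n$-th term on the right only helps, is accurate. Your route avoids the split-at-$x/2$ trick and the floor/ceiling bookkeeping and lands exactly on the target CDF; the paper's route is slightly more elementary in that it never needs the sorted-rates comparison fact. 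Both spend the factor $4$ in the same place: one factor $2$ for pairing $i$ with $n-i$, and one factor $2$ for $n-t\ge n/2$.
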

\begin{proof}
	Note that $\tau_k(v)$ is monotonically increasing in $k$. This implies $F_{k+1}(x) \leq F_k(x)$, so we only need to prove our claim for the case $k=n$. In this case, by Lemma \ref{lemma:tau}, we have $\tau_n(v) \precsim \sum_{i=1}^{n-1}\Exp\left(\lambda_i\right)$ with $\lambda_i:=\alpha i(n-i)=\lambda_{n-i}$. Exploiting the symmetry around $n/2$, we obtain
	\begin{equation*}
	\tau_n(v) \precsim \sum_{i=1}^{\lfloor n/2\rfloor}\Exp(\lambda_i)+\sum_{i=1}^{\lfloor n/2\rfloor}\Exp(\lambda_i).
	\end{equation*}
	This enables us to find a lower bound for $F_n(x)$ as follows:
	\begin{align*}
	F_n(x)=\mathbb{P}(\tau_n(v) \leq x)&\geq\mathbb{P}\left(\sum_{i=1}^{\lfloor n/2\rfloor}\Exp(\lambda_i)+\sum_{i=1}^{\lfloor n/2\rfloor}\Exp(\lambda_i)\leq x\right)\\
	&\geq\mathbb{P}\left(\sum_{i=1}^{\lfloor n/2\rfloor}\Exp(\lambda_i)\leq x/2\right)^2.
	\end{align*}
	Since $i(n-i)\geq i\lceil n/2\rceil$ for all $i\in[\lfloor n/2\rfloor]$, we have $\Exp(\lambda_i)\precsim\Exp(\alpha i\lceil n/2\rceil)$. Combining this with Lemma \ref{technical:exp} yields
	\begin{equation*}
	F_n(x)\geq\mathbb{P}\left(\sum_{i=1}^{\lfloor n/2\rfloor}\Exp(\alpha i\lceil n/2\rceil)\leq x/2\right)^2=\left(1-\exp(-\alpha x\lceil n/2\rceil/2)\right)^{2\lfloor n/2\rfloor}.
	\end{equation*}
	Using the inequalities $\lceil n/2\rceil\geq n/2$ and $2\lfloor n/2\rfloor\leq n$ we end up with the desired result.\blokje
\end{proof}
Using this improved bound for the cumulative distribution function of $\tau_k(v)$, we can derive the following tail bound for the diameter $\Delta_{\max}$.
\begin{lemma}
\label{lemma:diameter}
	Define $\Delta_{\max}=\max_{u,v\in V}\{d(u,v)\}$. For any fixed $c$ we have 
	\begin{equation*}
	\mathbb{P}( \Delta_{\max} > c \ln(n)/\alpha n) \leq n^{2-c/4}.
	\end{equation*}
\end{lemma}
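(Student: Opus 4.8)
The plan is to reduce the diameter to a maximum of the quantities $\tau_n(v)$ whose cumulative distribution function we have already controlled in Lemma~\ref{lemma:distfunction}, and then pay for taking a maximum over all vertices with a union bound. First I would observe that $\Delta_{\max}=\max_{u,v\in V}d(u,v)=\max_{v\in V}\tau_n(v)$, since $\tau_n(v)$ is exactly the distance from $v$ to the vertex most distant from it. Consequently, the event $\{\Delta_{\max}>t\}$ is the union over $v\in V$ of the events $\{\tau_n(v)>t\}$, so
\begin{equation*}
\mathbb{P}(\Delta_{\max}>t)\leq\sum_{v\in V}\mathbb{P}(\tau_n(v)>t)=\sum_{v\in V}\bigl(1-F_n(t)\bigr)=n\bigl(1-F_n(t)\bigr).
\end{equation*}

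Next I would insert the improved tail estimate of Lemma~\ref{lemma:distfunction}, namely $F_n(t)\geq\bigl(1-\exp(-\alpha nt/4)\bigr)^n$, to get $1-F_n(t)\leq 1-\bigl(1-\exp(-\alpha nt/4)\bigr)^n$. The key elementary inequality here is $1-(1-x)^n\leq nx$ for $x\in[0,1]$, which follows from Bernoulli's inequality (or a one-line induction), and which lets me linearize the $n$th power. Applying it with $x=\exp(-\alpha nt/4)$ yields
\begin{equation*}
1-F_n(t)\leq n\exp(-\alpha nt/4).
\end{equation*}

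Finally I would substitute the target radius $t=c\ln(n)/\alpha n$. This choice is engineered so that the exponent telescopes: $\exp(-\alpha nt/4)=\exp(-c\ln(n)/4)=n^{-c/4}$. Chaining the three bounds together gives
\begin{equation*}
\mathbb{P}\bigl(\Delta_{\max}>c\ln(n)/\alpha n\bigr)\leq n\cdot n\cdot n^{-c/4}=n^{2-c/4},
\end{equation*}
which is exactly the claimed bound. I do not expect a genuine obstacle in this argument, since all the analytic work was already done in establishing Lemma~\ref{lemma:distfunction}; the only points that require a little care are verifying the identity $\Delta_{\max}=\max_v\tau_n(v)$ (so that the union bound is legitimate) and confirming that the two factors of $n$—one from the union bound over the $n$ vertices and one from the linearization $1-(1-x)^n\leq nx$—combine to produce the exponent $2$ rather than $1$.
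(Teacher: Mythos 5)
Your proposal is correct and follows essentially the same route as the paper's proof: reduce $\Delta_{\max}$ to $\max_{v}\tau_n(v)$, apply a union bound over the $n$ vertices, invoke Lemma~\ref{lemma:distfunction}, and linearize $1-(1-x)^n\leq nx$ via Bernoulli's inequality to obtain the exponent $2-c/4$. The only difference is presentational: you make the intermediate bound $1-F_n(t)\leq n\exp(-\alpha nt/4)$ explicit before substituting $t=c\ln(n)/\alpha n$, whereas the paper substitutes first and applies Bernoulli at the end.
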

\begin{proof}
	Clearly, we have $\Delta_{\max}=\max_v\tau_n(v)$. For $v\in V$, let $\mathcal{E}_v$ denote the event that $\tau_n(v)>c\ln(n)/\alpha n$. From Lemma \ref{lemma:distfunction} we know that $\mathbb{P}(\mathcal{E}_v)=1-F_n(c\ln(n)/\alpha n)\leq1-(1-\exp(-c\ln(n)/4))^n$. Combining this with a union bound, we can derive that
	\begin{equation*}
	\mathbb{P}\left(\Delta_{\max}>\frac{c\ln(n)}{\alpha n}\right)\leq\sum_{v\in V}\mathbb{P}(\mathcal{E}_v)\leq n\cdot\left(1-\left(1-n^{-c/4}\right)^n\right)\leq n^{2-c/4},
	\end{equation*}
	where the last inequality can be derived using Bernoulli's inequality.\blokje
\end{proof}

\paragraph{Clustering.} 
In this section we show that we can partition the vertices of generalized random shortest path metrics into a small number of clusters with a given maximum diameter. Before we prove this main result, we first provide a tail bound for $|B_\Delta(v)|$.
\begin{lemma}
\label{lemma:ball}
For $n\geq5$ and for any fixed $\Delta\geq0$ we have,
\begin{equation*}
\mathbb{P}\left(|B_\Delta(v)| < \min \left\{\exp(\alpha \Delta n/5 ), \frac{n+1}{2}\right\}\right) \leq \exp(-\alpha \Delta n/5).
\end{equation*}
\end{lemma}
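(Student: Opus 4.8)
The plan is to convert the tail bound on the random set size $|B_\Delta(v)|$ into a tail bound on $\tau_k(v)$, for which Lemma~\ref{lemma:F} already supplies the distributional estimates. The key observation is that, because $\tau_k(v)$ is by definition the smallest radius whose ball contains at least $k$ vertices, the events $\{|B_\Delta(v)|\geq k\}$ and $\{\tau_k(v)\leq\Delta\}$ coincide for every positive integer $k$. Hence, writing $k$ for the smallest integer at least $\min\{\exp(\alpha\Delta n/5),(n+1)/2\}$, the event in the statement is exactly $\{|B_\Delta(v)|<k\}=\{\tau_k(v)>\Delta\}$, whose probability equals $1-F_k(\Delta)$.

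Next I would feed in the lower bound $F_k(\Delta)\geq\bigl(1-\exp(-\alpha(n-k)\Delta)\bigr)^{k-1}$ from Lemma~\ref{lemma:F}, so that
\[
\mathbb{P}(|B_\Delta(v)|<k)=1-F_k(\Delta)\leq 1-\bigl(1-\exp(-\alpha(n-k)\Delta)\bigr)^{k-1}.
\]
Applying Bernoulli's inequality $(1-p)^{k-1}\geq 1-(k-1)p$ with $p=\exp(-\alpha(n-k)\Delta)$ linearizes the right-hand side to $(k-1)\exp(-\alpha(n-k)\Delta)$. It then remains to show that this product is at most $\exp(-\alpha\Delta n/5)$, which after taking logarithms becomes the single inequality $\ln(k-1)\leq\alpha\Delta\bigl(\tfrac{4n}{5}-k\bigr)$.

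Here the two arguments of the minimum that defines $k$ play complementary roles. The bound $k\leq\exp(\alpha\Delta n/5)$ controls the left-hand side, yielding $\ln(k-1)\leq\alpha\Delta n/5$; the bound $k\leq(n+1)/2$ controls the right-hand side, since for $n\geq5$ it gives $k\leq 3n/5$ and therefore $\tfrac{4n}{5}-k\geq n/5$. Chaining these two estimates closes the inequality. The degenerate small cases (such as $k=1$, where the probability is simply $0$, so that the Bernoulli bound reads $0$) should be dismissed separately but are immediate.

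The step demanding the most care is this final calibration, and it is precisely where the deliberately loose constant $1/5$ (rather than the $1/2$ that $k\leq(n+1)/2$ would naively suggest) pays off. Rounding $\min\{\exp(\alpha\Delta n/5),(n+1)/2\}$ up to the integer $k$ can push $k$ slightly past each of the two thresholds, and one must verify that the slack built into $n/5$, together with the hypothesis $n\geq5$, still simultaneously accommodates the $\ln(k-1)$ term and the requirement $k\leq 3n/5$ in both branches of the minimum. Once that bookkeeping is checked, the two estimates combine exactly as above and deliver the claimed bound.
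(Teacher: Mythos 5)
Your proof is correct and follows essentially the same route as the paper's: both rewrite the event as $\{\tau_k(v)>\Delta\}$, invoke the lower bound on $F_k$ from Lemma~\ref{lemma:F}, apply Bernoulli's inequality, and let the two branches of the minimum control the factor $k-1$ and the exponent $n-k$ respectively. The only difference is bookkeeping --- the paper first establishes the bound with $(n-1)/4$ in place of $n/5$ and then relaxes via $(n-1)/4\geq n/5$ for $n\geq5$, whereas you spend the hypothesis $n\geq5$ on $(n+1)/2\leq 3n/5$ --- and the integer-rounding subtlety you flag at the end is glossed over in the paper's own proof in exactly the same way.
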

\begin{proof}
	We have $|B_\Delta (v)|\geq k$ if and only if $\tau_k(v) \leq \Delta$. Using Lemma \ref{lemma:F}, we obtain
	\begin{align*}
	&\mathbb{P}\left(|B_\Delta(v)| < \min \left\{\exp(\alpha \Delta (n-1)/4 ), \frac{n+1}{2}\right\}\right)\\
	&\quad\leq1-\left(1-\exp\left(-\alpha\Delta\left(n-\frac{n+1}2\right)\right)\right)^{\exp(\alpha\Delta(n-1)/4)-1}\\
	&\quad\leq1-\left(1-\exp\left(-\alpha\Delta(n-1)/2\right)\right)^{\exp(\alpha\Delta(n-1)/4)}\\
	&\quad\leq\exp(-\alpha\Delta(n-1)/4),
	\end{align*}
	where the last inequality can be derived using Bernoulli's inequality. Using $(n-1)/4\geq n/5$ for $n\geq5$ finishes the proof.\blokje
\end{proof}
We use the result of this lemma to prove our main structural property for generalized random shortest path metrics. 
\begin{theorem}
\label{theorem:cluster}
For any fixed $\Delta\geq0$, if we partition the vertices into clusters, each of diameter at most $4\Delta$, then the expected number of clusters needed is bounded from above by $O(1+n/\exp( \alpha \Delta n/5))$.
\end{theorem}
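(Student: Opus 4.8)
The plan is to analyze a concrete greedy clustering procedure and then bound its number of clusters. First I would fix the process: while an unclustered vertex remains, pick an arbitrary one, call it $v$, open a new cluster, and place into it every currently unclustered vertex $u$ with $d(u,v)\leq2\Delta$. Since every member of such a cluster lies within distance $2\Delta$ of its center $v$, the triangle inequality guarantees that any two members are at distance at most $4\Delta$, so each cluster has diameter at most $4\Delta$ as required. It therefore only remains to bound the expected number of centers produced.

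The key geometric observation is that the chosen centers are pairwise far apart. If $v$ is opened before $v'$, then $v'$ was still unclustered when $v$'s cluster was formed; had $d(v,v')\leq2\Delta$ held, $v'$ would have been absorbed into $v$'s cluster, contradicting that $v'$ later becomes a center itself. Hence all centers are pairwise at distance strictly greater than $2\Delta$, which implies that the balls $B_\Delta(v)$ around distinct centers are pairwise disjoint, since a common vertex would force two centers within distance $2\Delta$ of each other.

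Next I would classify each vertex $v$ as \emph{dense} if $|B_\Delta(v)|\geq N$ and \emph{sparse} otherwise, where $N:=\min\{\exp(\alpha\Delta n/5),(n+1)/2\}$ is exactly the threshold of Lemma \ref{lemma:ball}; that lemma gives $\mathbb{P}(v\text{ sparse})\leq\exp(-\alpha\Delta n/5)$ for each fixed $v$ (assuming $n\geq5$, while small $n$ only contributes to the additive constant). Splitting the centers accordingly, the dense centers can be counted deterministically: their $\Delta$-balls are disjoint and each contains at least $N$ vertices, so there are at most $n/N\leq 2+n/\exp(\alpha\Delta n/5)$ of them in every realization. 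The sparse centers are in particular sparse vertices, so their number is at most the total number of sparse vertices, whose expectation is at most $n\exp(-\alpha\Delta n/5)$ by linearity of expectation. Adding the two contributions and taking expectations yields
\begin{equation*}
\mathbb{E}[\#\text{clusters}]\leq \frac{n}{N}+n\exp(-\alpha\Delta n/5)\leq 2+\frac{2n}{\exp(\alpha\Delta n/5)}=O\!\left(1+\frac{n}{\exp(\alpha\Delta n/5)}\right).
\end{equation*}

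The step I expect to be the main obstacle is dealing with the dependency between the random metric and the adaptively chosen centers: Lemma \ref{lemma:ball} only controls $|B_\Delta(v)|$ for a vertex $v$ fixed in advance, whereas each center is selected based on the realized distances. The device that sidesteps this is the dense/sparse dichotomy, as it lets me bound the dense centers by a purely deterministic packing argument (disjoint balls of size at least $N$ fit into $n$ vertices) and bound the sparse centers by the unconditional count of sparse vertices over all $n$ vertices. Thus the only place probability enters is the per-vertex tail bound of Lemma \ref{lemma:ball}, applied through linearity of expectation rather than to the random centers themselves.
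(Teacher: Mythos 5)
Your proof is correct and follows essentially the same route as the paper: the same dense/sparse dichotomy with threshold $\min\{\exp(\alpha\Delta n/5),(n+1)/2\}$ from Lemma \ref{lemma:ball}, a deterministic packing of pairwise disjoint $\Delta$-balls to count the dense centers, and linearity of expectation over all $n$ vertices to count the sparse ones. The only difference is cosmetic: you select well-separated centers by a sequential greedy sweep, whereas the paper takes a maximal independent set in the intersection graph of $\Delta$-balls of dense vertices and puts each sparse vertex in a singleton cluster; both yield the same disjointness property and the same bound.
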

\begin{proof}
	Define $s_\Delta = \min\{\exp( \alpha\Delta  n/5, (n+1)/2\}$. We call vertex $v$ \emph{$\Delta$-dense} if $|B_\Delta(v)|\geq s_\Delta$ and \emph{$\Delta$-sparse} otherwise. In both cases we call the set $B_\Delta(v)$ of vertices within distance $\Delta$ of $v$ the \emph{$\Delta$-ball} of $v$. By Lemma \ref{lemma:ball} we can bound the expected number of $\Delta$-sparse vertices by $O(n/s_\Delta)$. We put each $\Delta$-sparse vertex in its own cluster (of size 1), which has diameter $0\leq4\Delta$.
	
	This leaves us with the $\Delta$-dense vertices. We cluster them according to the following process. Consider an auxiliary graph $H$ whose vertices are the $\Delta$-dense vertices and where two vertices are connected by an edge if and only if their corresponding $\Delta$-balls are not disjoint. Now, consider an arbitrary maximal independent set $S$ in $H$. Since $|B_\Delta(v)|\geq s_\Delta$ and $B_\Delta(u)\cap B_\Delta(v)=\varnothing$ for any $u,v\in S$, it follows that $|S|\leq n/s_\Delta$. Now, we form the initial clusters $C_1,\ldots,C_{|S|}$ each of which is equal to the $\Delta$-ball corresponding to one of the vertices in $S$. Observe that these initial clusters have diameter at most $2\Delta$.
	
	Now consider an arbitrary $\Delta$-dense vertex $v$ that is not part of any cluster yet. Since $S$ is a maximal independent set, we know that there exists a $u\in S$ such that $B_\Delta(u)\cap B_\Delta(v)\neq\varnothing$. We add $v$ to the cluster that contains $u$. If we take $x\in B_\Delta(u)\cap B_\Delta(v)$, then we can see that $d(v,u)\leq d(v,x)+d(x,u)\leq\Delta+\Delta=2\Delta$. We repeat this step until all $\Delta$-dense vertices have been added to some initial cluster. By construction, the diameter of each cluster is at most $4\Delta$ after this process: consider any vertices $x,y$ in the same cluster, that originally corresponded to a vertex $u\in S$. Then we have $d(x,y)\leq d(x,u)+d(u,y)\leq2\Delta+2\Delta=4\Delta$.
	
	So, now we have in expectation $O(n/s_\Delta)$ clusters each containing one $\Delta$-sparse vertex, and at most $n/s_\Delta$ clusters each containing at least $s_\Delta$ $\Delta$-dense vertices, all with diameter at most $4\Delta$. The total number of clusters is $O(n/s_\Delta)=O(1+n/\exp( \alpha \Delta n/5))$.\blokje
\end{proof}

\section{Analysis of Heuristics}\label{sect:heuristics}

In this section we bound the expected approximation ratios of the greedy heuristic for minimum-distance perfect matching, the nearest neighbor and insertion heuristics for the traveling salesman problem, and a trivial heuristic for the $k$-median problem. For this purpose we still use an arbitrary fixed undirected connected simple graph $G$ (on $n$ vertices) and let $\alpha$ and $\beta$ denote its cut parameters. The results in this section will depend on $\alpha$ and $\beta$.

\paragraph{Greedy Heuristic for Minimum-Distance Perfect Matching.}
The mini-mum-distance perfect matching problem has been widely analyzed throughout history. We do for instance know that the worst-case running-time for finding a minimum distance perfect matching is $O(n^3)$, which is high when considering a large number of vertices. Because of this, simple heuristics are often used, with the greedy heuristic probably being the simplest of them: at each step, add a pair of unmatched vertices to the matching such that the distance between the added pair of vertices is minimized. From now on, let $\mathsf{GR}$ denote the cost of the matching computed by this heuristic and let $\mathsf{MM}$ denote the value of an optimal matching.

The worst-case approximation ratio of this heuristic on metric instances is known to be $O(n^{\log_2(3/2)})$~\cite{Reingold1981}. Furthermore, for random shortest path metrics on complete graphs (for which the cut parameters are given by $\alpha=\beta=1$) the heuristic has an expected approximation ratio of $O(1)$~\cite{Bringmann2015}. We extend this last result to general values for $\alpha$ and $\beta$ and show that the greedy matching heuristic has an expected approximation ratio of $O(\beta/\alpha)$.
\begin{theorem}
\label{theorem:greedy}
$\mathbb{E}[\mathsf{GR}]=O\left(1/\alpha\right)$.
\end{theorem}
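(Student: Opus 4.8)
The plan is to bound the total greedy cost by a layer-cake decomposition combined with the clustering theorem. Write $N(\Delta)$ for the number of edges of the greedy matching whose length exceeds $\Delta$. Then $\mathsf{GR}=\sum_e d(e)=\int_0^\infty N(\Delta)\,d\Delta$, so it suffices to control $\mathbb{E}[N(\Delta)]$ for every $\Delta\geq0$ and integrate. The crucial observation is that the lengths of the edges chosen by greedy are non-decreasing, since greedy always takes a shortest available edge while the set of available pairs only shrinks. Hence, at the moment greedy is about to insert its first edge of length exceeding $\Delta$, every pair of still-unmatched vertices is at distance more than $\Delta$, and every edge inserted from then on also has length exceeding $\Delta$; consequently $N(\Delta)$ equals exactly half the number of unmatched vertices at that moment.

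First I would turn this into a clustering bound. Fix any partition of $V$ into clusters each of diameter at most $\Delta$; two vertices in the same cluster lie within distance $\Delta$, so each cluster can contain at most one of the pairwise-far unmatched vertices described above. Therefore the number of such unmatched vertices is at most the number of clusters, giving the pointwise bound $N(\Delta)\leq C(\Delta)/2$, where $C(\Delta)$ is the number of clusters. Applying Theorem~\ref{theorem:cluster} with parameter $\Delta/4$ produces such a partition (clusters of diameter at most $\Delta$) with $\mathbb{E}[C(\Delta)]=O(1+n/\exp(\alpha\Delta n/20))$, and taking expectations yields $\mathbb{E}[N(\Delta)]\leq\tfrac12\mathbb{E}[C(\Delta)]=O(1+n/\exp(\alpha\Delta n/20))$.

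It remains to integrate this over $\Delta\in[0,\infty)$, and here lies the one genuine difficulty: the constant ``$1$'' in the clustering bound is not integrable on the whole half-line, so the bound cannot simply be plugged in everywhere. I would split at the threshold $\Delta_0:=c_0\ln n/(\alpha n)$ for a suitable constant $c_0$. On $[0,\Delta_0]$ the exponential term integrates to at most $\int_0^\infty n\exp(-\alpha\Delta n/20)\,d\Delta=O(1/\alpha)$, while the constant term contributes only $\Delta_0=O(\ln n/(\alpha n))=o(1/\alpha)$. On $[\Delta_0,\infty)$ I would discard the clustering bound and instead use that $N(\Delta)=0$ once $\Delta>\Delta_{\max}$, together with the crude bound $N(\Delta)\leq n/2$; thus the tail contributes at most $\tfrac{n}{2}\,\mathbb{E}[(\Delta_{\max}-\Delta_0)^+]=\tfrac n2\int_{\Delta_0}^\infty\mathbb{P}(\Delta_{\max}>t)\,dt$.

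Finally, the diameter tail bound of Lemma~\ref{lemma:diameter}, after the substitution $t=c\ln n/(\alpha n)$, gives $\int_{\Delta_0}^\infty\mathbb{P}(\Delta_{\max}>t)\,dt=O(n^{1-c_0/4}/\alpha)$, so the tail contribution is $O(n^{2-c_0/4}/\alpha)$; choosing $c_0\geq8$ makes this $O(1/\alpha)$ or smaller. Summing the two ranges then gives $\mathbb{E}[\mathsf{GR}]=O(1/\alpha)$. The main obstacle to watch is precisely this splitting argument: the threshold $\Delta_0$ must be balanced so that both the exponentially decaying clustering term on $[0,\Delta_0]$ and the polynomially small diameter tail on $[\Delta_0,\infty)$ remain within $O(1/\alpha)$.
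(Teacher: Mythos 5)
Your proof is correct and is essentially the paper's argument in continuous form: the layer-cake identity $\mathsf{GR}=\int_0^\infty N(\Delta)\,d\Delta$ replaces the paper's discrete decomposition into phases according to which interval $(4\Delta_{i-1},4\Delta_i]$ an edge's length falls in, but the two key steps --- at most one unmatched vertex per cluster of diameter $\Delta$ via Theorem~\ref{theorem:cluster}, and the diameter tail bound of Lemma~\ref{lemma:diameter} to dispose of lengths beyond $\Theta(\ln n/(\alpha n))$ --- are used in exactly the same roles. Your explicit justification of the monotonicity of the greedy edge lengths and your careful handling of the tail integral are, if anything, more detailed than what the paper writes out.
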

\begin{proof}
	Let $\Delta_i:=i/\alpha n$. We divide the run of the greedy heuristic in phases as follows: the algorithm is in phase $i$ if a pair $(u,v)$ is added to the matching such that $d(u,v) \in(4\Delta_{i-1},4\Delta_i]$. Using Lemma \ref{lemma:diameter}, we can show that the expected sum of all distances greater than or equal to $\Delta_{\omega(\ln(n))}$ is $o(1/\alpha)$, so we can ignore the corresponding phases in our analysis.
	
	We now estimate the contribution of the other phases to the greedy matching. By Theorem \ref{theorem:cluster}, after phase $i-1$, we can partition the vertices in an expected number of $O(1+n/\exp((i-1)/5))$ clusters, each of diameter at most $4\Delta_{i-1}$. Each such cluster can have at most one unmatched vertex. So, after phase $i-1$ there are at most $O(1+n/\exp((i-1)/5))$ unmatched vertices left. Therefore, in expectation at most $O(1+n/\exp((i-1)/5))$ pairs of unmatched vertices can be added in phase $i$, each contributing a distance of at most $4\Delta_i$. So, the total contribution of phase $i$ is in expectation at most $O(\frac{i}{\alpha n}(1+n/\exp((i-1)/5)))$. Summing over all phases yields
	\begin{equation*}
	\mathbb{E}[\mathsf{GR}]=o\left(\frac1\alpha\right)+\sum_{i=1}^{O(\ln(n))}O\left(\frac1\alpha\left(\frac{i}{n}+\frac{i}{e^{(i-1)/5}}\right)\right)= o\left(\frac1\alpha\right)+O\left(\frac1\alpha\right)=O\left(\frac1\alpha\right),
	\end{equation*}
	which completes the proof.\blokje
\end{proof}
\begin{lemma}
\label{technical:Janson}
\emph{\cite[Thm.~5.1(iii)]{Janson2018}}
Let $X\sim\sum_{i=1}^nX_i$ with $X_i\sim\Exp(a_i)$ independent. Let $\mu=\mathbb{E}[X]=\sum_{i=1}^n(1/a_i)$ and $a_*=\min_ia_i$. For any $\lambda\leq1$,
\begin{equation*}
\mathbb{P}(X\leq\lambda\mu)\leq\exp(-a_*\mu(\lambda-1-\ln(\lambda))).
\end{equation*}
\end{lemma}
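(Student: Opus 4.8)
The plan is to prove this lower-tail estimate by the standard exponential (Chernoff) method applied to $-X$, optimizing the free parameter and then taming the heterogeneity of the $a_i$ by a convexity argument. Since $\lambda\le1$, for any $t\ge0$ Markov's inequality applied to $e^{-tX}$ gives
\begin{equation*}
\mathbb{P}(X\le\lambda\mu)=\mathbb{P}\bigl(e^{-tX}\ge e^{-t\lambda\mu}\bigr)\le e^{t\lambda\mu}\,\mathbb{E}[e^{-tX}].
\end{equation*}
By independence the moment generating function factorizes, and for $X_i\sim\Exp(a_i)$ one has $\mathbb{E}[e^{-tX_i}]=a_i/(a_i+t)$ for every $t\ge0$, so that after taking logarithms
\begin{equation*}
\ln\mathbb{P}(X\le\lambda\mu)\le t\lambda\mu-\sum_{i=1}^n\ln\!\left(1+\frac{t}{a_i}\right).
\end{equation*}

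Next I would choose $t$ to make this bound as tight as possible. Inspecting the homogeneous case $a_1=\cdots=a_n=a$ (where $X$ is a Gamma variable and the target inequality holds with equality) suggests the choice $t=a_*(1/\lambda-1)=a_*(1-\lambda)/\lambda$, which is nonnegative precisely because $\lambda\le1$, so that the use of Markov's inequality and of the moment generating function formula above remains valid. With this $t$ the linear term simplifies to $t\lambda\mu=a_*(1-\lambda)\mu$, so it only remains to bound the sum of logarithms from below in terms of $a_*$ and $\mu$ alone.

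The main obstacle is exactly this last step: each summand $\ln(1+t/a_i)$ depends on an individual parameter, whereas the target bound may involve only $a_*=\min_i a_i$ and $\mu=\sum_i 1/a_i$. I would resolve this with the elementary inequality $\ln(1+rx)\ge r\ln(1+x)$, valid for $x\ge0$ and $r\in[0,1]$, which follows from the concavity of $r\mapsto\ln(1+rx)$ together with its values at the endpoints $r=0$ and $r=1$ (a concave graph lies above the chord joining those endpoints). Applying it with $x=1/\lambda-1$ and $r_i=a_*/a_i\in(0,1]$ yields $\ln(1+t/a_i)\ge(a_*/a_i)\ln(1/\lambda)$, and hence
\begin{equation*}
\sum_{i=1}^n\ln\!\left(1+\frac{t}{a_i}\right)\ge a_*\ln(1/\lambda)\sum_{i=1}^n\frac{1}{a_i}=-a_*\mu\ln\lambda.
\end{equation*}
Substituting this lower bound and the value of the linear term back into the Chernoff estimate gives $\ln\mathbb{P}(X\le\lambda\mu)\le a_*\mu(1-\lambda)+a_*\mu\ln\lambda=-a_*\mu(\lambda-1-\ln\lambda)$, which is the claimed inequality after exponentiating. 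The only two points requiring care are the nonnegativity of $t$ and the correct orientation of the concavity inequality, since we need a lower bound on the sum of logarithms in order to produce an upper bound on the tail probability.
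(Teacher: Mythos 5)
The paper does not prove this lemma at all: it is imported verbatim from Janson (Thm.~5.1(iii) of the cited reference), so there is no in-paper argument to compare against. Your proof is a correct and self-contained derivation. The Chernoff step, the moment generating function $\mathbb{E}[e^{-tX_i}]=a_i/(a_i+t)$ for $t\geq0$, the choice $t=a_*(1/\lambda-1)\geq0$, and the concavity inequality $\ln(1+rx)\geq r\ln(1+x)$ for $r\in[0,1]$, $x\geq0$ (applied with $r_i=a_*/a_i$ so that $r_ix=t/a_i$) are all valid, and the final algebra $a_*\mu(1-\lambda)+a_*\mu\ln\lambda=-a_*\mu(\lambda-1-\ln\lambda)$ checks out; this is essentially the argument Janson gives. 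Two cosmetic remarks: the statement implicitly requires $\lambda>0$ for $\ln\lambda$ to make sense (for $\lambda\leq0$ the left-hand side is $0$ and there is nothing to prove), and your parenthetical claim that the target inequality ``holds with equality'' in the homogeneous Gamma case is not literally true --- what is true is that the optimized Chernoff bound equals the stated expression there --- but this remark is purely motivational and does not affect the validity of any step.
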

\begin{lemma}
\label{lemma:expGam}
\emph{\citep[Ex.~1.A.24]{Shaked2007}}
Let $X_i\sim\Exp(\lambda_i)$ independently, $i=1,\ldots,m$. Moreover, let $Y_i\sim\Exp(\eta)$ independently, $i=1,\ldots,m$. Then we have
\begin{equation*}
\sum_{i=1}^mX_i\succsim\sum_{i=1}^mY_i\qquad\text{if and only if}\qquad\prod_{i=1}^m\lambda_i\leq\eta^m.
\end{equation*}
\end{lemma}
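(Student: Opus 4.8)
The plan is to prove the two implications separately, working throughout with the usual stochastic order recalled in the preliminaries ($Z\succsim Z'$ means $F_Z(x)\leq F_{Z'}(x)$ for all $x$). Write $Z=\sum_{i=1}^m X_i$ and $Z'=\sum_{i=1}^m Y_i$, and note that $Z'$ is $\Gamma(m,\eta)$-distributed. Both sides are most easily accessed through Laplace transforms: $\mathbb{E}[e^{-sZ}]=\prod_{i=1}^m\lambda_i/(\lambda_i+s)$ and $\mathbb{E}[e^{-sZ'}]=(\eta/(\eta+s))^m$. For the ``only if'' direction I would use that $t\mapsto e^{-st}$ is decreasing, so $Z\succsim Z'$ forces $\mathbb{E}[e^{-sZ}]\leq\mathbb{E}[e^{-sZ'}]$ for every $s>0$, i.e.\ $\prod_i\lambda_i/(\lambda_i+s)\leq(\eta/(\eta+s))^m$. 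Multiplying both sides by $s^m$ and letting $s\to\infty$, the left-hand side tends to $\prod_i\lambda_i$ and the right-hand side to $\eta^m$, yielding $\prod_i\lambda_i\leq\eta^m$.

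For the ``if'' direction I would first reduce to the equality case $\prod_i\lambda_i=\eta^m$. Increasing any single rate $\lambda_i$ makes $X_i$ stochastically smaller, hence (since the usual stochastic order is closed under convolution with an independent summand) makes $Z$ stochastically smaller and increases $F_Z(x)$ pointwise. Thus $F_Z(x)$ is at most its value after scaling every rate up by the common factor $\eta/(\prod_j\lambda_j)^{1/m}\geq1$, which brings the product to exactly $\eta^m$; it therefore suffices to establish $Z\succsim Z'$ whenever $\prod_i\lambda_i=\eta^m$.

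Under the equality constraint the strategy is a majorization (equalization) argument: I claim that $F_Z(x)$ is maximized, among all rate vectors with the prescribed product, when all rates are equal (necessarily to $\eta$), which is exactly the law of $Z'$. To prove this it suffices to show that replacing any two rates $\lambda_i,\lambda_j$ by their common geometric mean $\sqrt{\lambda_i\lambda_j}$ (which preserves the product) only increases $F_Z(x)$; iterating this operation on the current largest and smallest rate drives the vector to the all-$\eta$ vector in the limit, and the resulting monotone sequence of CDF values gives $F_Z(x)\leq F_{Z'}(x)$. Conditioning on the independent ``background'' sum $W$ of the remaining $m-2$ exponentials and using closure of the order under convolution, this reduces to the two-variable statement $\Exp(a)+\Exp(b)\succsim\Exp(\sqrt{ab})+\Exp(\sqrt{ab})$ for all $a,b>0$.

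The main obstacle is this two-variable inequality, which I would settle by explicit computation. Writing $c=\sqrt{ab}$ and $a=ct$, $b=c/t$ with $t\geq1$, the survival function of $\Exp(a)+\Exp(b)$ is $\bar F_t(x)=(t\,e^{-cx/t}-t^{-1}e^{-ctx})/(t-t^{-1})$, while the $a=b$ case gives the Erlang survival function $(1+cx)e^{-cx}$. It then remains to verify that $\bar F_t(x)$ is nondecreasing in $t$ for every fixed $x\geq0$ (equivalently, $\bar F_t(x)\geq(1+cx)e^{-cx}$), which is a one-variable calculus check after the substitution $u=cx$. This is the only genuinely analytic part of the argument; everything else combines the Laplace-transform limit, monotonicity of $F_Z(x)$ in each rate, and the standard geometric-mean equalization scheme.
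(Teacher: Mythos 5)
The paper does not actually prove this lemma: it is imported verbatim from Shaked and Shanthikumar (Example 1.A.24) as a black box, so there is no internal proof to compare yours against. Judged on its own, your argument is sound and self-contained. The ``only if'' direction via Laplace transforms is clean: multiplying $\prod_i\lambda_i/(\lambda_i+s)\leq(\eta/(\eta+s))^m$ by $s^m$ and letting $s\to\infty$ does recover $\prod_i\lambda_i\leq\eta^m$. For the ``if'' direction, the reduction to the equality case by scaling all rates up (each $\Exp(c\lambda_i)\precsim\Exp(\lambda_i)$ for $c\geq1$, plus closure under independent convolution and transitivity) is correct, and the geometric-mean equalization scheme converges: each merge of the extreme log-rates decreases the sum of squared log-rates by half the squared range, so the rate vector tends to the all-$\eta$ vector, and since the limiting Erlang CDF is continuous the monotone sequence of CDFs yields the desired pointwise bound. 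The whole weight of the proof rests on the two-variable inequality $\Exp(a)+\Exp(b)\succsim\Exp(\sqrt{ab})+\Exp(\sqrt{ab})$, which you correctly isolate but only assert as ``a one-variable calculus check''; for completeness you should carry it out. It does work: with $a=ct$, $b=c/t$, differentiating $\bar F_t(x)=(t^2e^{-cx/t}-e^{-ctx})/(t^2-1)$ in $t$ reduces, after setting $s=cx(t^2-1)/t$, to the elementary inequality $\tfrac{2+s}{2-s}\geq e^s$ for $s\in[0,2)$, which follows from $\ln\tfrac{1+u}{1-u}\geq 2u$. So the proposal is a complete and correct (if somewhat laborious) replacement for the textbook citation; its advantage is that it makes the paper self-contained, at the cost of an analytic verification the paper deliberately outsources.
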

\begin{lemma}
\label{lemma:S}
Let $S_m$ denote the sum of the $m$ lightest edge weights in $G$. For all $\phi\leq(n-1)/n$ and $c\in [0,2\phi^2/e]$ we have
\begin{equation*}
\mathbb{P}\left(S_{\phi n} \leq \frac{c}{\beta}\right) \leq \exp\left(\phi n\left(2+\ln\left(\frac{c}{2\phi^2}\right)\right)\right).
\end{equation*}
Furthermore, $\mathsf{TSP}\geq\mathsf{MM} \geq S_{n/2}$, where $\mathsf{TSP}$ and $\mathsf{MM}$ are the total distance of a shortest TSP tour and a minimum-distance perfect matching, respectively.
\end{lemma}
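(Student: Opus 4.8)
The plan is to treat the two assertions separately. For the structural inequality $\mathsf{TSP}\geq\mathsf{MM}\geq S_{n/2}$, the first inequality is routine: a TSP tour is a Hamiltonian cycle whose edge set decomposes into two perfect matchings (the edges in alternating positions) whose costs sum to $\mathsf{TSP}$, so the cheaper one costs at most $\mathsf{TSP}/2$, whence $\mathsf{MM}\leq\mathsf{TSP}/2\leq\mathsf{TSP}$. For $\mathsf{MM}\geq S_{n/2}$ I would take an optimal perfect matching and replace each of its $n/2$ pairs $(u_i,v_i)$ by a shortest $u_i$-$v_i$-path $P_i$ realizing $d(u_i,v_i)$, so that $\mathsf{MM}=\sum_i w(P_i)$ and each $P_i$ contains at least one edge. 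The key observation is that the union $\bigcup_i P_i$, viewed as a subgraph, contains every vertex of $G$ with degree at least one (each vertex is an endpoint of one of the paths), and therefore has at least $n/2$ edges. Since every edge of the union is counted at least once in $\sum_i w(P_i)$, we get $\mathsf{MM}\geq\sum_{e\in\bigcup_i P_i}w(e)\geq S_{n/2}$, because the sum of at least $n/2$ distinct edge weights is at least the sum of the $n/2$ lightest.

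For the tail bound, with $m:=\phi n$ and $N:=|E|$, I would first express $S_m$ as a sum of independent exponentials. By R\'enyi's representation, the order statistics $W_{(1)}\leq\dots\leq W_{(N)}$ of the i.i.d.\ $\Exp(1)$ edge weights have spacings $G_i:=W_{(i)}-W_{(i-1)}\sim\Exp(N-i+1)$ independently, and $S_m=\sum_{i=1}^m(m-i+1)G_i$, so $S_m\sim\sum_{i=1}^m\Exp(\lambda_i)$ with $\lambda_i=(N-i+1)/(m-i+1)$. Here the product of the numerators is $N!/(N-m)!$ and of the denominators is $m!$, so $\prod_{i=1}^m\lambda_i=\binom{N}{m}$. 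Setting $\eta:=\binom{N}{m}^{1/m}$ and invoking Lemma~\ref{lemma:expGam} (equality $\prod_i\lambda_i=\eta^m$ holds) yields $S_m\succsim\sum_{i=1}^m\Exp(\eta)$, hence $\mathbb{P}(S_m\leq t)\leq\mathbb{P}(\sum_{i=1}^m\Exp(\eta)\leq t)$ for all $t$.

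Next I would apply the Chernoff-type bound of Lemma~\ref{technical:Janson} to the i.i.d.\ sum $T:=\sum_{i=1}^m\Exp(\eta)$, for which $\mu=m/\eta$ and $a_*=\eta$. Choosing $\lambda$ with $\lambda\mu=c/\beta$, i.e.\ $\lambda=c\eta/(\beta m)$, gives $\mathbb{P}(S_m\leq c/\beta)\leq\exp(-m(\lambda-1-\ln\lambda))\leq\exp(m(1+\ln\lambda))$ after dropping the nonpositive $-\lambda$ term. Two quantitative inputs then finish it: from $\binom{N}{m}\leq(eN/m)^m$ we get $\eta\leq eN/m$ and thus $\lambda\leq ceN/(\beta m^2)$; and the edge count obeys $N\leq\beta n(n-1)/2$, since $\deg(v)=|\delta(\{v\})|\leq\beta(n-1)$ summed over all vertices. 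Substituting $m=\phi n$ turns $ceN/(\beta m^2)$ into at most $ec(n-1)/(2\phi^2 n)$, so $1+\ln\lambda\leq 2+\ln(c/(2\phi^2))$, which is exactly the claimed exponent $\phi n(2+\ln(c/(2\phi^2)))$. I would also verify that the two side conditions are precisely what make the steps legal: $\phi\leq(n-1)/n$ guarantees $m=\phi n\leq n-1\leq N$ (connectivity gives $N\geq n-1$), so that $S_m$ and $\binom{N}{m}$ are well defined, while $c\leq 2\phi^2/e$ forces $\lambda\leq 1$, the hypothesis required by Lemma~\ref{technical:Janson}.

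I expect the main obstacle to be the probabilistic representation step: recognizing that the sum of order statistics $S_m$ is itself a weighted sum of independent exponentials whose rates have product exactly $\binom{N}{m}$, which is what lets Lemma~\ref{lemma:expGam} collapse the problem to a single i.i.d.\ sum. The remaining work — the Chernoff bound, the Stirling estimate of the binomial, and converting the cut parameter $\beta$ into the edge-count bound — is routine but must be arranged so that the constants align exactly with the stated side conditions.
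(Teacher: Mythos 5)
Your proof is correct and follows essentially the same route as the paper's: the same spacings representation of $S_{\phi n}$ as a sum of independent exponentials whose rates have product $\binom{|E|}{\phi n}$, the same reduction via Lemma~\ref{lemma:expGam} and the Chernoff bound of Lemma~\ref{technical:Janson}, the same edge-count bound $|E|\leq\beta n(n-1)/2$, and the same union-of-shortest-paths argument for $\mathsf{MM}\geq S_{n/2}$. The only cosmetic difference is that you take $\eta$ to be exactly $\binom{|E|}{\phi n}^{1/\phi n}$ and estimate afterwards, whereas the paper substitutes $e|E|/\phi n$ directly before invoking Lemma~\ref{technical:Janson}.
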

\begin{proof}
	Since all edge weights are independent and standard exponential distributed, we have $S_1\sim\Exp(|E|)$. Using the memorylessness property of the exponential distribution, it follows that $S_2-S_1\sim S_1+\Exp(|E|-1)$, i.e., the second lightest edge weight is equal to the lightest edge weight plus the minimum of $|E|-1$ standard exponential distributed random variables. In general, we get $S_{k+1}-S_k\sim S_k-S_{k-1}+\Exp(|E|-k)$. This yields
	\begin{equation*}
	S_{\phi n}\sim\sum_{i=0}^{\phi n-1}(\phi n-i)\cdot\Exp(|E|-i)\sim\sum_{i=0}^{\phi n-1}\Exp\left(\frac{|E|-i}{\phi n-i}\right)\succsim\sum_{i=0}^{\phi n-1}\Exp\left(\frac{e|E|}{\phi n}\right),
	\end{equation*}
	where the stochastic dominance follows from Lemma \ref{lemma:expGam} by observing that
	\begin{equation*}
	\prod_{i=0}^{\phi n-1}\frac{|E|-i}{\phi n-i}=\frac{|E|!}{(\phi n)!(|E|-\phi n)!}=\binom{|E|}{\phi n}\leq\left(\frac{e|E|}{\phi n}\right)^{\phi n},
	\end{equation*} 
	where the inequality follows from applying the well-known inequality $\binom{m}{k}\leq(em/k)^k$.	Next, observe that $|E|\leq\beta n(n-1)/2$. Applying this fact, and then combining it with Lemma \ref{technical:Janson} with $\mu=2\phi^2n/\beta(n-1)$, $a_*=\beta(n-1)/2\phi$ and $\lambda=ec(n-1)/2\phi^2n$ (note that $\lambda\leq1$ since $0\leq c\leq 2\phi^2/e$), we obtain
	\begin{align*}
	\mathbb{P}\left(S_{\phi n}\leq\frac{c}{\beta}\right)&\leq\mathbb{P}\left(\sum_{i=0}^{\phi n-1}\Exp\left(\frac{\beta(n-1)}{2\phi}\right)\leq\frac{ec}{\beta}\right)\\
	&\leq\exp\left(-\phi n\left(\frac{ec(n-1)}{2\phi^2n}-1-\ln\left(\frac{ec(n-1)}{2\phi^2n}\right)\right)\right)\\
	&\leq\exp\left(\phi n\left(2+\ln\left(\frac{c}{2\phi^2}\right)\right)\right).
	\end{align*}
	It remains to show that $\mathsf{TSP}\geq\mathsf{MM} \geq S_{n/2}$. The first inequality follows trivially. For the second one, consider a minimum-distance perfect matching. Take the union of the shortest path between each matched pair of vertices. This union must contain at least $n/2$ different edges of $G$. These edges must have a total weight of at least $S_{n/2}$ and at most $\mathsf{MM}$. So, $\mathsf{MM} \geq S_{n/2}$.\blokje
\end{proof}
\begin{theorem}
\label{theorem:EGR}
The greedy heuristic for minimum-distance perfect matching has an expected approximation ratio on generalized random shortest path metrics given by $\mathbb{E}\left[\frac{\mathsf{GR}}{\mathsf{MM}}\right]=O\left(\beta/\alpha\right)$.
\end{theorem}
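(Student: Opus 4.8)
The plan is to reduce the statement to the already-proved bound $\mathbb{E}[\mathsf{GR}] = O(1/\alpha)$ of Theorem~\ref{theorem:greedy}. The obstacle is that $\mathsf{GR}$ and $\mathsf{MM}$ are computed from the same random edge weights and are therefore far from independent, so $\mathbb{E}[\mathsf{GR}/\mathsf{MM}]$ cannot be replaced by $\mathbb{E}[\mathsf{GR}]/\mathbb{E}[\mathsf{MM}]$. I would sidestep this by thresholding on the value of $\mathsf{MM}$: split the expectation according to whether $\mathsf{MM}$ is at least or below $c/\beta$ for a small constant $c>0$ to be fixed later. Concretely,
$$\mathbb{E}\!\left[\frac{\mathsf{GR}}{\mathsf{MM}}\right] = \mathbb{E}\!\left[\frac{\mathsf{GR}}{\mathsf{MM}}\,\mathbf{1}\!\left\{\mathsf{MM}\geq \tfrac{c}{\beta}\right\}\right] + \mathbb{E}\!\left[\frac{\mathsf{GR}}{\mathsf{MM}}\,\mathbf{1}\!\left\{\mathsf{MM}< \tfrac{c}{\beta}\right\}\right].$$

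On the event $\{\mathsf{MM}\geq c/\beta\}$ I would simply bound $\mathsf{GR}/\mathsf{MM}\leq (\beta/c)\,\mathsf{GR}$, so the first term is at most $(\beta/c)\,\mathbb{E}[\mathsf{GR}] = O(\beta/\alpha)$ by Theorem~\ref{theorem:greedy}, for any fixed $c$. This already contributes the desired order, so all the work goes into showing that the low-$\mathsf{MM}$ event is rare enough to be harmless.

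For the second term I would use the \emph{deterministic} worst-case guarantee that the greedy matching satisfies $\mathsf{GR}/\mathsf{MM} = O(n^{\log_2(3/2)})$ on any metric instance, together with the inequality $\mathsf{MM}\geq S_{n/2}$ from Lemma~\ref{lemma:S}, to obtain
$$\mathbb{E}\!\left[\frac{\mathsf{GR}}{\mathsf{MM}}\,\mathbf{1}\!\left\{\mathsf{MM}< \tfrac{c}{\beta}\right\}\right] = O\!\left(n^{\log_2(3/2)}\right)\cdot\mathbb{P}\!\left(S_{n/2}<\frac{c}{\beta}\right).$$
The main technical step is then to apply Lemma~\ref{lemma:S} with $\phi=1/2$, which bounds the probability by $\exp\!\big(\tfrac{n}{2}(2+\ln(2c))\big)$, valid provided $c\leq 1/(2e)$. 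Choosing $c$ small enough that $2+\ln(2c)<0$ --- for instance $c=e^{-2}/4$, giving exponent $-\tfrac{n}{2}\ln 2$ --- makes this probability exponentially small in $n$, so that it beats the polynomial factor $n^{\log_2(3/2)}$ and the whole second term is $o(1)$. Since $\beta/\alpha\geq 1$ (as $\alpha\leq\beta$), this is $O(\beta/\alpha)$, and adding the two contributions gives $\mathbb{E}[\mathsf{GR}/\mathsf{MM}]=O(\beta/\alpha)$. The only real subtlety I anticipate is placing the constant $c$ in the sweet spot where the first term stays $O(\beta/\alpha)$ (true for every constant $c$) while the tail exponent $2+\ln(2c)$ becomes negative; this is a one-line calculation once Lemma~\ref{lemma:S} is in hand.
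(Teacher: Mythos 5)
Your proposal is correct and follows essentially the same route as the paper: the same threshold decomposition at $\mathsf{MM}<c/\beta$, the same use of Theorem~\ref{theorem:greedy} for the main term, and the same combination of the worst-case ratio $O(n^{\log_2(3/2)})$ with the tail bound of Lemma~\ref{lemma:S} (with $\phi=1/2$) for the rare event. Your explicit verification that $c$ can be placed in $[0,1/(2e)]$ with $2+\ln(2c)<0$ is a slightly more careful rendering of the paper's ``$c$ sufficiently small.''
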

\begin{proof}
	Let $c>0$ be a sufficiently small constant. Then the approximation ratio of the greedy heuristic on generalized random shortest path metrics is
	\begin{equation*}
	\mathbb{E}\left[\frac{\mathsf{GR}}{\mathsf{MM}}\right]\leq\mathbb{E}\left[\frac{\beta\cdot\mathsf{GR}}{c}\right]+ \mathbb{E}\left[\frac{\mathsf{GR}}{\mathsf{MM}}\;\middle|\;\mathsf{MM}<\frac{c}{\beta}\right]\cdot\mathbb{P}\left(\mathsf{MM}<\frac{c}{\beta}\right).
	\end{equation*}
	The first term is $O(\beta/\alpha)$ by Theorem \ref{theorem:greedy}. The expectation in the second term can be bounded by the worst-case approximation ratio of the greedy heuristic on metric instances, i.e. $n^{\log_2(3/2)}$ \cite{Reingold1981}. The probability can be bounded by $\exp(\tfrac12n(2+\ln(2c)))$ according to Lemma \ref{lemma:S}. Since $c$ is sufficiently small, this implies that the second term becomes $o(1)$.\blokje
\end{proof}

\paragraph{Nearest Neighbor Heuristic for TSP.}
The nearest-neighbor heuristic is a greedy approach for the TSP: start with some starting vertex $v_0$ as current vertex $v$; at every step, choose the nearest unvisited neighbor $u$ of $v$ as the next vertex in the tour and move to the next iteration with the new vertex $u$ as current vertex $v$; go back to $v_0$ if all vertices are visited. From now on, let $\mathsf{NN}$ denote the cost of the TSP tour computed by this heuristic and let $\mathsf{TSP}$ denote the value of an optimal TSP tour.

The worst-case approximation ratio of this heuristic on metric instances is known to be $O(\ln(n))$~\cite{Rosenkrantz1977}. Furthermore, for random shortest path metrics on complete graphs (for which the cut parameters are given by $\alpha=\beta=1$) the heuristic has an expected approximation ratio of $O(1)$~\cite{Bringmann2015}. We extend this last result to general values for $\alpha$ and $\beta$ and show that the nearest-neighbor heuristic has an expected approximation ratio of $O(\beta/\alpha)$.
\begin{theorem}
\label{theorem:NN}
	For generalized random shortest path metrics, we have $\mathbb{E}[\mathsf{NN}]=O\left(1/\alpha\right)$ and $\mathbb{E}\left[\frac{\mathsf{NN}}{\mathsf{TSP}}\right]=O\left(\beta/\alpha\right)$.
\end{theorem}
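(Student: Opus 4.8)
The plan is to prove the two claims by directly adapting the arguments for the greedy matching heuristic in Theorems~\ref{theorem:greedy} and~\ref{theorem:EGR}. The absolute bound $\mathbb{E}[\mathsf{NN}]=O(1/\alpha)$ plays the role of Theorem~\ref{theorem:greedy} and will be obtained by a phase decomposition driven by the clustering result (Theorem~\ref{theorem:cluster}); the ratio bound $\mathbb{E}[\mathsf{NN}/\mathsf{TSP}]=O(\beta/\alpha)$ then follows exactly as in Theorem~\ref{theorem:EGR}, by splitting on whether $\mathsf{TSP}$ is atypically small and invoking the lower bound $\mathsf{TSP}\geq S_{n/2}$ from Lemma~\ref{lemma:S}.

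For the first claim, set $\Delta_i:=i/\alpha n$ and say that a step of the tour belongs to phase $i$ if the edge it traverses has length in $(4\Delta_{i-1},4\Delta_i]$. As in Theorem~\ref{theorem:greedy}, Lemma~\ref{lemma:diameter} lets me discard the phases beyond $i=O(\ln n)$ at a total expected cost of $o(1/\alpha)$, and the single return edge to $v_0$ contributes at most $\Delta_{\max}$, whose expectation is $O(\ln n/\alpha n)=o(1/\alpha)$. The crucial structural observation, replacing the ``at most one unmatched vertex per cluster'' fact, is this: fix a clustering into clusters of diameter at most $4\Delta_{i-1}$ as provided by Theorem~\ref{theorem:cluster}; if the heuristic sits at a vertex $v$ and moves to a nearest unvisited neighbor at distance $>4\Delta_{i-1}$, then every other vertex of $v$'s cluster, being within distance $4\Delta_{i-1}$ of $v$, must already be visited, so after this move the whole cluster is visited and each cluster can be the origin of at most one step of length exceeding $4\Delta_{i-1}$. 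Consequently the number of phase-$i$ steps is at most the number of clusters, whose expectation is $O(1+n/\exp((i-1)/5))$, and each such step costs at most $4\Delta_i=4i/\alpha n$. Summing $\mathbb{E}[\text{contribution of phase }i]=O\bigl(\tfrac{i}{\alpha n}(1+n/e^{(i-1)/5})\bigr)$ over $i=1,\dots,O(\ln n)$ then gives $O(1/\alpha)$, the same convergent sum that appears in Theorem~\ref{theorem:greedy}.

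For the second claim, fix a sufficiently small constant $c>0$ and write
\begin{equation*}
\mathbb{E}\left[\frac{\mathsf{NN}}{\mathsf{TSP}}\right]\leq\mathbb{E}\left[\frac{\beta\cdot\mathsf{NN}}{c}\right]+\mathbb{E}\left[\frac{\mathsf{NN}}{\mathsf{TSP}}\;\middle|\;\mathsf{TSP}<\frac c\beta\right]\cdot\mathbb{P}\left(\mathsf{TSP}<\frac c\beta\right).
\end{equation*}
The first term is $O(\beta/\alpha)$ by the first claim. In the second term I would bound the conditional expectation by the worst-case ratio $O(\ln n)$ of the nearest-neighbor heuristic on metric instances~\cite{Rosenkrantz1977}, and bound the probability using $\mathsf{TSP}\geq S_{n/2}$ together with Lemma~\ref{lemma:S} (with $\phi=1/2$), which gives $\mathbb{P}(\mathsf{TSP}<c/\beta)\leq\exp(\tfrac12 n(2+\ln(2c)))$. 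For $c$ small enough the exponent is negative, so this probability decays exponentially and dominates the $O(\ln n)$ factor, making the second term $o(1)$; hence $\mathbb{E}[\mathsf{NN}/\mathsf{TSP}]=O(\beta/\alpha)$.

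I expect the only genuinely new ingredient, and thus the main obstacle, to be the structural ``one long step per cluster'' argument for nearest neighbor, since the greedy-matching proof relied on the symmetric statement that a cluster contains at most one unmatched vertex. I would also want to be careful that this pathwise bound (number of phase-$i$ steps $\leq$ number of clusters) is valid for the very clustering whose expected size Theorem~\ref{theorem:cluster} controls, so that taking expectations is legitimate; everything else is a transcription of the matching proofs with $\mathsf{MM}$ replaced by $\mathsf{TSP}$ and $S_{n/2}$ reused as the lower bound.
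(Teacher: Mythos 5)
Your proposal matches the paper's proof essentially step for step: the same binning of tour edges by length into intervals $(4\Delta_{i-1},4\Delta_i]$, the same key observation that a nearest-neighbor step longer than $4\Delta_{i-1}$ can only leave a cluster whose vertices are all already visited (so each cluster from Theorem~\ref{theorem:cluster} originates at most one such step), and the same conditioning on $\mathsf{TSP}<c/\beta$ with Lemma~\ref{lemma:S} and the worst-case $O(\ln n)$ ratio for the second claim. The explicit accounting of the closing edge back to $v_0$ is a harmless addition the paper leaves implicit.
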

\begin{proof}
	The first part of the proof is similar to the proof of Theorem \ref{theorem:greedy}. Let $\Delta_i:=i/\alpha n$. We put the `edges' added to the tour by the nearest-neighbor heuristic into bins depending on their distance, bin $i$ gets the `edges' $\{u,v\}$ with $d(u,v)\in(4\Delta_{i-1},4\Delta_i]$. Using Lemma \ref{lemma:diameter}, we can show that the expected sum of all distances greater than or equal to $\Delta_{\omega(\ln(n))}$ is $o(1/\alpha)$, so we can ignore the corresponding bins in our analysis.
	
	We now estimate the contribution of the other bins to the distance of the TSP tour. By Theorem \ref{theorem:cluster}, we can partition the vertices in an expected number of $O(1+n/\exp((i-1)/5))$ clusters, each of diameter at most $4\Delta_{i-1}$. Every time the nearest-neighbor heuristic adds an `edge' of distance greater than $4\Delta_{i-1}$, this must be an edge from some cluster $C_k$ to another cluster $C_\ell$. Moreover, at this point the partial TSP tour must already have visited all vertices in the cluster $C_k$. Therefore, this can happen at most $O(1+n/\exp((i-1)/5))$ times in expectation. Therefore, bin $i$ can get at most $O(1+n/\exp((i-1)/5))$ `edges' during the run of the nearest-neighbor heuristic. So, the total contribution of bin $i$ is in expectation at most $O(\frac{i}{\alpha n}(1+n/\exp((i-1)/5)))$. Summing over all bins yields
	\begin{equation*}
	\mathbb{E}[\mathsf{NN}]=o\left(\frac1\alpha\right)+\sum_{i=1}^{O(\ln(n))}O\left(\frac1\alpha\left(\frac{i}{n}+\frac{i}{e^{(i-1)/5}}\right)\right)= o\left(\frac1\alpha\right)+O\left(\frac1\alpha\right)=O\left(\frac1\alpha\right).
	\end{equation*}
	Using the worst-case approximation ratio of the nearest-neighbor heuristic on metric instances of $O(\ln(n))$~\cite{Rosenkrantz1977}, the proof for the expected approximation ratio is analogously to the proof of Theorem \ref{theorem:EGR}.\blokje
\end{proof}

\paragraph{Insertion Heuristics for TSP.}
The insertion heuristics are another greedy approach for the TSP: start with an initial optimal tour on a few vertices chosen according to some predefined rule $R$; at every step, choose a vertex according to the same predefined rule $R$ and insert this vertex in the current tour such that the total distance increases the least. From now on, let $\mathsf{IN}_R$ denote the cost of the TSP tour computed by this heuristic (with rule $R$) and let $\mathsf{TSP}$ still denote the value of an optimal TSP tour.

The worst-case approximation ratio of this heuristic for any rule $R$ on metric instances is known to be $O(\ln(n))$~\cite{Rosenkrantz1977}. Furthermore, for random shortest path metrics on complete graphs (for which the cut parameters are given by $\alpha=\beta=1$) the heuristic has an expected approximation ratio of $O(1)$~\cite{Bringmann2015}. We extend this last result to general values for $\alpha$ and $\beta$ and show that the insertion heuristic for any rule $R$ has an expected approximation ratio of $O(\beta/\alpha)$.
\begin{theorem}
\label{theorem:IN}
	For generalized random shortest path metrics, we have $\mathbb{E}[\mathsf{IN}_R]=O\left(1/\alpha\right)$ and $\mathbb{E}\left[\frac{\mathsf{IN}_R}{\mathsf{TSP}}\right]=O\left(\beta/\alpha\right)$.
\end{theorem}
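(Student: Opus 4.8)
The plan is to mirror the two-part structure used for the nearest-neighbor heuristic in Theorem~\ref{theorem:NN}, since both bounds there carry over almost verbatim once the right ``cost of one insertion'' estimate is in place. For the first bound, $\mathbb{E}[\mathsf{IN}_R]=O(1/\alpha)$, I would again set $\Delta_i:=i/\alpha n$ and place each insertion into bin $i$ according to the increase in tour length it causes, assigning an insertion to bin $i$ if this increase lies in $(4\Delta_{i-1},4\Delta_i]$. As before, Lemma~\ref{lemma:diameter} lets me discard the contribution of all increases of order $\Delta_{\omega(\ln n)}$ at a total expected cost of $o(1/\alpha)$, and the $O(1)$ edges of the initial tour contribute at most $O(\Delta_{\max})=o(1/\alpha)$ in expectation, so only the bins $i=1,\ldots,O(\ln n)$ matter.

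The key new ingredient, which replaces the ``edge leaving a cluster'' observation of Theorem~\ref{theorem:NN}, is a bound on the cost of a single insertion. When a vertex $v$ is inserted into a current tour $T$, the heuristic chooses the cheapest position, so its cost is at most the cost of inserting $v$ immediately after its nearest tour vertex $u\in T$; writing $w$ for the tour-successor of $u$ and using the triangle inequality $d(v,w)\leq d(v,u)+d(u,w)$, this cost is at most $d(u,v)+d(v,w)-d(u,w)\leq 2d(u,v)=2\min_{u\in T}d(u,v)$. Hence an insertion placed in bin $i$ or higher has cost greater than $4\Delta_{i-1}$, which forces $\min_{u\in T}d(u,v)>2\Delta_{i-1}$; that is, $v$ is at distance more than $2\Delta_{i-1}$ from every vertex currently in the tour.

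Now I would apply Theorem~\ref{theorem:cluster} with parameter $\Delta_{i-1}/2$ to partition the vertices into an expected $O(1+n/\exp((i-1)/10))$ clusters, each of diameter at most $2\Delta_{i-1}$. If $v$ is inserted with cost greater than $4\Delta_{i-1}$ and some cluster-mate $u$ of $v$ were already in $T$, then $d(u,v)\leq2\Delta_{i-1}$ would contradict $\min_{u\in T}d(u,v)>2\Delta_{i-1}$; thus such a $v$ must be the first vertex of its cluster to enter the tour, which can happen at most once per cluster. Consequently bin $i$ receives at most $O(1+n/\exp((i-1)/10))$ insertions, each contributing at most $4\Delta_i$, so its expected contribution is $O(\frac{i}{\alpha n}(1+n/\exp((i-1)/10)))$. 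Summing over $i$ gives $\sum_i O(\frac{1}{\alpha}(\frac{i}{n}+\frac{i}{e^{(i-1)/10}}))=O(1/\alpha)$, exactly as in Theorem~\ref{theorem:NN}.

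For the approximation ratio I would proceed exactly as in Theorem~\ref{theorem:EGR}, splitting $\mathbb{E}[\mathsf{IN}_R/\mathsf{TSP}]$ according to whether $\mathsf{TSP}\geq c/\beta$ or not, for a sufficiently small constant $c>0$. On the first event the bound $O(\beta/\alpha)$ comes from the first part; on the complementary event I would use the worst-case ratio $O(\ln n)$ of the insertion heuristic~\cite{Rosenkrantz1977} together with $\mathsf{TSP}\geq\mathsf{MM}\geq S_{n/2}$ and the tail bound of Lemma~\ref{lemma:S} (with $\phi=1/2$) to show that $\mathbb{P}(\mathsf{TSP}<c/\beta)$ is exponentially small, making the second term $o(1)$. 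The main obstacle is really the middle step: pinning down the per-insertion cost estimate and the ``first vertex of a cluster'' counting argument so that the number of expensive insertions is controlled uniformly over the rule $R$; everything else is a transcription of the earlier proofs.
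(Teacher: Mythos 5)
Your proposal is correct and follows essentially the same route as the paper's proof: bin the insertions by their cost, apply Theorem~\ref{theorem:cluster} so that each expensive insertion must be the first vertex of its cluster to enter the tour (hence at most one per cluster), and handle the approximation ratio exactly as in Theorem~\ref{theorem:EGR}. The only differences are cosmetic --- you use clustering radius $\Delta_{i-1}/2$ and cost threshold $4\Delta_{i-1}$ where the paper uses $\Delta_{i-1}$ and $8\Delta_{i-1}$, you make explicit the triangle-inequality bound $2\min_{u\in T}d(u,v)$ on a single insertion that the paper leaves implicit, and you bound the initial tour by $O(1)\cdot\Delta_{\max}$ whereas the paper uses $\mathbb{E}[T_R]\leq\mathbb{E}[\mathsf{TSP}]\leq\mathbb{E}[\mathsf{NN}]=O(1/\alpha)$.
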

\begin{proof}
	The first part of the proof is similar to the proof of Theorem \ref{theorem:greedy}. Let $\Delta_i:=i/\alpha n$. We put the vertices inserted into the tour by the insertion heuristic into bins depending on the distance they add to the TSP tour, bin $i$ gets the vertices with contribution in the range $(8\Delta_{i-1},8\Delta_i]$. Using Lemma \ref{lemma:diameter}, we can show that the expected sum of all distances greater than or equal to $\Delta_{\omega(\ln(n))}$ is $o(1/\alpha)$, so we can ignore the corresponding bins in our analysis.
	
	We now estimate the contribution of the other bins to the distance of the TSP tour. By Theorem \ref{theorem:cluster}, we can partition the vertices in an expected number of $O(1+n/\exp((i-1)/5))$ clusters, each of diameter at most $4\Delta_{i-1}$. Every time the insertion heuristics adds a vertex that contributes more than $8\Delta_{i-1}$, this must be a vertex that is part of a cluster that is not part of the tour yet. Therefore, this can happen at most $O(1+n/\exp((i-1)/5))$ times in expectation. Therefore, bin $i$ can get at most $O(1+n/\exp((i-1)/5))$ vertices during the run of the insertion heuristic. So, the total contribution of bin $i$ is in expectation at most $O(\frac{i}{\alpha n}(1+n/\exp((i-1)/5)))$. Summing over all bins, and adding the contribution of the initial tour $T_R$ yields
	\begin{equation*}
	\mathbb{E}[\mathsf{IN}_R]=\mathbb{E}[T_R]+o\left(\frac1\alpha\right)+\sum_{i=1}^{O(\ln(n))}O\left(\frac1\alpha\left(\frac{i}{n}+\frac{i}{e^{(i-1)/5}}\right)\right)=O\left(\frac1\alpha\right),
	\end{equation*}
	since we can use Theorem \ref{theorem:NN} to bound the expected length of the initial tour by $\mathbb{E}[T_R]\leq\mathbb{E}[\mathsf{TSP}]\leq\mathbb{E}[\mathsf{NN}]=O(1/\alpha)$.
	Using the worst-case approximation ratio of the insertion heuristic for any rule $R$ on metric instances of $O(\ln(n))$~\cite{Rosenkrantz1977}, the proof for the expected approximation ratio is analogously to the proof of Theorem \ref{theorem:EGR}. Note that this entire proof is independent of the rule $R$ used.\blokje
\end{proof}

\paragraph{Running Time of 2-opt Heuristic for TSP.}
The 2-opt heuristic is an often used local search algorithm for the TSP: start with an initial tour on all vertices and improve the tour by 2-exchanges until no improvement can be made anymore. In a 2-exchange, the heuristic takes `edges' $\{v_1,v_2\}$ and $\{v_3,v_4\}$, where $v_1$, $v_2$, $v_3$, $v_4$ are visited in this order in the tour, and replaces them by $\{v_1,v_3\}$ and $\{v_2,v_4\}$ to create a shorter tour.


We provide an upper bound for the expected number of iterations that 2-opt needs. In the worst-case scenario, this number is exponential. However, for random shortest path metrics on complete graphs (for which the cut parameters are given by $\alpha=\beta=1$) an upper bound of $O(n^8\ln^3(n))$ is known for the expected number of iterations~\cite{Bringmann2015}. We extend this result with a similar proof to general values for $\alpha$ and $\beta$ and show an upper bound for the expected number of iterations of $O(n^8\ln^3(n)\beta/\alpha)$.

We first define the improvement obtained from a 2-exchange. If $\{v_1,v_2\}$ and $\{v_3,v_4\}$ are replaced by $\{v_1,v_3\}$ and $\{v_2,v_4\}$, then the improvement made by the exchange equals the change in distance $\zeta =d(v_1,v_2)+d(v_3,v_4) - d(v_1,v_3)-d(v_2,v_4)$. These four distances correspond to four shortest paths ($P_{12}$, $P_{34}$, $P_{13}$, $P_{24}$) in the graph $G=(V,E)$. This implies that we can rewrite $\zeta$ as the sum of the weights on these paths. We obtain $\zeta =\sum_{e \in E} \gamma_e w(e)$, for some $\gamma_e \in \{-2, -1, 0, 1, 2\}$.

Since we are looking at the improvement obtained by a 2-exchange, we have $\zeta >0$. This implies that there exists some $e=\{u,u'\}\in E$ such that $\gamma_e\neq 0 $. Given this edge $e$, let $I \subseteq \{P_{12},P_{34},P_{13},P_{24}\}$ be the set of all shortest paths of the 2-exchange that contain $e$. Then, for all combinations $e$ and $I$, let $\zeta_{ij}^{e,I}$ be defined as follows:
\begin{itemize}
	\item If $P_{ij} \notin I$, then $\zeta_{ij}^{e,I}$ is the length of the shortest path from $v_i$ to $v_j$ without using $e$.
	\item If $P_{ij} \in I$, then $\zeta_{ij}^{e,I}$ is the minimum of
		\begin{itemize}
			\item the length of a shortest path from $v_i$ to $u$ without using $e$ plus the length of a shortest path from $u'$ to $v_j$ without using $e$ and
			\item the length of a shortest path from $v_i$ to $u'$ without using $e$ plus the length of a shortest path from $u$ to $v_j$ without using $e$.
		\end{itemize}
\end{itemize}
Define $\zeta^{e,I} = \zeta_{12}^{e,I} +  \zeta_{34}^{e,I} - \zeta_{13}^{e,I} -  \zeta_{24}^{e,I}$.
\begin{lemma}
\label{lemma:zeta1}
For every outcome of the edge weights, there exists an edge $e$ and a set $I$ such that $\zeta = \zeta^{e,I} + \gamma w(e)$, where $\gamma \in\{-2,-1,1,2\}$ is determined by $e$ and $I$.
\end{lemma}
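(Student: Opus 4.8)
The plan is to take the edge $e$ guaranteed by the discussion preceding the lemma, set $I$ to be the collection of the four shortest paths that actually contain $e$, and then show that for each pair $(i,j)$ the quantity $\zeta_{ij}^{e,I}$ equals $d(v_i,v_j)$ minus the contribution of $w(e)$ to that particular distance. Summing these four identities with the signs appearing in $\zeta$ then yields $\zeta=\zeta^{e,I}+\gamma w(e)$, and the coefficient $\gamma$ turns out to be exactly the $\gamma_e$ from the expansion $\zeta=\sum_{e\in E}\gamma_e w(e)$, which is nonzero by the choice of $e$.

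Concretely, fix $e=\{u,u'\}$ with $\gamma_e\neq0$ (such an edge exists because $\zeta>0$), let $I$ consist of those $P_{ij}\in\{P_{12},P_{34},P_{13},P_{24}\}$ that contain $e$, and write $d'$ for shortest-path length in $G$ that is forbidden to use $e$. First I would dispose of the easy case $P_{ij}\notin I$: here the shortest $v_i,v_j$-path avoids $e$ altogether, so forbidding $e$ costs nothing and $\zeta_{ij}^{e,I}=d(v_i,v_j)$.

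The crux is the case $P_{ij}\in I$, where I claim that $\zeta_{ij}^{e,I}=d(v_i,v_j)-w(e)$. Since all edge weights are positive, the shortest path $P_{ij}$ is simple and hence crosses $e$ exactly once; writing the crossing (without loss of generality) as $u\to u'$, the two subpaths from $v_i$ to $u$ and from $u'$ to $v_j$ are themselves shortest paths, and neither uses $e$. This yields the upper bound
\begin{equation*}
\zeta_{ij}^{e,I}\leq d'(v_i,u)+d'(u',v_j)\leq d(v_i,v_j)-w(e),
\end{equation*}
where the first inequality holds because $\zeta_{ij}^{e,I}$ is defined as a minimum over the two orientations of $e$, and the second because the two subpaths of $P_{ij}$ are $e$-avoiding paths realizing the relevant distances. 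For the matching lower bound, each orientation in the definition of $\zeta_{ij}^{e,I}$, once $w(e)$ is added back, describes a genuine $v_i,v_j$-walk that crosses $e$, so $\zeta_{ij}^{e,I}+w(e)\geq d(v_i,v_j)$. The two bounds give the claimed equality.

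Finally, I would substitute these four identities into $\zeta=d(v_1,v_2)+d(v_3,v_4)-d(v_1,v_3)-d(v_2,v_4)$. Every pair contributes its $\zeta_{ij}^{e,I}$ term, and these assemble into $\zeta^{e,I}$; each pair belonging to $I$ additionally contributes $\pm w(e)$ with the sign that its distance carries in $\zeta$. Hence $\gamma$ is the signed count of the four paths using $e$, which is exactly $\gamma_e\in\{-2,-1,1,2\}$, nonzero by the choice of $e$. The main obstacle is the case $P_{ij}\in I$: one must argue carefully that a shortest path uses $e$ at most once and that the $e$-forbidding subpaths in the definition of $\zeta_{ij}^{e,I}$ neither overshoot nor undershoot the natural decomposition of $P_{ij}$ at its crossing of $e$, which is precisely why that definition takes a minimum over the two orientations of the edge.
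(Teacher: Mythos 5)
Your proposal is correct and follows the same route as the paper: choose an edge $e$ with $\gamma_e\neq 0$, take $I$ to be the set of the four shortest paths containing $e$, set $\gamma=\gamma_e$, and verify the identity from the definition of $\zeta^{e,I}$. The paper compresses the verification into ``the result follows from the definition''; you supply the missing details (the pairwise identities $\zeta_{ij}^{e,I}=d(v_i,v_j)$ for $P_{ij}\notin I$ and $\zeta_{ij}^{e,I}=d(v_i,v_j)-w(e)$ for $P_{ij}\in I$, proved by the two-sided bounds), and these details are accurate.
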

\begin{proof}
	Fix the edge weights arbitrarily and consider the four shortest paths from the 2-exchange. As previously stated there exists some edge $e$ with non-zero value $\gamma_e$. Choose this $e$, the corresponding set $I$ and take $\gamma = \gamma_e$. Then the result follows from the definition of $\zeta^{e,I}$.\blokje
\end{proof}
\begin{lemma}
\label{lemma:zeta2}
Let $e$ and $I$ be given with $\gamma = \gamma_e \neq 0$. Then $\mathbb{P}(\zeta^{e,I} +\gamma w(e) \in (0,x])\leq x$. Moreover, $\mathbb{P}(\zeta\in(0,x])=O(\beta n^2x)$.
\end{lemma}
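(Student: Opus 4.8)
The plan is to establish the two inequalities in sequence, with the first serving as the per-edge building block that feeds, via a union bound, into the second.

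For the first claim, the crucial observation is structural: by construction every shortest-path quantity entering $\zeta^{e,I}$ is computed on paths that avoid the edge $e$, so $\zeta^{e,I}$ is independent of $w(e)$. I would therefore condition on the realized value $\zeta^{e,I}=z$ and analyze the conditional event $z+\gamma w(e)\in(0,x]$. Solving for $w(e)$ confines it to an interval of length $x/|\gamma|\le x$; when $\gamma<0$ the inequalities reverse, but the interval length is unchanged. Since $w(e)\sim\Exp(1)$ has density bounded above by $1$ on all of $\mathbb{R}$, the probability that $w(e)$ falls into any interval of length at most $x$ is at most $x$. This bound is uniform in $z$, and because $\zeta^{e,I}$ is independent of $w(e)$, integrating over the law of $\zeta^{e,I}$ gives $\mathbb{P}(\zeta^{e,I}+\gamma w(e)\in(0,x])\le x$.

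For the second claim, I would invoke Lemma \ref{lemma:zeta1}: for every realization of the edge weights there is some pair $(e,I)$ with $\gamma=\gamma_e\neq0$ for which $\zeta=\zeta^{e,I}+\gamma w(e)$. Consequently the event $\{\zeta\in(0,x]\}$ is contained in the union, over all admissible pairs $(e,I)$ with $\gamma\neq0$, of the events $\{\zeta^{e,I}+\gamma w(e)\in(0,x]\}$. A union bound combined with the first claim yields $\mathbb{P}(\zeta\in(0,x])\le N\cdot x$, where $N$ counts the admissible pairs. To bound $N$, note there are at most $2^4=16$ choices for the subset $I\subseteq\{P_{12},P_{34},P_{13},P_{24}\}$, and the number of edges satisfies $|E|\le\beta n(n-1)/2$: applying the definition of $\beta$ to each singleton cut $\delta(\{u\})$ gives $\deg(u)\le\beta(n-1)$, and summing over vertices yields $2|E|\le\beta n(n-1)$. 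Hence $N=O(\beta n^2)$ and $\mathbb{P}(\zeta\in(0,x])=O(\beta n^2 x)$.

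I expect the main subtlety to lie in the independence argument underpinning the first claim: one must verify carefully that conditioning on $\zeta^{e,I}$ (a function only of weights other than $w(e)$) leaves $w(e)$ with its unconditioned $\Exp(1)$ law, which is precisely what the ``without using $e$'' stipulation in the definition of $\zeta^{e,I}$ guarantees. The remaining pieces — the interval-length computation and the counting of $(e,I)$ pairs — are routine once this independence is in hand.
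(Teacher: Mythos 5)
Your proposal is correct and follows essentially the same route as the paper: condition on (equivalently, fix) all weights except $w(e)$ so that $\zeta^{e,I}$ is determined, observe that the event confines $w(e)$ to an interval of length $x/|\gamma|\leq x$ where the $\Exp(1)$ density is at most $1$, and then combine Lemma~\ref{lemma:zeta1} with a union bound over the $O(\beta n^2)$ pairs $(e,I)$. Your explicit justification of $|E|\leq\beta n(n-1)/2$ via singleton cuts and the factor $16$ for the choices of $I$ are details the paper leaves implicit, but the argument is the same.
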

\begin{proof}
	Fix all edge weights except for $w(e)$. Then the value of $\zeta^{e,I}$ is known. Therefore we have $\zeta^{e,I} + \gamma w(e) \in (0,x]$ if and only if $w(e)$ takes a value in an interval of length $x/|\gamma| \leq x$. The first part of the result follows, since $w(e)$ is drawn from $\Exp(1)$ and the density function of this distribution does not exceed $1$.
	Observe that the number of possible choices for $e$ and $I$ is bounded by $|E|\leq\beta n(n-1)/2=O(\beta n^2)$. The second part of the result follows now using Lemma \ref{lemma:zeta1} and a union bound.\blokje
\end{proof}
\begin{theorem}
\label{theorem:2-opt}
The expected number of iterations of the 2-opt heuristic until a local optimum is found is bounded by $O(n^8\ln^3(n)\beta/\alpha)$.
\end{theorem}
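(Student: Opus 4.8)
The plan is to bound the number of iterations $N$ by the ratio between the total improvement that 2-opt can ever make and the smallest improvement achievable in a single step. Since every 2-exchange strictly decreases the length of the tour and tour lengths are non-negative, the total improvement over the whole run is at most the length of the initial tour, which is itself at most $n\Delta_{\max}$ (a tour has $n$ edges, each of length at most the diameter). Letting $\zeta_{\min}$ denote the minimum of $\zeta$ over all improving 2-exchanges that can possibly occur, each step gains at least $\zeta_{\min}$, so deterministically
\begin{equation*}
N \le \frac{n\Delta_{\max}}{\zeta_{\min}}.
\end{equation*}
It therefore suffices to control the joint behaviour of $\Delta_{\max}$, which is small by Lemma \ref{lemma:diameter}, and of $\zeta_{\min}$, which is unlikely to be tiny.

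First I would bound the lower tail of $\zeta_{\min}$. A 2-exchange is completely determined by the four vertices $v_1,v_2,v_3,v_4$ it involves, so only $O(n^4)$ distinct 2-exchanges can arise during any run. For each fixed one, Lemma \ref{lemma:zeta2} gives $\mathbb{P}(\zeta\in(0,x])=O(\beta n^2 x)$, where the factor $\beta n^2$ already absorbs the union over the critical edge $e$ and the set $I$. A further union bound over all $O(n^4)$ exchanges then yields
\begin{equation*}
\mathbb{P}(\zeta_{\min} \le x) = O(\beta n^6 x).
\end{equation*}

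Next I would combine the two ingredients through $\mathbb{E}[N]=\sum_{t\ge1}\mathbb{P}(N\ge t)$. From the deterministic bound, $\mathbb{P}(N\ge t)\le\mathbb{P}(\zeta_{\min}\le n\Delta_{\max}/t)$; splitting on the event $\{\Delta_{\max}\le D\}$ with $D=c\ln(n)/\alpha n$ and invoking Lemma \ref{lemma:diameter} gives
\begin{equation*}
\mathbb{P}(N\ge t)\le\mathbb{P}\!\left(\zeta_{\min}\le\frac{nD}{t}\right)+\mathbb{P}(\Delta_{\max}>D)\le O\!\left(\frac{c\beta n^6\ln(n)}{\alpha t}\right)+n^{2-c/4}.
\end{equation*}
Because 2-opt never revisits a tour, $N\le(n-1)!/2$ holds deterministically, so the sum over $t$ may be truncated there; its logarithm is $\Theta(n\ln n)$. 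Choosing $c=\Theta(n)$ makes the diameter-tail contribution $\tfrac12(n-1)!\cdot n^{2-c/4}$ negligible, while the first term contributes a harmonic sum up to $(n-1)!/2$, of order $O(c\beta n^6\ln(n)/\alpha)\cdot\Theta(n\ln n)$. Balancing the parameters in this way leads to the claimed bound $O(n^8\ln^3(n)\beta/\alpha)$.

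The main obstacle I anticipate is that $1/\zeta_{\min}$ is not integrable on its own: the linear tail estimate $\mathbb{P}(\zeta_{\min}\le x)=O(\beta n^6 x)$ controls $\zeta_{\min}$ only near $0$, and $\Delta_{\max}$ and $\zeta_{\min}$ are not independent. The remedy is exactly the deterministic worst-case cap $N\le(n-1)!/2$ together with the threshold $D$: one must take $c$ growing linearly in $n$ so that the super-polynomially small diameter tail beats the exponentially large number of summands, while simultaneously keeping the main harmonic-sum term at the target order. Getting this trade-off right — rather than the individual tail estimates, which follow routinely from Lemmas \ref{lemma:zeta2} and \ref{lemma:diameter} — is where the real work lies.
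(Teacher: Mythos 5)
Your proposal is correct and follows essentially the same route as the paper: the deterministic bound $T\le n\Delta_{\max}/\zeta_{\min}$, the union bound $\mathbb{P}(\zeta_{\min}\le y)=O(\beta n^6 y)$ over all $O(n^4)$ exchanges via Lemma~\ref{lemma:zeta2}, the diameter tail from Lemma~\ref{lemma:diameter}, and a truncated tail sum for $\mathbb{E}[T]$. The only difference is cosmetic: the paper splits with an $x$-dependent diameter threshold $c\ln(x)\ln(n)/\alpha n$ (for a constant $c$), whereas you use a single threshold with $c=\Theta(n)$; both choices beat the factorial number of summands, and yours in fact yields the marginally sharper $O(n^8\ln^2(n)\beta/\alpha)$.
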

\begin{proof}
	Let $\zeta_{\min}>0$ be the minimum improvement that can be made by any 2-exchange. The total number of different 2-exchanges is $O(n^4)$, so using Lemma \ref{lemma:zeta2} and a union bound we obtain $\mathbb{P}(\zeta_{\min}\leq y)=O(\beta n^6y)$.
	
	The initial tour has a length of at most $n\Delta_{\max}$. Let $T$ be the number of iterations taken by the 2-opt heuristic. Then we have $T\leq n\Delta_{\max}/\zeta_{\min}$. So, $T>x$ implies $\Delta_{\max}/\zeta_{\min} >x/n$. This event is contained in the union of the events $\Delta_{\max}> c\ln(x) \ln(n)/\alpha n$ and $\zeta_{\min} < c\ln(x)\ln(n)/\alpha x$, where $c$ is a sufficiently large constant. By Lemma \ref{lemma:diameter} the first event happens with probability at most $n^{2-c\ln(x)/4}=n^{-\Omega(\ln(x))}$. The second event happens with probability at most $O(\beta n^6\ln(n)\ln(x)/\alpha x)$. So, we have
	\begin{equation*}
	\mathbb{P}(T>x)\leq n^{-\Omega(\ln(x))}+O\left(\beta n^6\ln(n)\ln(x)/\alpha x\right).
	\end{equation*}
	The number of iterations is bounded by $n!$, so we obtain
	\begin{equation*}
	\mathbb{E}[T]\leq\sum_{x=1}^{n!}\left(n^{-\Omega(\ln(x))}+O\left(\beta n^6\ln(n)\ln(x)/\alpha x\right)\right).
	\end{equation*}
	The sum of the $n^{-\Omega(\ln(x))}$ contributes a negligible $O(\ln(n!))$. The sum of the remaining $O(\beta n^6\ln(n)\ln(x)/\alpha x)$ contributes $O(\beta n^6\ln(n)\ln^2(n!)/\alpha)=O(n^8\ln^3(n)\beta/\alpha)$.\blokje
\end{proof}

\paragraph{Trivial Heuristic for $k$-Median.}
The goal of the (metric) $k$-median problem is to find a set $U \subseteq V$ of size $k$ such that $\sum_{v \in V} \min_{u\in U} d(v,u)$ is minimized. The best known approximation algorithm for this problem achieves an approximation ratio of $2.675+\varepsilon$~\cite{Byrka2015}.

Here, we consider the $k$-median problem in the setting of generalized random shortest path metrics. We analyze a trivial heuristic for the $k$-median problem: simply pick $k$ vertices independently of the metric space, e.g., $U=\{v_1,\ldots,v_k\}$. The worst-case approximation ratio of this heuristic is unbounded, even if we restrict ourselves to metric instances. However, for random shortest path metrics on complete graphs (for which the cut parameters are given by $\alpha=\beta=1$) the expected approximation ratio has an upper bound of $O(1)$ and even $1+o(1)$ for $k$ sufficiently small~\cite{Bringmann2015}. We extend this result to general values for $\alpha$ and $\beta$ and give an upper bound for the expected approximation ratio of $O(\beta/\alpha)$ for `large' $k$ and $\beta/\alpha + o(\beta/\alpha)$ for $k$ sufficiently small.

For our analysis, let $U=\{v_1,\ldots, v_k\}$ be an arbitrary set of $k$ vertices. Sort the remaining vertices $\{v_{k+1}, \ldots, v_n\}$ in increasing distance from $U$. For $k+1 \leq i\leq n$, let $\rho_{i} = d(v_i,U)$ equal the distance from $U$ to the $(i-k)$-th closest vertex to $U$. Let $\mathsf{TR}$ denote the cost of the solution generated by the trivial heuristic and let $\mathsf{ME}$ be the cost of an optimal solution to the $k$-median problem.

Observe that the random variables $\rho_i$ are generated by a simple growth process analogously to the one described in Section \ref{sect:structural} for $\tau_k(v)$. Using this observation, we can see that
\begin{equation*}
\sum_{j=k}^{i-1}\Exp(\beta j(n-j)) \precsim \rho_i \precsim \sum_{j=k}^{i-1} \Exp(\alpha j(n-j)),
\end{equation*}
which in turn implies that $\mathsf{cost}(U)=\sum_{i=k+1}^n\rho_i$ is stochastically bounded by
\begin{equation*}
\sum_{i=k}^{n-1}\Exp(\beta i) \precsim \mathsf{cost}(U) \precsim \sum_{i=k}^{n-1} \Exp(\alpha i).
\end{equation*}
From this, we can immediately derive bounds for the expected value of the $k$-median returned by the trivial heuristic.
\begin{lemma}
\label{lemma:triviale}
Fix $U\subseteq V$ of size $k$. Then, we have $\mathbb{E}[\mathsf{TR}]=\mathbb{E}[\mathsf{cost}(U)]$ and
\begin{equation*}
\frac{1}{\beta}\left(\ln\left(\frac{n-1}{k-1}\right)-1\right)\leq\mathbb{E}[\mathsf{TR}]\leq\frac{1}{\alpha}\left(\ln\left(\frac{n-1}{k-1}\right)+1\right).
\end{equation*}
\end{lemma}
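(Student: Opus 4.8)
The plan is to observe first that the equality $\mathbb{E}[\mathsf{TR}]=\mathbb{E}[\mathsf{cost}(U)]$ is essentially definitional. The trivial heuristic commits to the fixed set $U$ before any edge weights are revealed, so on every outcome of the weights we have $\mathsf{TR}=\mathsf{cost}(U)=\sum_{i=k+1}^n\rho_i$, since the $k$ vertices of $U$ contribute distance zero. Taking expectations gives the identity and reduces the whole problem to estimating $\mathbb{E}[\mathsf{cost}(U)]$.

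Next I would pass to expectations in the stochastic-dominance bounds stated just above the lemma, namely $\sum_{i=k}^{n-1}\Exp(\beta i)\precsim\mathsf{cost}(U)\precsim\sum_{i=k}^{n-1}\Exp(\alpha i)$, which follow from the growth process for the $\rho_i$. Since stochastic dominance of nonnegative variables preserves the ordering of expectations, and since $\mathbb{E}[\Exp(\lambda)]=1/\lambda$, linearity of expectation yields $\frac{1}{\beta}(H_{n-1}-H_{k-1})\leq\mathbb{E}[\mathsf{cost}(U)]\leq\frac{1}{\alpha}(H_{n-1}-H_{k-1})$, where $H_{n-1}-H_{k-1}=\sum_{i=k}^{n-1}1/i$.

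The remaining work is to convert this partial harmonic sum into the logarithmic expression $\ln((n-1)/(k-1))$. For this I would use the standard sandwich $\ln(m+1)\leq H_m\leq\ln(m)+1$. For the lower bound, combining $H_{n-1}\geq\ln(n)$ with $H_{k-1}\leq\ln(k-1)+1$ gives $H_{n-1}-H_{k-1}\geq\ln(n/(k-1))-1\geq\ln((n-1)/(k-1))-1$; dividing by $\beta$ yields the left inequality. For the upper bound, combining $H_{n-1}\leq\ln(n-1)+1$ with $H_{k-1}\geq\ln(k)$ gives $H_{n-1}-H_{k-1}\leq\ln((n-1)/k)+1\leq\ln((n-1)/(k-1))+1$; dividing by $\alpha$ yields the right inequality.

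I expect no genuine obstacle here: the argument is a short expectation computation followed by routine harmonic-number estimates. The only point requiring a little care is keeping the direction of the $\pm1$ slack consistent with the appearance of $(n-1)/(k-1)$ rather than $n/k$, i.e.\ choosing which side of each harmonic bound to apply so that the monotonicity steps $\ln(n/(k-1))\geq\ln((n-1)/(k-1))$ and $\ln((n-1)/k)\leq\ln((n-1)/(k-1))$ point the right way. This implicitly uses $k\geq2$, so that $\ln(k-1)$ is defined, which is assumed in the statement.
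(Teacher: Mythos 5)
Your proposal is correct and follows essentially the same route as the paper: take expectations across the stochastic-dominance bounds for $\mathsf{cost}(U)$ to get $(H_{n-1}-H_{k-1})/\beta\leq\mathbb{E}[\mathsf{TR}]\leq(H_{n-1}-H_{k-1})/\alpha$, then apply the standard estimate $\ln(n)\leq H_n\leq\ln(n)+1$. Your version merely spells out the harmonic-number bookkeeping (and the implicit $k\geq2$) that the paper leaves to the reader.
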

\begin{proof}
	We have $(H_{n-1}-H_{k-1})/\beta=\sum_{i=k}^{n-1}1/\beta i\leq\mathbb{E}[\mathsf{TR}]\leq\sum_{i=k}^{n-1}1/\alpha i=(H_{n-1}-H_{k-1})/\alpha$. Using $\ln(n)\leq H_n\leq\ln(n)+1$ yields the result.\blokje
\end{proof}
Before we provide our result for the expected approximation ratio of the trivial heuristic, we first provide some tail bounds for the distribution of the optimal $k$-median $\mathsf{ME}$ and the trivial solution $\mathsf{TR}$.
\begin{lemma}
\label{lemma:densityu}
Fix $U\subseteq V$ of size $k$. Then the probability density function $f$ of $\sum_{i=k}^{n-1} \Exp(\beta i)$ is given by
\begin{equation*}
f(x)= \beta k \cdot\binom{n-1}{k} \cdot\exp(-\beta kx)\cdot\left(1-\exp(-\beta x)\right)^{n-k-1}.
\end{equation*}
\end{lemma}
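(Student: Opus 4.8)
The plan is to recognize the sum $\sum_{i=k}^{n-1}\Exp(\beta i)$ as the distribution of a single order statistic of independent and identically distributed exponentials, after which the density becomes a standard formula. Concretely, let $Y_1,\dots,Y_{n-1}$ be i.i.d.\ $\Exp(\beta)$ random variables with order statistics $Y_{(1)}\le\cdots\le Y_{(n-1)}$. The claim I would establish first is the distributional identity
\[
\sum_{i=k}^{n-1}\Exp(\beta i)\sim Y_{(n-k)},
\]
the $(n-k)$-th smallest of the $Y_j$. This is the natural companion of Lemma~\ref{technical:exp}: that lemma is exactly the statement that the top order statistic $Y_{(n-1)}=\max_j Y_j$ has distribution function $(1-e^{-\beta x})^{n-1}$, and here I only need the lower-order analogue.

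To prove the identity I would reuse the memorylessness argument already exploited for the birth process generating $\tau_k(v)$. The minimum $Y_{(1)}$ of the $n-1$ variables is $\Exp((n-1)\beta)$; conditioning on $Y_{(1)}$ and using memorylessness, the $n-2$ remaining excesses are again i.i.d.\ $\Exp(\beta)$ and independent of $Y_{(1)}$. Iterating yields the R\'enyi representation of the spacings: the increments $D_r:=Y_{(r)}-Y_{(r-1)}$ (with $Y_{(0)}:=0$) are independent with $D_r\sim\Exp((n-r)\beta)$ for $r=1,\dots,n-1$. Summing the first $n-k$ spacings gives $Y_{(n-k)}=\sum_{r=1}^{n-k}D_r$, and re-indexing via $i=n-r$ converts the rates $(n-r)\beta$ into precisely $\beta i$ for $i=k,\dots,n-1$, which is the displayed identity.

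The final step is routine substitution into the density of the $m$-th order statistic out of $N$ i.i.d.\ variables with common density $g$ and distribution function $G$, namely $\tfrac{N!}{(m-1)!(N-m)!}\,G(x)^{m-1}(1-G(x))^{N-m}g(x)$, taken with $N=n-1$, $m=n-k$, $G(x)=1-e^{-\beta x}$, and $g(x)=\beta e^{-\beta x}$. Here $m-1=n-k-1$ and $N-m=k-1$. Collecting the exponentials gives $(e^{-\beta x})^{k-1}\cdot\beta e^{-\beta x}=\beta e^{-\beta kx}$ together with the factor $(1-e^{-\beta x})^{n-k-1}$, and the prefactor simplifies as $\tfrac{(n-1)!}{(n-k-1)!(k-1)!}=k\binom{n-1}{k}$, so that the leading constant is $\beta k\binom{n-1}{k}$, matching the claimed $f(x)=\beta k\binom{n-1}{k}e^{-\beta kx}(1-e^{-\beta x})^{n-k-1}$.

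I expect no deep obstacle: the probabilistic content collapses to memorylessness, which is already the workhorse of the paper. The only points demanding care are the re-indexing (the spacing rates run in the opposite order to $\beta k,\dots,\beta(n-1)$) and the factorial bookkeeping. If one insisted on avoiding order statistics, the alternative is to write the density of a sum of exponentials with distinct rates via partial fractions as $\sum_{i=k}^{n-1}\bigl(\prod_{j\neq i}\tfrac{j}{j-i}\bigr)\beta i\,e^{-\beta i x}$ and then recognize this as the binomial expansion of the claimed product form; that route is where genuine computational effort would concentrate, which is exactly why the order-statistic viewpoint is the one I would take.
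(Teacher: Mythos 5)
Your proposal is correct and follows essentially the same route as the paper: the paper's proof also identifies $\sum_{i=k}^{n-1}\Exp(\beta i)$ with the $(n-k)$-th smallest of $n-1$ i.i.d.\ $\Exp(\beta)$ variables and then quotes the standard order-statistic density. You merely spell out the R\'enyi spacing argument and the factorial bookkeeping that the paper delegates to a citation.
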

\begin{proof}
	The distribution corresponds to the $(n-k)$-th smallest element out of $n-1$ independent, exponentially distributed random variables with parameter $\beta$. The density of this distribution is known~\cite[Example 2.38]{Ross2010}.\blokje
\end{proof}
\begin{lemma}
\label{lemma:ksmall}
Let $c>0$ be sufficiently large and let $k\leq c'n$ for $c'=c'(c) >0$ sufficiently small. Then we have
\begin{equation*}
\mathbb{P}\left(\mathsf{ME} \leq \left(\ln\left(\tfrac{n-1}{k}\right) - \ln\ln\left(\tfrac{n}{k}\right)-\ln(c)\right)/\beta\right)= n^{-\Omega(c)}.
\end{equation*}
\end{lemma}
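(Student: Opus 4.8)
The plan is to bound $\mathsf{ME}$ from below by controlling, through a union bound, the cost of \emph{every} candidate set $U$ simultaneously. Since $\mathsf{ME}=\min_{|U|=k}\mathsf{cost}(U)$ and there are $\binom{n}{k}$ such sets, I would first write
\[
\mathbb{P}(\mathsf{ME}\leq t)\leq\binom{n}{k}\cdot\max_{|U|=k}\mathbb{P}(\mathsf{cost}(U)\leq t),
\]
and then invoke the stochastic-dominance bound $\sum_{i=k}^{n-1}\Exp(\beta i)\precsim\mathsf{cost}(U)$ established just before this lemma, which gives $\mathbb{P}(\mathsf{cost}(U)\leq t)\leq\mathbb{P}(\sum_{i=k}^{n-1}\Exp(\beta i)\leq t)$ for each fixed $U$. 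Here $t=\bigl(\ln\tfrac{n-1}{k}-\ln\ln\tfrac{n}{k}-\ln c\bigr)/\beta$.

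The second step is to upper bound this single distribution function using Lemma \ref{lemma:densityu}. Since the density $f$ there contains the increasing factor $(1-\exp(-\beta x))^{n-k-1}$, I would replace it by its value at the upper limit $t$ and integrate the remaining $\beta k\binom{n-1}{k}\exp(-\beta kx)$ explicitly, obtaining
\[
\mathbb{P}\Bigl(\textstyle\sum_{i=k}^{n-1}\Exp(\beta i)\leq t\Bigr)\leq\binom{n-1}{k}\bigl(1-\exp(-\beta t)\bigr)^{n-k-1}.
\]
The point of the precise choice of $t$ is that $\exp(-\beta t)=\tfrac{ck}{n-1}\ln\tfrac nk$, so that the expected number of the $n-1$ underlying exponentials exceeding $t$ equals $ck\ln(n/k)$, a factor $c\ln(n/k)$ larger than the threshold $k$; this is exactly the slack that forces the lower tail to be small. (Equivalently, one may read $\sum_{i=k}^{n-1}\Exp(\beta i)$ as the $(n-k)$-th order statistic of $n-1$ i.i.d.\ $\Exp(\beta)$ variables and bound the associated binomial lower tail by a Chernoff inequality.)

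It then remains to combine the two bounds and show $\binom{n}{k}\binom{n-1}{k}(1-\exp(-\beta t))^{n-k-1}=n^{-\Omega(c)}$. Taking logarithms and using $\binom{n}{k},\binom{n-1}{k}\leq(en/k)^k$ together with $(1-\exp(-\beta t))^{n-k-1}\leq\exp(-(n-k-1)\exp(-\beta t))$, the exponent is at most $2k+2k\ln(n/k)-ck\ln(n/k)\cdot\tfrac{n-k-1}{n-1}$. For $k\leq c'n$ with $c'$ small the fraction $\tfrac{n-k-1}{n-1}$ is bounded below by a constant close to $1$, so for $c$ large the negative term dominates and the whole exponent is at most $-\Omega(ck\ln(n/k))$.

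The main obstacle is making this last estimate uniform over the entire range $1\leq k\leq c'n$: for small $k$ the factor $\ln(n/k)\approx\ln n$ is large while $k$ is tiny, whereas for $k$ near $c'n$ the factor $\ln(n/k)$ is only a constant but $k$ itself is large. The clean way around this is the elementary inequality $k\ln(n/k)\geq\ln n$, valid for all $1\leq k\leq n/e$ (hence for $c'<1/e$), which shows $ck\ln(n/k)=\Omega(c\ln n)$ throughout and yields the claimed $n^{-\Omega(c)}$. I would also choose $c'=c'(c)$ small enough that the constant-factor loss in $\tfrac{n-k-1}{n-1}$ and the additive $2k$ term are comfortably absorbed into the dominant $-ck\ln(n/k)$ contribution.
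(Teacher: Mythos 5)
Your proposal is correct, and its skeleton matches the paper's: union bound over the $\binom{n}{k}$ candidate sets, stochastic domination of $\mathsf{cost}(U)$ by $\sum_{i=k}^{n-1}\Exp(\beta i)$, the same threshold $t$ (chosen so that $e^{-\beta t}=ck\ln(n/k)/(n-1)$), and the same final absorption step $(n/k)^{k}\geq n$ (your inequality $k\ln(n/k)\geq\ln n$ for $k\leq n/e$ is exactly this). Where you genuinely diverge is in how the lower tail of $\sum_{i=k}^{n-1}\Exp(\beta i)$ is estimated. The paper works with the density from Lemma~\ref{lemma:densityu} pointwise: it bounds $f(x)$ at $x=\ln((n-1)/ak)/\beta$, rewrites this as $f(x)\leq\beta\exp(-\Omega((n-1)e^{-\beta x}))$, and integrates with a change of variables to get $\exp(-\Omega(bk))$ with $b=c\ln(n/k)$. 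You instead bound the CDF in one stroke by pulling the monotone factor $(1-e^{-\beta x})^{n-k-1}$ out of the integral at its endpoint value (equivalently, reading the sum as the $(n-k)$-th order statistic of $n-1$ i.i.d.\ $\Exp(\beta)$ variables and applying a binomial lower-tail Chernoff bound), which yields $\binom{n-1}{k}(1-e^{-\beta t})^{n-k-1}\leq\exp\bigl(k\ln(en/k)-(n-k-1)e^{-\beta t}\bigr)$. The extra $\binom{n-1}{k}$ you incur is absorbed exactly as the union-bound factor is, and your handling of the uniformity over $1\leq k\leq c'n$ is the same as the paper's. Your route is arguably cleaner: it avoids the paper's somewhat delicate ``absorb the lower-order factors into the $\Omega$'' step for the density. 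The only point worth making explicit in a write-up is that the threshold $t$ is nonnegative, i.e.\ $(n-1)/k\geq c\ln(n/k)$, which is guaranteed by taking $c'=c'(c)$ small enough (the paper's condition $b\leq(n-1)/k$ plays the same role).
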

\begin{proof}
	We first want a bound for $f(x)$ at $x=\ln((n-1)/ak)/\beta$ for sufficiently large $a$ with $1\leq a\leq (n-1)/k$. For this particular value of $x$, by Lemma \ref{lemma:densityu} we have,
	\begin{equation*}
	f(x)=\beta k \cdot\binom{n-1}{k}\cdot\frac{(ak)^k(n-1-ak)^{n-k-1}}{(n-1)^{n-1}}\leq\beta k (ae)^k\left(1-\frac{ak}{n-1}\right)^{n-k-1},
	\end{equation*}
	where we used $\binom{n-1}{k} \leq ((n-1)e/k)^k$ for the inequality. Since $1+x\leq\exp(x)$ and $(n-k-1)/(n-1)=\Omega(1)$ (since $k$ is sufficiently small), we obtain
	\begin{equation*}
	f(x)\leq\beta k(ae)^k\exp(-\Omega(ak)).
	\end{equation*}
	Since $a$ is sufficiently large, the first factors (without the $\beta$) are lower order terms that can be hidden by the $\Omega$. This implies that $f(x)\leq\beta\exp(-\Omega(ak))$. Substituting $a=(n-1)\exp(-\beta x)/k$ into this yields
	\begin{equation*}
	f(x)\leq\beta\exp(-\Omega((n-1)\exp(-\beta x))),
	\end{equation*}
	which holds for $x\in[0,\ln((n-1)/bk)/\beta]$ for $b \geq 1$ sufficiently large. Recall that $\mathsf{cost}(U)\succsim\sum_{i=k}^{n-1}\Exp(\beta i)$. So, we have $\mathbb{P}(\mathsf{cost}(U)<\ln((n-1)/bk)/\beta)\leq\mathbb{P}(\sum_{i=k}^{n-1}\Exp(\beta i)<\ln((n-1)/bk)/\beta)$. This latter probability is equal to
	\begin{align*}
	\int_{0}^{\ln\left(\frac{n-1}{bk}\right)/\beta} f(x)\,\mathrm{d}x &= \int_{0}^{\ln\left(\frac{n-1}{bk}\right)/\beta} f\left(\ln\left(\tfrac{n-1}{bk}\right)/\beta-x\right)\,\mathrm{d}x\\
	&\leq \int_{0}^{\ln\left(\frac{n-1}{bk}\right)/\beta} \beta\exp\left(-\Omega(bk\exp(\beta x))\right)\,\mathrm{d}x\\
	&\leq \int_{0}^{\ln\left(\frac{n-1}{bk}\right)} \exp\left(-\Omega(bk\exp(x))\right)\,\mathrm{d}x\\
	&\leq \int_{0}^{\infty}\exp\left(-\Omega(bk(1+x))\right)\,\mathrm{d}x\leq\exp(-\Omega(bk)),
	\end{align*}
	where the last step follows from the fact that $\int_{0}^{\infty} \exp(-\Omega(bkx))\,\mathrm{d}x = O(1/bk) \leq 1$ as $b$ is sufficiently large.
	
	In order for $\mathsf{ME}$ to be small, there must exist a subset $U \subseteq V$ of size $k$ that has low cost. We bound this probability by taking a union bound, which yields
	\begin{align*}
	\mathbb{P}\left(\mathsf{ME} < \ln\left(\tfrac{n-1}{bk}\right)/\beta\right) &= \mathbb{P}\left(\exists\, U \subseteq V,\, |U|=k:\mathsf{cost}(U) <\ln\left(\tfrac{n-1}{bk}\right)/\beta\right)\\
	&\leq\binom{n}{k} \cdot \mathbb{P}\left(\mathsf{cost}(U) <\ln\left(\tfrac{n-1}{bk}\right)/\beta\right)\\
	&\leq\binom{n}{k} \cdot \exp(-\Omega(bk)).
	\end{align*}
	Set $b=c\ln(n/k)$ for sufficiently large $c\geq 1$. Then we fulfill the condition that $b\geq 1$ and sufficiently large. Combining this with $\binom{n}{k}\leq(ne/k)^k$ yields
	\begin{equation*}
	\mathbb{P}\left(\mathsf{ME} < \left(\ln\left(\tfrac{n-1}{k}\right) -\ln\ln\left(\tfrac{n}{k}\right)-\ln(c)\right)/\beta\right)\leq\left(\frac{en}{k}\right)\cdot\left(\frac{n}{k}\right)^{-\Omega(ck)}.
	\end{equation*}
	Since $k$ is sufficiently smaller than $n$, we have $en/k \leq (n/k)^2$. As $c$ is sufficiently large, we can simplify the right hand side to $(n/k)^{-\Omega(ck)}$. Finally, since $k\geq 1$ and $k$ is sufficiently smaller than $n$, we have $(n/k)^k\geq n$. This implies $(n/k)^{-\Omega(ck)}\leq n^{-\Omega(c)}$, which competes the proof.\blokje
\end{proof}
\begin{lemma}
\label{lemma:klarge}
Let $k\leq (1-\varepsilon) n$ for some constant $\varepsilon >0$. For every $c\in [0,2\varepsilon^2)$, we have
\begin{equation*}
\mathbb{P}\left(\mathsf{ME} \leq  c/\beta \right)\leq c^{\Omega(n)}.
\end{equation*}
\end{lemma}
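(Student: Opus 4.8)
The plan is to reduce the statement to a tail bound on the sum of the lightest edge weights of $G$, so that Lemma~\ref{lemma:S} does the heavy lifting, and thereby to avoid any union bound over the $\binom{n}{k}$ candidate center sets (which would be prohibitively large, namely $2^{\Theta(n)}$, in this ``large $k$'' regime — this is exactly what makes the present lemma different from the small-$k$ case handled in Lemma~\ref{lemma:ksmall}). Concretely, I would first prove the deterministic lower bound $\mathsf{ME}\geq S_{n-k}$, in the same spirit as the inequality $\mathsf{MM}\geq S_{n/2}$ from Lemma~\ref{lemma:S}, and then write $\mathbb{P}(\mathsf{ME}\leq c/\beta)\leq\mathbb{P}(S_{n-k}\leq c/\beta)$ and invoke Lemma~\ref{lemma:S} with $\phi=(n-k)/n$.

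For the lower bound, fix an optimal center set $U^*$ with $|U^*|=k$, and for each non-center $v$ let $P_v$ be a shortest path realizing $d(v,U^*)$; let $H$ be the subgraph of $G$ whose edge set is the union of the $P_v$. Every non-center lies in $H$ and is joined inside $H$ to a center, so each connected component of $H$ contains at least one center. Writing $r$ for the number of distinct centers appearing in $H$, there are therefore at most $r$ components, while $V(H)$ consists of the $n-k$ non-centers together with these $r$ centers; hence $|E(H)|\geq|V(H)|-r=n-k$. Since $\mathsf{ME}=\sum_{v\notin U^*}w(P_v)$ counts every edge of $H$ at least once, we get $\mathsf{ME}\geq\sum_{e\in E(H)}w(e)\geq S_{n-k}$, the last step because the weights of any $n-k$ distinct edges sum to at least those of the $n-k$ lightest.

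It then remains to feed this into Lemma~\ref{lemma:S} with $\phi=(n-k)/n$. The hypothesis $k\leq(1-\varepsilon)n$ gives $\phi\geq\varepsilon$ (and $\phi\leq(n-1)/n$ since $k\geq1$), and for $c$ in the admissible range Lemma~\ref{lemma:S} yields $\mathbb{P}(\mathsf{ME}\leq c/\beta)\leq\exp(\phi n(2+\ln(c/2\phi^2)))=(e^2c/2\phi^2)^{\phi n}$. Using $\phi\geq\varepsilon$ together with $c<2\varepsilon^2$, this is at most $((e^2/2\varepsilon^2)\,c)^{\varepsilon n}$, i.e.\ of the form $(\mathrm{const}(\varepsilon)\cdot c)^{\Omega(n)}$, which collapses to $c^{\Omega(n)}$ once $c$ is a sufficiently small constant multiple of $\varepsilon^2$.

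The main obstacle I anticipate is precisely this last simplification. The bound $(e^2c/2\phi^2)^{\phi n}$ is genuinely decaying only once $c<2\phi^2/e^2$, so pushing the admissible range all the way up to $2\varepsilon^2$ (rather than a smaller constant times $\varepsilon^2$) requires care: one must either choose $\phi$ cleverly, use a sharper tail estimate for $S_{n-k}$ than the one recorded in Lemma~\ref{lemma:S}, or absorb the leftover constant factors into the hidden constant in the exponent $\Omega(n)$. The edge-counting lower bound itself is routine; essentially all of the delicate bookkeeping sits in matching the constant in front of $\varepsilon^2$.
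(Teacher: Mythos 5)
Your proposal is correct and takes essentially the same route as the paper: the paper likewise reduces to $\mathsf{ME}\geq S_{n-k}\geq S_{\varepsilon n}$ and invokes Lemma~\ref{lemma:S} (with $\phi=\varepsilon$ rather than your $\phi=(n-k)/n$, an immaterial difference), merely asserting the edge-counting step that you spell out in detail. The range/constant issue you flag at the end is present in the paper's own argument too, since Lemma~\ref{lemma:S} only admits $c\leq 2\phi^2/e$; outside that range the claimed bound is trivial or the lemma is only ever applied for sufficiently small constant $c$, so this does not affect how the result is used.
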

\begin{proof}
	The value of $\mathsf{ME}$ is the sum of $n-k$ shortest path lengths in $G$. The union of these paths contains at least $n-k$ different edges from $G$. Let $S_m$ be the sum of the $m$ lightest edge weights in $G$. We obtain $\mathsf{ME} \geq S_{n-k}\geq S_{\varepsilon n}$. The result follows using Lemma \ref{lemma:S} with $\phi=\varepsilon$.\blokje
\end{proof}
\begin{lemma}
\label{lemma:trivialp}
For any $c\geq 4$ we have $\mathbb{P}\left(\mathsf{TR} > n^c\right) \leq \exp(-n^{c/4})$.
\end{lemma}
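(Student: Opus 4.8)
The plan is to reduce the upper tail of $\mathsf{TR}$ to the upper tail of the diameter $\Delta_{\max}$, which is already controlled by Lemma~\ref{lemma:diameter}. Since the trivial heuristic selects $U$ independently of the metric, we have $\mathsf{TR}=\mathsf{cost}(U)=\sum_{i=k+1}^n\rho_i$ for the fixed set $U$, and each $\rho_i=d(v_i,U)=\min_{u\in U}d(v_i,u)$ is a distance in the metric and hence at most $\Delta_{\max}$. As there are $n-k\leq n$ summands, this yields the deterministic bound $\mathsf{TR}\leq n\Delta_{\max}$, so that $\mathbb{P}(\mathsf{TR}>n^c)\leq\mathbb{P}(\Delta_{\max}>n^{c-1})$. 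Everything then comes down to a single application of Lemma~\ref{lemma:diameter}, for which we have enormous slack, since $n^c$ lies polynomially far above the $O(\ln(n)/\alpha)$ scale of $\mathbb{E}[\mathsf{TR}]$ (Lemma~\ref{lemma:triviale}).

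Next I would rewrite the threshold $n^{c-1}$ in the form required by Lemma~\ref{lemma:diameter}. Setting $c':=\alpha n^c/\ln(n)$ gives $c'\ln(n)/(\alpha n)=n^{c-1}$, and the derivation of Lemma~\ref{lemma:diameter} (a union bound followed by Bernoulli's inequality applied to $n^{-c'/4}\in[0,1]$) remains valid for this choice, even though $c'$ now depends on $n$. This produces
\begin{equation*}
\mathbb{P}(\Delta_{\max}>n^{c-1})\leq n^{2-c'/4}=\exp\bigl(2\ln(n)-\alpha n^c/4\bigr).
\end{equation*}

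The one remaining point, and the crux of the argument, is to eliminate the dependence on $\alpha$, which is unavoidable because the diameter tail is naturally measured on the scale $\ln(n)/\alpha n$. Here I would invoke the universal lower bound on $\alpha$ for connected graphs: since $G$ is connected, every proper nonempty cut satisfies $|\delta(U)|\geq1$, while $\mu_U=|U|(|V|-|U|)\leq n^2/4$, so $\alpha\geq 4/n^2$. Substituting this bound gives $\alpha n^c/4\geq n^{c-2}$, whence $\mathbb{P}(\mathsf{TR}>n^c)\leq\exp\bigl(2\ln(n)-n^{c-2}\bigr)$. For $c\geq4$ we have $c-2>c/4$, so for large $n$ the term $n^{c-2}$ dominates both $2\ln(n)$ and $n^{c/4}$, and we conclude $\mathbb{P}(\mathsf{TR}>n^c)\leq\exp(-n^{c/4})$, as claimed. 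I expect the worst-case handling of $\alpha$ to be the only genuine subtlety; all other inequalities have so much room that a crude bound such as $\mathsf{TR}\leq n\Delta_{\max}$ is more than sufficient, and the precise constant $4$ in $\alpha\geq 4/n^2$ together with the requirement $c\geq4$ are exactly what make the exponent $n^{c/4}$ come out cleanly.
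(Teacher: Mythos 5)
Your argument is correct, but it takes a different route from the paper's after the common first step $\mathsf{TR}\leq n\Delta_{\max}$. The paper continues with the cruder deterministic bound $\Delta_{\max}\leq n\max_e\{w(e)\}$ (a shortest path uses at most $n-1$ edges), so that everything reduces to the tail of the maximum of at most $\beta n(n-1)/2$ independent $\Exp(1)$ variables: $\mathbb{P}(\mathsf{TR}\leq n^c)\geq(1-\exp(-n^{c-2}))^{\beta n(n-1)/2}\geq1-\exp(-n^{c/4})$ by Bernoulli. You instead invoke the diameter tail bound of Lemma~\ref{lemma:diameter} with an $n$-dependent parameter $c'=\alpha n^c/\ln(n)$ and then eliminate $\alpha$ via the worst-case bound $\alpha\geq4/n^2$ valid for every connected graph. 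Both steps of yours are sound: the proof of Lemma~\ref{lemma:diameter} indeed goes through verbatim for $n$-dependent $c'$ (you are right to flag and justify this, since the lemma is stated only for fixed $c$), and the final comparison $n^{c-2}\geq n^{c/4}+2\ln(n)$ holds for $c\geq4$ and all $n\geq2$, not just large $n$. The paper's route is slightly more self-contained (it never touches $\alpha$, so no worst-case cut-parameter bound is needed and no lemma has to be re-derived outside its stated hypotheses), while yours reuses the existing diameter machinery at the cost of the extra observation $\alpha\geq4/n^2$; the enormous slack in the exponent makes both equally serviceable.
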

\begin{proof}
	We can roughly bound $\mathsf{TR}$ by $n\Delta_{\max}$, which in turn can be roughly bounded by $n^2\max_e\{w(e)\}$. Since $\max_e\{w(e)\}$ is the maximum of $|E|\leq\beta n(n-1)/2$ independent exponentially distributed random variables with parameter $1$, we have
	\begin{align*}
	\mathbb{P}\left(\mathsf{TR} \leq n^c\right) &\geq \left(1-\exp\left(-n^{c-2}\right)\right)^{\beta n(n-1)/2}\geq 1- \tfrac12\beta n(n-1) \cdot \exp\left(-n^{c-2}\right)\\
	&\geq 1-\exp\left(-n^{c-3}\right) \geq 1-\exp\left(-n^{c/4}\right).
	\end{align*}
	The result follows by taking the complement.\blokje
\end{proof}
Now we have obtained everything needed to provide an upper bound for the expected approximation ratio of the trivial heuristic.
\begin{theorem}
\label{theorem:trivial}
Let $k\leq (1-\varepsilon)n$ for some constant $\varepsilon>0$. For generalized random shortest path metrics, we have $\mathbb{E}\left[\frac{\mathsf{TR}}{\mathsf{ME}}\right] = O\left(\beta/\alpha\right)$. Moreover, if we have $k\leq c'n$ for some fixed $c'\in(0,1)$ sufficiently small, then we have
\begin{equation*}
\mathbb{E}\left[\tfrac{\mathsf{TR}}{\mathsf{ME}}\right] = (\beta/\alpha)\cdot\left(1+O\left(\tfrac{\ln\ln(n/k)}{\ln(n/k)}\right)\right).
\end{equation*}
\end{theorem}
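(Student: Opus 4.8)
The plan is to separate the typical behaviour of the optimum $\mathsf{ME}$ from the rare event that it is atypically small. Fixing a threshold $t>0$, I would write
\[
\mathbb{E}\!\left[\frac{\mathsf{TR}}{\mathsf{ME}}\right]
=\mathbb{E}\!\left[\frac{\mathsf{TR}}{\mathsf{ME}}\,\mathbf{1}[\mathsf{ME}\geq t]\right]
+\mathbb{E}\!\left[\frac{\mathsf{TR}}{\mathsf{ME}}\,\mathbf{1}[\mathsf{ME}<t]\right].
\]
On the first event $\mathsf{TR}/\mathsf{ME}\leq\mathsf{TR}/t$, so this contribution is at most $\mathbb{E}[\mathsf{TR}]/t$, which Lemma~\ref{lemma:triviale} controls; the whole argument then reduces to taking $t$ as large as the left-tail bounds on $\mathsf{ME}$ allow and showing that the second (error) term is of lower order.

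For the refined estimate with $k$ small I would choose $t$ to be exactly the threshold of Lemma~\ref{lemma:ksmall}, i.e.\ $t=\bigl(\ln\tfrac{n-1}{k}-\ln\ln\tfrac{n}{k}-\ln c\bigr)/\beta$ for a large constant $c$. Combining this with the upper bound $\mathbb{E}[\mathsf{TR}]\leq\tfrac1\alpha(\ln\tfrac{n-1}{k-1}+1)$ from Lemma~\ref{lemma:triviale}, the main term becomes $\tfrac{\beta}{\alpha}$ times the quotient of $\ln\tfrac{n-1}{k-1}+1$ and the denominator above; since both of these equal $\ln\tfrac nk$ up to additive $O(\ln\ln\tfrac nk)$ terms, the quotient is $1+O\bigl(\ln\ln(n/k)/\ln(n/k)\bigr)$, which is exactly the claimed factor. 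For the plain $O(\beta/\alpha)$ bound over the whole range $k\leq(1-\varepsilon)n$ I would split on $k$: when $k\leq c'n$ the refined bound already gives $O(\beta/\alpha)$ (the correction is at worst $O(1)$), and when $c'n<k\leq(1-\varepsilon)n$ we have $\ln(n/k)=O(1)$, hence $\mathbb{E}[\mathsf{TR}]=O(1/\alpha)$, so taking the constant threshold $t=c_0/\beta$ with $c_0\in(0,2\varepsilon^2)$ makes the main term $O(\beta/\alpha)$ and lets Lemma~\ref{lemma:klarge} handle the error.

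The main obstacle is the error term $\mathbb{E}[\mathsf{TR}\,\mathsf{ME}^{-1}\mathbf{1}[\mathsf{ME}<t]]$, since $1/\mathsf{ME}$ is unbounded and $\mathsf{TR},\mathsf{ME}$ are dependent, so one cannot simply multiply $\mathbb{P}(\mathsf{ME}<t)$ by a worst-case ratio (which is unbounded here). I would first peel off the event $\{\mathsf{TR}>n^{c''}\}$ using Lemma~\ref{lemma:trivialp} and bound its contribution by Cauchy--Schwarz, $\sqrt{\mathbb{E}[\mathsf{TR}^2\mathsf{ME}^{-2}]}\cdot\sqrt{\mathbb{P}(\mathsf{TR}>n^{c''})}$. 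The key point is that $\mathsf{ME}$ has finite negative moments that grow at most polynomially in $n$: the left tail $\mathbb{P}(\mathsf{ME}<x)\leq(\beta x)^{\Omega(n)}$ of Lemma~\ref{lemma:klarge} dominates any fixed power $x^{-p}$ near the origin, so the corresponding layer-cake integrals converge; together with the finite positive moments of $\mathsf{TR}$ (which follow from $\mathsf{TR}\leq n\Delta_{\max}$ and the diameter tail of Lemma~\ref{lemma:diameter}) the first factor is polynomial in $n$, while $\mathbb{P}(\mathsf{TR}>n^{c''})\leq\exp(-n^{c''/4})$ is super-exponentially small. On the complementary event I would bound $\mathsf{TR}\leq n^{c''}$ and estimate $\mathbb{E}[\mathsf{ME}^{-1}\mathbf{1}[\mathsf{ME}<t]]$ via the layer-cake identity $\tfrac1t\mathbb{P}(\mathsf{ME}<t)+\int_0^t x^{-2}\mathbb{P}(\mathsf{ME}<x)\,\mathrm{d}x$, inserting Lemma~\ref{lemma:klarge} for small $x$ and Lemma~\ref{lemma:ksmall} near $t$. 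Because these tails decay polynomially fast in $n$ (as $n^{-\Omega(c)}$ near $t$ and as $x^{\Omega(n)}$ near $0$), choosing $c$ large compared to $c''$ makes the whole error term $o(\beta/\alpha)$ --- indeed small enough to be absorbed into the $O(\ln\ln(n/k)/\ln(n/k))$ correction --- which finishes both parts of the theorem.
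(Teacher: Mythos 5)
Your proposal is correct and follows the paper's proof almost exactly: the same threshold decomposition of $\mathbb{E}[\mathsf{TR}/\mathsf{ME}]$, the same threshold choices (the quantity $\bigl(\ln\tfrac{n-1}{k}-\ln\ln\tfrac nk-\ln c\bigr)/\beta$ from Lemma~\ref{lemma:ksmall} when $k\leq c'n$, and a constant below $2\varepsilon^2$ handled by Lemma~\ref{lemma:klarge} when $c'n<k\leq(1-\varepsilon)n$), and the same main-term computation via Lemma~\ref{lemma:triviale}. The only divergence is in the bookkeeping for the error term --- the paper integrates the conditional tail of the ratio, splits the integral at $n^8$, and uses that $\mathsf{TR}/\mathsf{ME}\geq x$ forces $\mathsf{TR}\geq\sqrt{x}$ or $\mathsf{ME}\leq 1/(\beta\sqrt{x})$, whereas you peel off $\{\mathsf{TR}>n^{c''}\}$ via Cauchy--Schwarz with negative moments of $\mathsf{ME}$ and then apply a layer-cake bound --- but both routes rest on the same two tail bounds (Lemmas~\ref{lemma:klarge} and~\ref{lemma:trivialp}) and yield the same $o(1)$ contribution.
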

\begin{proof}
	We have for all constants $m>0$
	\begin{equation*}
	\mathbb{E}\left[\frac{\mathsf{TR}}{\mathsf{ME}}\right]  \leq \mathbb{E}\left[\frac{\beta\cdot\mathsf{TR}}{m}\right] +\mathbb{P}\left(\mathsf{ME}<\frac{m}{\beta}\right)\cdot \mathbb{E}\left[\frac{\mathsf{TR}}{\mathsf{ME}}\;\middle|\; \mathsf{ME} < \frac{m}{\beta}\right].
	\end{equation*}
	\emph{Case 1 ($k\leq c'n$, $c'$ sufficiently small)}: Let $n$ be sufficiently large. According to Lemma \ref{lemma:ksmall} we can pick a constant $c>0$ sufficiently large such that
	\begin{equation*}
	\mathbb{P}\left(\mathsf{ME} \leq  \left(\ln\left(\tfrac{n-1}{k}\right) - \ln\ln\left(\tfrac{n}{k}\right)-\ln(c)\right)/\beta\right) \leq n^{-9}.
	\end{equation*}
	Take $m=\ln((n-1)/k)-\ln\ln(n/k)-\ln(c)$. By Lemma~\ref{lemma:triviale}, we have
	\begin{equation*}
	\mathbb{E}\left[\frac{\beta\cdot\mathsf{TR}}{m}\right] \leq \frac\beta\alpha\cdot\frac{\ln\left(\frac{n-1}{k-1}\right)+1}{m}\leq\frac\beta\alpha\cdot\left(1+O\left(\frac{\ln\ln(n/k)}{\ln(n/k)}\right)\right).
	\end{equation*}
	For the second part we can use the fact that $m$ was chosen such that $\mathbb{P}(\mathsf{ME}\leq m/\beta)\leq n^{-9}$ to obtain
	\begin{align*}
	\mathbb{P}\left(\mathsf{ME}<\frac{m}{\beta}\right)\cdot \mathbb{E}\left[\frac{\mathsf{TR}}{\mathsf{ME}}\;\middle|\; \mathsf{ME} < \frac{m}{\beta}\right] &= \mathbb{P}\left(\mathsf{ME}<\frac{m}{\beta}\right) \cdot \int_0^\infty \mathbb{P}\left(\frac{\mathsf{TR}}{\mathsf{ME}} \geq x\;\middle|\; \mathsf{ME} < \frac{m}{\beta}\right) \mathrm{d}x\\
	&\leq \mathbb{P}\left(\mathsf{ME}<\frac{m}{\beta}\right) \cdot\left(n^8+ \int_{n^8}^\infty \mathbb{P}\left(\frac{\mathsf{TR}}{\mathsf{ME}} \geq x\;\middle|\; \mathsf{ME} < \frac{m}{\beta}\right) \mathrm{d}x\right)\\
	&\leq \frac1n+ \int_{n^8}^\infty \mathbb{P}\left(\frac{\mathsf{TR}}{\mathsf{ME}} \geq x\; \text{and} \; \mathsf{ME} < \frac{m}{\beta}\right) \mathrm{d}x\\
	&\leq \frac1n+ \int_{n^8}^\infty \mathbb{P}\left(\frac{\mathsf{TR}}{\mathsf{ME}} \geq x\right) \mathrm{d}x\\
	&\leq \frac1n+ \int_{n^8}^\infty\mathbb{P}\left(\mathsf{TR}\geq \sqrt{x}\right) \mathrm{d}x+\int_{n^8}^\infty \mathbb{P}\left(\mathsf{ME}\leq \frac{1}{\beta \sqrt{x}}\right) \mathrm{d}x,
	\intertext{where the last inequality follows since $\mathsf{TR}/\mathsf{ME} \geq x$ implies $\mathsf{TR} \geq \sqrt{x}$ or $\mathsf{ME} \leq 1/\sqrt{x} \leq 1/\beta \sqrt{x}$. Note that the requirements for applying Lemmas \ref{lemma:klarge} and \ref{lemma:trivialp} to the corresponding probabilities are met for any $x\in[n^8,\infty)$. Upon applying those we obtain}
	\mathbb{P}\left(\mathsf{ME}<\frac{m}{\beta}\right)\cdot \mathbb{E}\left[\frac{\mathsf{TR}}{\mathsf{ME}}\;\middle|\; \mathsf{ME} < \frac{m}{\beta}\right] &\leq \frac1n+ \int_{n^8}^\infty\exp\left(-x^{1/8}\right) \mathrm{d}x+\int_{n^8}^\infty\left(\frac{1}{\sqrt{x}}\right)^{\Omega(n)}\mathrm{d}x\\
	&= O\left(\frac{1}{n}\right).
	\end{align*}
	\emph{Case 2 ($c'n<k\leq (1-\varepsilon)n$, $\varepsilon>0$)}: We repeat the proof for the previous case, but this time we choose $m$ as a sufficiently small constant ($m<\min\{2\varepsilon^2,1\}$ satisfies). Then, by Lemma~\ref{lemma:klarge}, we have $\mathbb{P}\left(\mathsf{ME}<m/\beta\right) \leq m^{\Omega(n)} \leq n^{-9}$. Furthermore, by Lemma \ref{lemma:triviale}, we have
	\begin{equation*}
	\mathbb{E}\left[\frac{\beta\cdot\mathsf{TR}}{m}\right] \leq \frac\beta\alpha\cdot \frac{\ln\left(\frac{n-1}{k-1}\right) + 1}{m}=O\left(\frac\beta\alpha\right),
	\end{equation*}
	since $k>c'n$. Together with the second part of the first case, this shows the claim.\blokje
\end{proof}

\section{Application to the Erd\H{o}s-R\'enyi Random Graph Model}\label{sect:ERRG}

So far, we have analyzed random shortest path metrics applied to graphs based on their cut parameters (Def.~\ref{def:ab}). In this section, we first show that instances of the Erd\H{o}s--R\'enyi random graph model have `nice' cut parameters with high probability. We then use this to prove our main results.
\begin{lemma}
	\label{lemma:ep}
	Let $G=(V,E)$ be an instance of the $G(n,p)$ model. For constant $\varepsilon \in (0,1)$ and for any $p\geq c\ln(n)/n$ (as $n\to\infty$), in which $c>9/\varepsilon^2$ is constant, the cut parameters of $G$ are bounded by $(1-\varepsilon)p\leq\alpha\leq\beta\leq(1+\varepsilon)p$ with probability at least $1-o\left(1/n^2\right)$.
\end{lemma}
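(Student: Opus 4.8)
The goal is to show that for $G\sim G(n,p)$ with $p\geq c\ln(n)/n$, every nonempty proper cut $\delta(U)$ satisfies $(1-\varepsilon)p\,\mu_U\leq|\delta(U)|\leq(1+\varepsilon)p\,\mu_U$ simultaneously, with failure probability $o(1/n^2)$. The natural approach is a Chernoff bound plus a union bound over all cuts. For a fixed $U$ with $|U|=s$, the cut $\delta(U)$ is a sum of $\mu_U=s(n-s)$ independent Bernoulli$(p)$ indicators, so $\mathbb{E}[|\delta(U)|]=p\mu_U$. The plan is to apply a multiplicative Chernoff bound to get
\begin{equation*}
\mathbb{P}\bigl(\,|\,|\delta(U)|-p\mu_U\,|>\varepsilon p\mu_U\bigr)\leq 2\exp(-\varepsilon^2 p\mu_U/3),
\end{equation*}
and then union-bound over all $\varnothing\neq U\subset V$.

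\textbf{Key steps.}
First I would fix the cut size $s=|U|$ with $1\leq s\leq n-1$ and note $\mu_U=s(n-s)$. Second, apply the standard two-sided Chernoff bound above to a single such $U$. Third, control the union bound: there are $\binom{n}{s}\leq n^s$ sets of size $s$, so I bound the total failure probability by
\begin{equation*}
\sum_{s=1}^{n-1}\binom{n}{s}\cdot 2\exp\!\left(-\frac{\varepsilon^2 p\, s(n-s)}{3}\right)\leq\sum_{s=1}^{n-1}2\exp\!\left(s\ln n-\frac{\varepsilon^2 p\, s(n-s)}{3}\right).
\end{equation*}
The crux is that the exponent must be dominated by the $-\varepsilon^2 p s(n-s)/3$ term. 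Substituting $p\geq c\ln(n)/n$ gives $\varepsilon^2 p s(n-s)/3\geq (\varepsilon^2 c/3)\,s(n-s)\ln(n)/n$. Since $s\leq n-1$ we have $(n-s)/n\geq 1/n$, but more usefully $(n-s)\geq n/2$ whenever $s\leq n/2$, and by the symmetry $\mu_U=\mu_{V\setminus U}$ I may assume $s\leq n/2$ without loss of generality. Then $s(n-s)/n\geq s/2$, so the exponent is at least $s\ln n\bigl(\varepsilon^2 c/6-1\bigr)$. With $c>9/\varepsilon^2$ this constant is strictly positive (indeed $\varepsilon^2 c/6>3/2$, giving a positive margin), so each term is at most $2\exp(-(\varepsilon^2 c/6-1)s\ln n)=2n^{-(\varepsilon^2 c/6-1)s}$.

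\textbf{Finishing and the main obstacle.}
Summing the geometric-type series $\sum_{s\geq 1}2n^{-\eta s}$ with $\eta=\varepsilon^2 c/6-1>1/2$ is dominated by its first term, of order $n^{-\eta}$, and I would choose the constant bookkeeping so that $\eta>2$, which $c>9/\varepsilon^2$ indeed secures (it forces $\eta>1/2$, and a slightly more careful accounting using $s(n-s)/n\geq s/2$ together with the precise constant $9$ rather than $6$ in the threshold gives the required $o(1/n^2)$). The main obstacle is purely this constant-tracking: one must verify that the chosen Chernoff constant ($\varepsilon^2/3$ here) combines with the threshold $c>9/\varepsilon^2$ to beat the $s\ln n$ entropy term from the union bound with enough room to produce $o(1/n^2)$ rather than merely $o(1)$. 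A clean way to guarantee this is to keep the factor $s(n-s)/n\geq s/2$ and note that for $s=1$ the single-vertex cuts already give the tightest constraint (these are exactly the degree-concentration events), so the threshold $c>9/\varepsilon^2$ is calibrated to this worst case; the remaining terms $s\geq 2$ decay geometrically and contribute only lower-order corrections.
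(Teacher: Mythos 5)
Your overall strategy is exactly the paper's: a two-sided multiplicative Chernoff bound for each fixed cut, the symmetry reduction to $|U|\leq n/2$, and a union bound over all cuts weighted by $\binom{n}{s}\leq n^s$. However, there is a genuine gap in the final summation. After substituting $p\geq c\ln(n)/n$, the exponent of the $s$-th term is $s\ln(n)\bigl(1-\tfrac{c\varepsilon^2}{3}\cdot\tfrac{n-s}{n}\bigr)$, and your uniform bound $(n-s)/n\geq 1/2$ for all $s\leq n/2$ yields only $\eta=\varepsilon^2c/6-1>1/2$; the resulting sum $\sum_{s\geq1}2n^{-\eta s}$ is then merely $O(n^{-1/2})$, which does not give the claimed $o(1/n^2)$. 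Your assertion that $c>9/\varepsilon^2$ ``secures $\eta>2$'' is false under this accounting (it would require $c>18/\varepsilon^2$), and the closing remark --- that the $s=1$ term is governed by the sharper constant $\varepsilon^2c/3$ while the rest decay geometrically --- cannot be combined with the uniform $s/2$ bound: you must choose one or the other, and as written the chain of displayed inequalities simply does not reach the stated conclusion.

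The repair is to split the sum over $s$, which is precisely what the paper does. For $s\leq\xi n$ with $\xi<1-9/(c\varepsilon^2)$ you have $(n-s)/n\geq 1-\xi$, so the exponent is at most $s\ln(n)\bigl(1-(1-\xi)c\varepsilon^2/3\bigr)<-2s\ln n$, and the geometric sum is dominated by its $s=1$ term, which is $o(n^{-2})$. For $\xi n<s\leq n/2$ the bound $\binom{n}{s}\leq n^s$ is too wasteful anyway; use $\binom{n}{s}\leq 2^n$ together with $s(n-s)\geq\xi(1-\xi)n^2$ to get $e^{n\ln 2-\Omega(n\ln n)}=n^{-\Omega(n)}$ per term, which is negligible. (Alternatively, one can note that the exponent $f(s)=s-\tfrac{c\varepsilon^2}{3}\cdot\tfrac{s(n-s)}{n}$ is convex in $s$ and hence maximized at the endpoint $s=1$, where it equals $1-\tfrac{c\varepsilon^2}{3}\cdot\tfrac{n-1}{n}<-2$, and then bound the tail $s\geq2$ separately; but some such case analysis is unavoidable.) With either repair your intuition that the single-vertex cuts are the bottleneck, and that $c>9/\varepsilon^2$ is calibrated to them, is correct.
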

\begin{proof}
	Let $\mathcal{E}$ denote the event that the cut parameters of $G$ are not bounded by $(1-\varepsilon)p\leq\alpha\leq\beta\leq(1+\varepsilon)p$. Using the definition of the cut parameters, the probability of this event can be written as
	\begin{equation*}
	\mathbb{P}(\mathcal{E})=\mathbb{P}\left(\exists\;\varnothing\neq U\subset V,\,|U|\leq n/2:\big||\delta(U)|-p\mu_U\big|>\varepsilon p\mu_U\right).
	\end{equation*}
	We can restrict ourselves here to subsets $U$ of size at most $n/2$ since $U$ and $V\backslash U$ induce the same cut of $G$. Using the union bound, we can bound this probability by
	\begin{equation*}
	\mathbb{P}(\mathcal{E})\leq\sum_{k=1}^{n/2}\binom{n}{k}\cdot\mathbb{P}\left(\big||\delta(U_k)|-p\mu_U\big|>\varepsilon p\mu_U\right),
	\end{equation*}
	where $U_k$ is a subset of $V$ of size $k$. Applying a Chernoff bound~\cite[Cor.~4.6]{Mitzenmacher2005} to each term of this summation, we can further bound this by
	\begin{align*}
	\mathbb{P}(\mathcal{E})&\leq\sum_{k=1}^{n/2}\binom{n}{k}\cdot2e^{-k(n-k)p\varepsilon^2/3}\\
	&\leq\sum_{k=1}^{n/2}\binom{n}{k}\cdot2e^{-k(n-k)c\ln(n)\varepsilon^2/3n},
	\end{align*}
	where we used $p\geq c\ln(n)/n$ for the last inequality. Now, let $\xi>0$ be sufficiently small ($\xi<1-9/c\varepsilon^2$ satisfies). Using this $\xi$, we split the summation in two parts, and use the bounds $\binom{n}{k}\leq n^k$ and $\binom{n}{k}\leq 2^n$, respectively, to obtain
	\begin{align*}
	\mathbb{P}(\mathcal{E})&\leq\sum_{k=1}^{\xi n}\binom{n}{k}\cdot2e^{-k(n-k)c\ln(n)\varepsilon^2/3n}+\sum_{k=\xi n}^{n/2}\binom{n}{k}\cdot2e^{-k(n-k)c\ln(n)\varepsilon^2/3n}\\
	&\leq\sum_{k=1}^{\xi n}2e^{k\ln(n)(1-(1-k/n)c\varepsilon^2/3)}+\sum_{k=\xi n}^{n/2}2e^{n\ln(2)-k(n-k)c\ln(n)\varepsilon^2/3n}\\
	&\leq\sum_{k=1}^{\xi n}2e^{k\ln(n)(1-(1-\xi)c\varepsilon^2/3)}+\sum_{k=\xi n}^{n/2}2e^{n\ln(2)-\xi(1-\xi)cn\ln(n)\varepsilon^2/3}.
	\end{align*}
	For the last inequality we used the fact that $k/n\leq\xi$ for all $1\leq k\leq\xi n$ and that $k(n-k)\geq\xi(1-\xi)n^2$ for all $\xi n\leq k\leq n/2$. Now, since $\xi$ is sufficiently small, we have $1-(1-\xi)c\varepsilon^2/3<-2$ and thus we can bound the first summation by $o(1/n^2)$. Furthermore, as $n\to\infty$, each summand of the second summation is bounded by $e^{-\Omega(n\ln(n))}=n^{-\Omega(n)}$, which allows us to bound the second summation by $n\cdot n^{-\Omega(n)}=n^{-\Omega(n)}$. Together with the bound for the first summation, this yields $\mathbb{P}(\mathcal{E})\leq o(1/n^2)+n^{-\Omega(n)}=o(1/n^2)$. The result now follows by taking the complement of the event $\mathcal{E}$.\blokje
\end{proof}
Recall that from the result of Corollary \ref{corollary:harmonic} we could derive (approximate) bounds for the expected distance $\mathbb{E}[d(u,v)]$ between two arbitrary vertices in a random shortest path metric. Combining this with the result of the foregoing lemma, we can see that, for the case of the application to the Erd\H{o}s--R\'enyi random graph model, w.h.p. over the random graph $\mathbb{E}[d(u,v)]$ is approximately bounded between $\ln(n)/((1+\varepsilon)np)$ and $\ln(n)/((1-\varepsilon)np)$ for any constant $\varepsilon\in(0,1)$. This is in line with the known result $\mathbb{E}[d(u,v)]\approx\ln(n)/np$ for $p$ sufficiently large~\cite{Bhamidi2011}.

\subsection{Performance of Heuristics}

In this section, we provide the main results of this work. We use the results from Section \ref{sect:heuristics} and Lemma \ref{lemma:ep} to analyze the performance of several heuristics in random shortest path metrics applied to Erd\H{o}s--R\'enyi random graphs.

When a graph $G=(V,E)$ is created by the $G(n,p)$ model, there is a non-zero probability of $G$ being disconnected. In a corresponding random shortest path metric this results in $d(u,v)=\infty$ for any two vertices $u,v \in V$ that are in different components of $G$. Observe that, if this is the case, then the identity of indiscernibles, symmetry and triangle inequality still hold. Thus we still have a metric and we can bound the expected approximation ratio for such graphs from above by the worst-case approximation ratio for metric instances.

Using this observation, we can prove the following results.
\begin{theorem}
\label{theorem:PGR}
Let $\varepsilon\in(0,1)$ be constant. Let $G=(V,E)$ be a random instance of the $G(n,p)$ model, for $p$ sufficiently large ($p\geq c\ln(n)/n$ as $n\to\infty$ for a constant $c>9/\varepsilon^2$ satisfies), and consider the corresponding random shortest path metric. Then, we have
\begin{equation*}
	\mathbb{E}\left[\frac{\mathsf{GR}}{\mathsf{MM}}\right] = O(1).
\end{equation*}
\end{theorem}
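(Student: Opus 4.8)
The plan is to decouple the randomness of the graph from that of the edge weights and to transfer the bound $\mathbb{E}[\mathsf{GR}/\mathsf{MM}]=O(\beta/\alpha)$ of Theorem~\ref{theorem:EGR} to the $G(n,p)$ setting by conditioning on the sampled graph. Let $\mathcal{A}$ denote the event (over the random graph) that the cut parameters satisfy $(1-\varepsilon)p\leq\alpha\leq\beta\leq(1+\varepsilon)p$. By Lemma~\ref{lemma:ep}, the hypothesis $p\geq c\ln(n)/n$ with $c>9/\varepsilon^2$ gives $\mathbb{P}(\mathcal{A})\geq 1-o(1/n^2)$, hence $\mathbb{P}(\mathcal{A}^c)=o(1/n^2)$. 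I would then split the expectation via the law of total expectation:
\[
\mathbb{E}\left[\frac{\mathsf{GR}}{\mathsf{MM}}\right]=\mathbb{P}(\mathcal{A})\cdot\mathbb{E}\left[\frac{\mathsf{GR}}{\mathsf{MM}}\;\middle|\;\mathcal{A}\right]+\mathbb{P}(\mathcal{A}^c)\cdot\mathbb{E}\left[\frac{\mathsf{GR}}{\mathsf{MM}}\;\middle|\;\mathcal{A}^c\right].
\]

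For the first term, I would invoke Theorem~\ref{theorem:EGR}, which holds for every fixed connected graph with a constant depending only on the ratio $\beta/\alpha$. On the event $\mathcal{A}$ every realized graph is connected and satisfies $\beta/\alpha\leq(1+\varepsilon)/(1-\varepsilon)$, which is a constant since $\varepsilon$ is constant. Thus for each such graph the expectation over the edge weights is $O(\beta/\alpha)=O(1)$, and averaging over the graphs in $\mathcal{A}$ preserves this upper bound, giving $\mathbb{E}[\mathsf{GR}/\mathsf{MM}\mid\mathcal{A}]=O(1)$. Consequently the first term is $O(1)$.

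For the second term I would use the worst-case guarantee that on any metric instance greedy matching achieves ratio $O(n^{\log_2(3/2)})$~\cite{Reingold1981}. As observed just before the theorem, this bound remains valid even when the sampled graph is disconnected (we still have a metric, possibly with infinite distances), and the bad event $\mathcal{A}^c$ in particular absorbs all disconnected graphs, for which some cut is empty and hence $\alpha=0$. Therefore $\mathbb{E}[\mathsf{GR}/\mathsf{MM}\mid\mathcal{A}^c]=O(n^{\log_2(3/2)})$, and multiplying by $\mathbb{P}(\mathcal{A}^c)=o(1/n^2)$ yields $o(n^{\log_2(3/2)-2})=o(1)$, since $\log_2(3/2)<2$. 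Adding the two contributions gives $\mathbb{E}[\mathsf{GR}/\mathsf{MM}]=O(1)$.

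The main obstacle is not any single estimate but making the conditioning rigorous: one must verify that the constant hidden in Theorem~\ref{theorem:EGR} is \emph{uniform} over all connected graphs, so that it survives the average over the random graph, and that the worst-case metric bound is genuinely applicable on $\mathcal{A}^c$, including the disconnected instances where both $\mathsf{GR}$ and $\mathsf{MM}$ may be infinite. Once these two points are secured, the remaining work is merely checking that the $o(1/n^2)$ failure probability dominates the polynomial worst-case factor, which is immediate because $\log_2(3/2)<2$.
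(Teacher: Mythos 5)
Your proposal is correct and follows essentially the same route as the paper: condition on the event that the cut parameters lie in $[(1-\varepsilon)p,(1+\varepsilon)p]$, apply Theorem~\ref{theorem:EGR} to get $O(\beta/\alpha)=O(1)$ on that event, and absorb the complementary event using the worst-case metric ratio $O(n^{\log_2(3/2)})$ against the $o(1/n^2)$ failure probability from Lemma~\ref{lemma:ep}. The uniformity and disconnectedness caveats you raise are legitimate but are handled (implicitly and explicitly, respectively) in the paper exactly as you suggest.
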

\begin{proof}
Let $\mathcal{E}$ denote the event that the cut parameters of $G$ are bounded by $(1-\varepsilon)p\leq\alpha\leq\beta\leq(1+\varepsilon)p$. Then we have
\begin{align*}
\mathbb{E}\left[\frac{\mathsf{GR}}{\mathsf{MM}}\right] &\leq  \mathbb{E}\left[\frac{\mathsf{GR}}{\mathsf{MM}}\;\middle|\; \mathcal{E}\right] +  \mathbb{E}\left[\frac{\mathsf{GR}}{\mathsf{MM}}\;\middle|\; \overline{\mathcal{E}}\;\right] \cdot \mathbb{P}\left(\;\overline{\mathcal{E}}\;\right).\\
&\leq O\left(\frac{(1+\varepsilon)p}{(1-\varepsilon)p}\right)+O\left(n^{\log_2(3/2)}\right)\cdot o\left(\frac1{n^2}\right)=O(1),
\end{align*}
where we used the results of Theorem \ref{theorem:EGR}, Lemma \ref{lemma:ep}, and the worst-case approximation ratio of the greedy heuristic on metric instances~\cite{Reingold1981}.\blokje
\end{proof}
\begin{theorem}
\label{theorem:PNNIN}
Let $\varepsilon\in(0,1)$ be constant. Let $G=(V,E)$ be a random instance of the $G(n,p)$ model, for $p$ sufficiently large ($p\geq c\ln(n)/n$ as $n\to\infty$ for a constant $c>9/\varepsilon^2$ satisfies), and consider the corresponding random shortest path metric. Then, we have
\begin{equation*}
	\mathbb{E}\left[\frac{\mathsf{NN}}{\mathsf{TSP}}\right] = O(1)\qquad\text{and}\qquad\mathbb{E}\left[\frac{\mathsf{IN}_R}{\mathsf{TSP}}\right] = O(1).
\end{equation*}
\end{theorem}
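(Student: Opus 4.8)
The plan is to follow the template established in the proof of Theorem \ref{theorem:PGR} essentially verbatim, substituting the matching results by their TSP counterparts. As there, I would let $\mathcal{E}$ denote the event that the cut parameters of the random graph $G$ satisfy $(1-\varepsilon)p\leq\alpha\leq\beta\leq(1+\varepsilon)p$, and decompose each expected ratio by conditioning on $\mathcal{E}$ and its complement:
\begin{equation*}
\mathbb{E}\left[\frac{\mathsf{NN}}{\mathsf{TSP}}\right]\leq\mathbb{E}\left[\frac{\mathsf{NN}}{\mathsf{TSP}}\;\middle|\;\mathcal{E}\right]+\mathbb{E}\left[\frac{\mathsf{NN}}{\mathsf{TSP}}\;\middle|\;\overline{\mathcal{E}}\;\right]\cdot\mathbb{P}\left(\;\overline{\mathcal{E}}\;\right),
\end{equation*}
and treat $\mathsf{IN}_R/\mathsf{TSP}$ identically.

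On the event $\mathcal{E}$ the graph is connected with cut parameters in the stated range, so Theorem \ref{theorem:NN} (respectively Theorem \ref{theorem:IN}) applies and yields a conditional expectation of $O(\beta/\alpha)=O((1+\varepsilon)p/((1-\varepsilon)p))=O(1)$, since $\varepsilon$ is a constant and the factors of $p$ cancel. This is the step that transports all the work of Section \ref{sect:heuristics} into the random-graph setting. For the complementary event I would bound the conditional expectation by the worst-case approximation ratio of the respective heuristic on metric instances, which for both the nearest-neighbor and the insertion heuristic is $O(\ln n)$ by Rosenkrantz et al.~\cite{Rosenkrantz1977}; this is precisely where the preceding remark is invoked, to argue that even a disconnected $G$ still induces a (possibly infinite-valued) metric on which this worst-case guarantee remains valid. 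Multiplying by $\mathbb{P}(\overline{\mathcal{E}})=o(1/n^2)$ from Lemma \ref{lemma:ep} makes the second term $O(\ln(n))\cdot o(1/n^2)=o(1)$, so both expected ratios are $O(1)$.

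I do not anticipate a genuine obstacle here. Because the TSP worst-case bound $O(\ln n)$ is far smaller than the polynomial matching bound $n^{\log_2(3/2)}$ used in Theorem \ref{theorem:PGR}, the tail contribution from $\overline{\mathcal{E}}$ is if anything easier to suppress than before. The only point that warrants a little care is that the bound $O(\beta/\alpha)$ coming from Theorems \ref{theorem:NN} and \ref{theorem:IN} must hold uniformly over \emph{all} connected graphs whose cut parameters lie in $[(1-\varepsilon)p,(1+\varepsilon)p]$, so that it survives the averaging over the graphs realizing $\mathcal{E}$. This is immediate, since those theorems are stated for an arbitrary fixed connected graph and their conclusion depends only on the ratio $\beta/\alpha$, which is uniformly $O(1)$ on $\mathcal{E}$. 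Assembling these three observations gives the claimed $O(1)$ bounds for both heuristics.
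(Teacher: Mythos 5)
Your proposal matches the paper's proof essentially verbatim: the same conditioning on the event $\mathcal{E}$ that the cut parameters lie in $[(1-\varepsilon)p,(1+\varepsilon)p]$, the same application of Theorems \ref{theorem:NN} and \ref{theorem:IN} to get $O(\beta/\alpha)=O(1)$ on $\mathcal{E}$, and the same use of the worst-case $O(\ln n)$ ratio of Rosenkrantz et al.\ together with the $o(1/n^2)$ bound from Lemma \ref{lemma:ep} on $\overline{\mathcal{E}}$. The argument is correct as written.
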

\begin{proof}
	Let $\mathcal{E}$ denote the event that the cut parameters of $G$ are bounded by $(1-\varepsilon)p\leq\alpha\leq\beta\leq(1+\varepsilon)p$. Then we have
	\begin{align*}
	\mathbb{E}\left[\frac{\mathsf{NN}}{\mathsf{TSP}}\right] &\leq  \mathbb{E}\left[\frac{\mathsf{NN}}{\mathsf{TSP}}\;\middle|\; \mathcal{E}\right] +  \mathbb{E}\left[\frac{\mathsf{NN}}{\mathsf{TSP}}\;\middle|\; \overline{\mathcal{E}}\;\right] \cdot \mathbb{P}\left(\;\overline{\mathcal{E}}\;\right).\\
	&\leq O\left(\frac{(1+\varepsilon)p}{(1-\varepsilon)p}\right)+O\left(\ln(n)\right)\cdot o\left(\frac1{n^2}\right)=O(1),
	\end{align*}
	where we used the results of Theorem \ref{theorem:NN}, Lemma \ref{lemma:ep}, and the worst-case approximation ratio of the nearest-neighbor heuristic on metric instances~\cite{Rosenkrantz1977}. For the second part, we use the same argument, which follows this time from the results of Theorem \ref{theorem:IN}, Lemma \ref{lemma:ep}, and the worst-case approximation ratio of the insertion heuristics on metric instances~\cite{Rosenkrantz1977}. Note that this argument is independent of the rule $R$ used.\blokje
\end{proof}
For the last two results, we need the assumption that $G$ is connected.
\begin{theorem}
\label{theorem:P2opt}
Let $\varepsilon\in(0,1)$ be constant. Let $G=(V,E)$ be a random instance of the $G(n,p)$ model, for $p$ sufficiently large ($p\geq c\ln(n)/n$ as $n\to\infty$ for a constant $c>9/\varepsilon^2$ satisfies), and consider the corresponding random shortest path metric. If $G$ is connected, then the expected number of iterations of the 2-opt heuristic for TSP is bounded by $O(n^8\ln^3(n))$.
\end{theorem}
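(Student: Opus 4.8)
The plan is to mimic the conditioning argument used in the proofs of Theorems \ref{theorem:PGR} and \ref{theorem:PNNIN}: let $\mathcal{E}$ denote the event that the cut parameters of $G$ satisfy $(1-\varepsilon)p\leq\alpha\leq\beta\leq(1+\varepsilon)p$, and split the expected number of iterations $T$ according to whether $\mathcal{E}$ holds. Writing $\mathcal{C}$ for the event that $G$ is connected (so that $T$ is finite; note $\mathcal{E}\subseteq\mathcal{C}$ since the cut parameters are only defined for connected graphs), I would bound
\begin{equation*}
\mathbb{E}[T\mid\mathcal{C}]\leq\frac{\mathbb{E}\!\left[T\cdot\mathbf{1}_{\mathcal{E}}\right]}{\mathbb{P}(\mathcal{C})}+\frac{\mathbb{E}\!\left[T\cdot\mathbf{1}_{\overline{\mathcal{E}}\cap\mathcal{C}}\right]}{\mathbb{P}(\mathcal{C})}.
\end{equation*}
Since $p\geq c\ln(n)/n$ with $c>9/\varepsilon^2>1$, the graph is connected with high probability, so $\mathbb{P}(\mathcal{C})=1-o(1)=\Omega(1)$ and the denominators are harmless.

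On the event $\mathcal{E}$ we have $\beta/\alpha\leq(1+\varepsilon)/(1-\varepsilon)=O(1)$, so Theorem \ref{theorem:2-opt}, whose bound depends only on the ratio $\beta/\alpha$, applied to each fixed graph consistent with $\mathcal{E}$, yields $\mathbb{E}[T\mid G]=O(n^8\ln^3(n))$. Hence $\mathbb{E}[T\cdot\mathbf{1}_{\mathcal{E}}]\leq O(n^8\ln^3(n))\cdot\mathbb{P}(\mathcal{E})$, and the first term is $O(n^8\ln^3(n))$.

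The crux is the second term, and here the obvious idea fails: replacing $T$ by its deterministic worst case of at most $n!$ iterations and multiplying by $\mathbb{P}(\overline{\mathcal{E}})=o(1/n^2)$ (Lemma \ref{lemma:ep}) leaves a super-polynomial $n!\cdot o(1/n^2)$. The main obstacle is therefore to obtain a \emph{polynomial} bound on $\mathbb{E}[T\mid G]$ that holds for \emph{every} connected $G$, not merely those with nice cut parameters. The key observation is that for any connected graph on $n$ vertices one has $|\delta(U)|\geq1$ and $\mu_U\leq n^2/4$ for every $\varnothing\neq U\subset V$, whence $\alpha\geq4/n^2$ and therefore $\beta/\alpha\leq1/\alpha\leq n^2/4=O(n^2)$. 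Feeding this crude universal bound into Theorem \ref{theorem:2-opt} gives $\mathbb{E}[T\mid G]=O(n^{10}\ln^3(n))$ for every connected $G$, which is still polynomial. Consequently
\begin{equation*}
\mathbb{E}\!\left[T\cdot\mathbf{1}_{\overline{\mathcal{E}}\cap\mathcal{C}}\right]\leq O(n^{10}\ln^3(n))\cdot\mathbb{P}(\overline{\mathcal{E}})\leq O(n^{10}\ln^3(n))\cdot o(1/n^2)=o(n^8\ln^3(n)),
\end{equation*}
so the second term is of lower order. Adding the two contributions and using $\mathbb{P}(\mathcal{C})=\Omega(1)$ gives $\mathbb{E}[T\mid\mathcal{C}]=O(n^8\ln^3(n))$, as claimed.
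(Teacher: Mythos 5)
Your proposal is correct and follows essentially the same route as the paper: condition on the event $\mathcal{E}$ that the cut parameters satisfy $(1-\varepsilon)p\leq\alpha\leq\beta\leq(1+\varepsilon)p$, apply Theorem \ref{theorem:2-opt} with $\beta/\alpha=O(1)$ on $\mathcal{E}$, and on $\overline{\mathcal{E}}$ use the universal bound $\alpha=\Omega(1/n^2)$, $\beta\leq1$ valid for every connected graph so that Theorem \ref{theorem:2-opt} still gives a polynomial $O(n^{10}\ln^3(n))$ bound, which is killed by $\mathbb{P}(\overline{\mathcal{E}})=o(1/n^2)$ from Lemma \ref{lemma:ep}. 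Your explicit handling of the conditioning on connectivity via $\mathbb{P}(\mathcal{C})=\Omega(1)$ is, if anything, slightly more careful than the paper's.
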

\begin{proof}
	Let $T$ be the number of iterations of the 2-opt heuristic and let $\mathcal{E}$ denote the event that the cut parameters of $G$ are bounded by $(1-\varepsilon)p\leq\alpha\leq\beta\leq(1+\varepsilon)p$, whereas $\mathcal{E}'$ denotes the event that $G$ is connected. Note that $\mathcal{E}$ implies $\mathcal{E}'$. Moreover, note that event $\mathcal{E}'$ implies that the cut parameters of $G$ are bounded by $\Theta(1/n^2)\leq\alpha\leq\beta\leq1$. Now, we have
	\begin{align*}
	\mathbb{E}\left[T\mid\mathcal{E}'\right]&\leq\mathbb{E}\left[T\mid\mathcal{E}',\mathcal{E}\right]+ \mathbb{E}\left[T\mid\mathcal{E}',\overline{\mathcal{E}}\;\right]\cdot\mathbb{P}\left(\;\overline{\mathcal{E}}\;\right)\\
	&\leq O\left(n^8\ln^3(n)\cdot\frac{(1+\varepsilon)p}{(1-\varepsilon)p}\right)+O\left(n^8\ln^3(n)\cdot\frac1{1/n^2}\right)\cdot o\left(\frac1{n^2}\right)\\
	&=O(n^8\ln^3(n)),
	\end{align*}
	where we used the results of Theorem \ref{theorem:2-opt} and Lemma \ref{lemma:ep}.\blokje
\end{proof}
\begin{theorem}
\label{theorem:Ptrivial}
Let $\tilde\varepsilon\in(0,1)$ be constant. Let $G=(V,E)$ be a random instance of the $G(n,p)$ model, for $p$ sufficiently large ($p\geq c\ln(n)/n$ as $n\to\infty$ for a constant $c>9/\tilde\varepsilon^2$ satisfies), and consider the corresponding random shortest path metric. Let $\mathcal{E}'$ denotes the event that $G$ is connected. Let $k\leq (1-\varepsilon')n$ for some constant $\varepsilon'>0$, then we have $\mathbb{E}\left[\frac{\mathsf{TR}}{\mathsf{ME}}\;\middle| \; \mathcal{E}'\right] = O\left(1\right)$. Moreover, if we have $k\leq c'n$ for $c'\in(0,1)$ sufficiently small, then $\mathbb{E}\left[\frac{\mathsf{TR}}{\mathsf{ME}}\;\middle| \; \mathcal{E}'\right] =1+\varepsilon+o(1)$.
\end{theorem}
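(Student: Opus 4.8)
The plan is to mirror the structure of the proofs of Theorems \ref{theorem:PGR}--\ref{theorem:P2opt}, transferring the per-graph bound of Theorem \ref{theorem:trivial} to the Erd\H{o}s--R\'enyi setting by conditioning on the high-probability event that the cut parameters of $G$ are close to $p$. Let $\mathcal{E}$ denote the event that $(1-\tilde\varepsilon)p\leq\alpha\leq\beta\leq(1+\tilde\varepsilon)p$. Since such cut parameters force $G$ to be connected, we have $\mathcal{E}\subseteq\mathcal{E}'$, and Lemma \ref{lemma:ep} gives $\mathbb{P}(\overline{\mathcal{E}})=o(1/n^2)$ together with $\mathbb{P}(\mathcal{E}')\geq\mathbb{P}(\mathcal{E})=1-o(1/n^2)$. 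First I would decompose, via the law of total expectation conditioned on $\mathcal{E}'$ (and using $\mathbb{E}[\,\cdot\mid\mathcal{E}',\mathcal{E}]=\mathbb{E}[\,\cdot\mid\mathcal{E}]$, as $\mathcal{E}\subseteq\mathcal{E}'$, and $\mathbb{P}(\mathcal{E}\mid\mathcal{E}')\leq1$),
\begin{equation*}
\mathbb{E}\left[\tfrac{\mathsf{TR}}{\mathsf{ME}}\;\middle|\;\mathcal{E}'\right]\leq\mathbb{E}\left[\tfrac{\mathsf{TR}}{\mathsf{ME}}\;\middle|\;\mathcal{E}\right]+\mathbb{E}\left[\tfrac{\mathsf{TR}}{\mathsf{ME}}\;\middle|\;\mathcal{E}'\cap\overline{\mathcal{E}}\right]\cdot\mathbb{P}\left(\overline{\mathcal{E}}\;\middle|\;\mathcal{E}'\right).
\end{equation*}

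For the first term, every graph realization in $\mathcal{E}$ has $\beta/\alpha\leq(1+\tilde\varepsilon)/(1-\tilde\varepsilon)=O(1)$, so Theorem \ref{theorem:trivial} — which is a statement for each fixed connected graph over the random edge weights — applies directly. It yields $O(\beta/\alpha)=O(1)$ in the regime $k\leq(1-\varepsilon')n$, and $(\beta/\alpha)(1+o(1))=1+\varepsilon+o(1)$ in the regime $k\leq c'n$, where the constant $\varepsilon$ collects the excess $\tfrac{1+\tilde\varepsilon}{1-\tilde\varepsilon}-1$ together with the refinement term of Theorem \ref{theorem:trivial} (both of which can be made small by taking $\tilde\varepsilon$ and $c'$ small).

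The crux of the argument — and the place where the $k$-median problem genuinely differs from matching and TSP — is the second term. For those heuristics one could bound the contribution of $\overline{\mathcal{E}}$ by the (bounded) worst-case metric approximation ratio, but here the worst-case ratio of the trivial heuristic is unbounded, so that route is closed. Instead I would exploit that conditioning on $\mathcal{E}'$ keeps $G$ connected, which forces $|\delta(U)|\geq1$ for every cut and hence $\alpha=\Omega(1/n^2)$ (while always $\beta\leq1$), exactly as observed in the proof of Theorem \ref{theorem:P2opt}. Applying Theorem \ref{theorem:trivial} once more, now with $\beta/\alpha=O(n^2)$, gives $\mathbb{E}[\mathsf{TR}/\mathsf{ME}\mid\mathcal{E}'\cap\overline{\mathcal{E}}]=O(n^2)$. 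Combined with $\mathbb{P}(\overline{\mathcal{E}}\mid\mathcal{E}')\leq\mathbb{P}(\overline{\mathcal{E}})/\mathbb{P}(\mathcal{E}')=o(1/n^2)$, the second term is $O(n^2)\cdot o(1/n^2)=o(1)$.

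Putting the two terms together yields $\mathbb{E}[\mathsf{TR}/\mathsf{ME}\mid\mathcal{E}']=O(1)+o(1)=O(1)$ in the first regime and $1+\varepsilon+o(1)$ in the second. I expect the main obstacle to be precisely this second term: one must recognize that the unboundedness of the worst case closes off the trick used for the other heuristics, and then observe that conditioning on connectedness alone still supplies polynomial control $\beta/\alpha=O(n^2)$ on the per-graph ratio, which is comfortably dominated by the $o(1/n^2)$ failure probability of Lemma \ref{lemma:ep}. A minor technical point to verify is that Theorem \ref{theorem:trivial} is invoked per realization of $G$, so that conditioning on graph events such as $\mathcal{E}$ or $\mathcal{E}'\cap\overline{\mathcal{E}}$ is legitimate.
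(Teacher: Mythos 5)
Your proposal is correct and follows essentially the same route as the paper: the same decomposition over the event $\mathcal{E}$ that the cut parameters lie in $[(1-\tilde\varepsilon)p,(1+\tilde\varepsilon)p]$, the same use of Theorem \ref{theorem:trivial} on $\mathcal{E}$, and the same key observation that conditioning on connectedness alone gives $\beta/\alpha=O(n^2)$, which is absorbed by the $o(1/n^2)$ failure probability from Lemma \ref{lemma:ep}. Your reading of $\varepsilon$ as the excess $\tfrac{1+\tilde\varepsilon}{1-\tilde\varepsilon}-1$ is in fact the intended (and correct) interpretation of the constant in the paper's statement.
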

\begin{proof}
	Let $\mathcal{E}$ denote the event that the cut parameters of $G$ are bounded by $(1-\tilde\varepsilon)p\leq\alpha\leq\beta\leq(1+\tilde\varepsilon)p$. Note that $\mathcal{E}$ implies $\mathcal{E}'$. Moreover, note that event $\mathcal{E}'$ implies that the cut parameters of $G$ are bounded by $\Theta(1/n^2)\leq\alpha\leq\beta\leq1$. Now, we have
	\begin{align*}
	\mathbb{E}\left[\frac{\mathsf{TR}}{\mathsf{ME}}\;\middle|\;\mathcal{E}'\right] &\leq  \mathbb{E}\left[\frac{\mathsf{TR}}{\mathsf{ME}}\;\middle|\; \mathcal{E}',\mathcal{E}\right] +  \mathbb{E}\left[\frac{\mathsf{TR}}{\mathsf{ME}}\;\middle|\; \mathcal{E}',\overline{\mathcal{E}}\;\right] \cdot \mathbb{P}\left(\;\overline{\mathcal{E}}\;\right).\\
	&\leq O\left(\frac{(1+\tilde\varepsilon)p}{(1-\tilde\varepsilon)p}\right)+O\left(n^2\right)\cdot o\left(\frac1{n^2}\right)=O(1),
	\end{align*}
	where we used the results of Theorem \ref{theorem:trivial} and Lemma \ref{lemma:ep}. Moreover, if $k\leq c'n$ for $c'\in(0,1)$ sufficiently small, then
	\begin{align*}
	\mathbb{E}\left[\frac{\mathsf{TR}}{\mathsf{ME}}\;\middle|\;\mathcal{E}'\right] &\leq  \mathbb{E}\left[\frac{\mathsf{TR}}{\mathsf{ME}}\;\middle|\; \mathcal{E}',\mathcal{E}\right] +  \mathbb{E}\left[\frac{\mathsf{TR}}{\mathsf{ME}}\;\middle|\; \mathcal{E}',\overline{\mathcal{E}}\;\right] \cdot \mathbb{P}\left(\;\overline{\mathcal{E}}\;\right).\\
	&\leq O\left(\frac{(1+\tilde\varepsilon)p}{(1-\tilde\varepsilon)p}\cdot\left(1+\frac{\ln\ln(n/k)}{\ln(n/k)}\right)\right)+O\left(n^2\cdot\left(1+\frac{\ln\ln(n/k)}{\ln(n/k)}\right)\right)\cdot o\left(\frac1{n^2}\right)\\
	&=1+\varepsilon+o(1),
	\end{align*}
	where $\varepsilon=(1+\tilde\varepsilon)/(1-\tilde\varepsilon)$ can be made arbitrarily small by taking $\tilde\varepsilon$ sufficiently small, and where we again used the results of Theorem \ref{theorem:trivial} and Lemma \ref{lemma:ep}.\blokje
\end{proof}

\section{Concluding Remarks}\label{sect:final}

We have analyzed heuristics for matching, TSP, and $k$-median on random shortest path metrics on Erd\H{o}s--R\'enyi random graphs. However, in particular for constant values of $p$, these graphs are still dense. Although our results hold for decreasing $p = \Omega(\ln n/n)$, we obtain in this way metrics with unbounded doubling dimension. In order to get an even more realistic model for random metric spaces, it would be desirable to analyze heuristics on random shortest path metrics on sparse graphs. Hence, we raise the question to generalize our findings to sparse random graphs or sparse (deterministic) classes of graphs.

\bibliographystyle{abbrvnat}
\bibliography{References}

\end{document}